\documentclass[letterpaper,journal]{IEEEtran}

\usepackage{amsmath,amsfonts,amssymb,amsthm}
\usepackage{algorithmic}
\usepackage{algorithm}
\usepackage{array}
\usepackage{booktabs}
\usepackage{cite}
\usepackage[dvipsnames,table]{xcolor}
\usepackage{enumitem}
\usepackage{graphicx}
\usepackage{makecell}
\usepackage{multirow}
\usepackage{orcidlink}
\usepackage{placeins}
\usepackage{stfloats}
\usepackage{tabularx}
\usepackage{textcomp}
\usepackage{url}

\graphicspath{{figures/}}

\newtheorem{lemma}{Lemma}
\newtheorem{theorem}{Theorem}
\newtheorem{proposition}{Proposition}
\newtheorem{corollary}{Corollary}

\newtheorem{assumption}{Assumption}
\theoremstyle{definition}

\theoremstyle{plain}

\newcommand{\sg}{\operatorname{sg}}
\newcommand{\anchor}{a}
\newcommand{\bestresult}[1]{{\bfseries\boldmath #1}}

\title{
Federated Semantic Residuals for \\ Multimodal Recommendation
}

\author{
  Zhiwei Li\,\orcidlink{0000-0002-8897-8905},
  Guodong Long\,\orcidlink{0000-0003-3740-9515},
  Chengqi Zhang\,\orcidlink{0000-0001-5715-7154},~\IEEEmembership{Fellow,~IEEE},
  and Jing Jiang\,\orcidlink{0000-0001-5301-7779}%
  \thanks{Zhiwei Li, Guodong Long, and Jing Jiang are with the Australian Artificial Intelligence Institute, Faculty of Engineering and Information Technology, University of Technology Sydney, Sydney, NSW, Australia (e-mail: zhw.li@outlook.com; guodong.long@uts.edu.au; jing.jiang@uts.edu.au).}%
  \thanks{Chengqi Zhang is with the Department of Data Science and Artificial Intelligence, The Hong Kong Polytechnic University, Hong Kong Special Administrative Region, China (e-mail: chengqi.zhang@polyu.edu.hk).}
}

\hypersetup{hidelinks,
pdftitle={Federated Semantic Residuals for Multimodal Recommendation},
pdfauthor={Zhiwei Li, Guodong Long, Chengqi Zhang, Jing Jiang}}
\begin{document}

\maketitle

\begin{abstract}
  Federated Recommendation Systems \mbox{(FedRSs)} learn collaborative scores from interaction histories on device and rank items by them, so a higher score means a higher rank.
  Item text and images add semantic information that can reorder the list.
  Existing Federated Multimodal Recommenders \mbox{(FedMRs)} fuse this content into their models, which can obscure how content reorders the list.
  We therefore propose the \textbf{SE}mantic \textbf{R}esidual \textbf{A}dapter (SERA), which keeps content out of an existing FedRS and adds semantic residual scores to its collaborative scores.
  Content can then lift an item above another only when its semantic residual score exceeds the other's by more than their collaborative score gap.
  This score grows with the item's content similarity to the user's interaction history and is capped at a residual bound.
  We prove that for any content, an item can enter the list only if its collaborative score is within twice this bound of the lowest one in the list.
  A larger bound lets content lift more items, including irrelevant ones, so the bound trades off how far content reaches against accuracy.
  With a validated bound, SERA improves all three FedRSs on all six datasets without adding personalized parameters. It outperforms FedMRs in most comparisons and lifts some FedRSs above centralized baselines.
\end{abstract}

\begin{IEEEkeywords}
  Federated Recommendation System, Multimodal Recommendation System, On-Device Learning
\end{IEEEkeywords}

\section{Introduction}\label{sec:intro}

Federated Recommendation Systems (FedRSs) learn collaborative scores from interaction histories stored on user devices \cite{zhang2025personalized} and rank items by these scores, so an item appears above another if it scores higher.
Observed interactions provide the main supervision \cite{ammad2019federated,koren2009matrix}, and each user or item embedding is learned from the gradients of its own interactions \cite{li2024fedrap}.
As a result, users and items with sparse histories have too few interactions to learn reliable ID-based scores \cite{li2024fedcore,zhang2024ifedrec}.

Item text and images ease this sparsity, because they exist before an item collects many interactions \cite{du2018personalized}.
They add semantic information to interaction histories \cite{zhang2023beyond} and help multimodal recommenders refine collaborative rankings \cite{xu2021multi,zhang2021lattice}.

\begin{figure}[!t]
  \centering
  \includegraphics[width=\columnwidth]{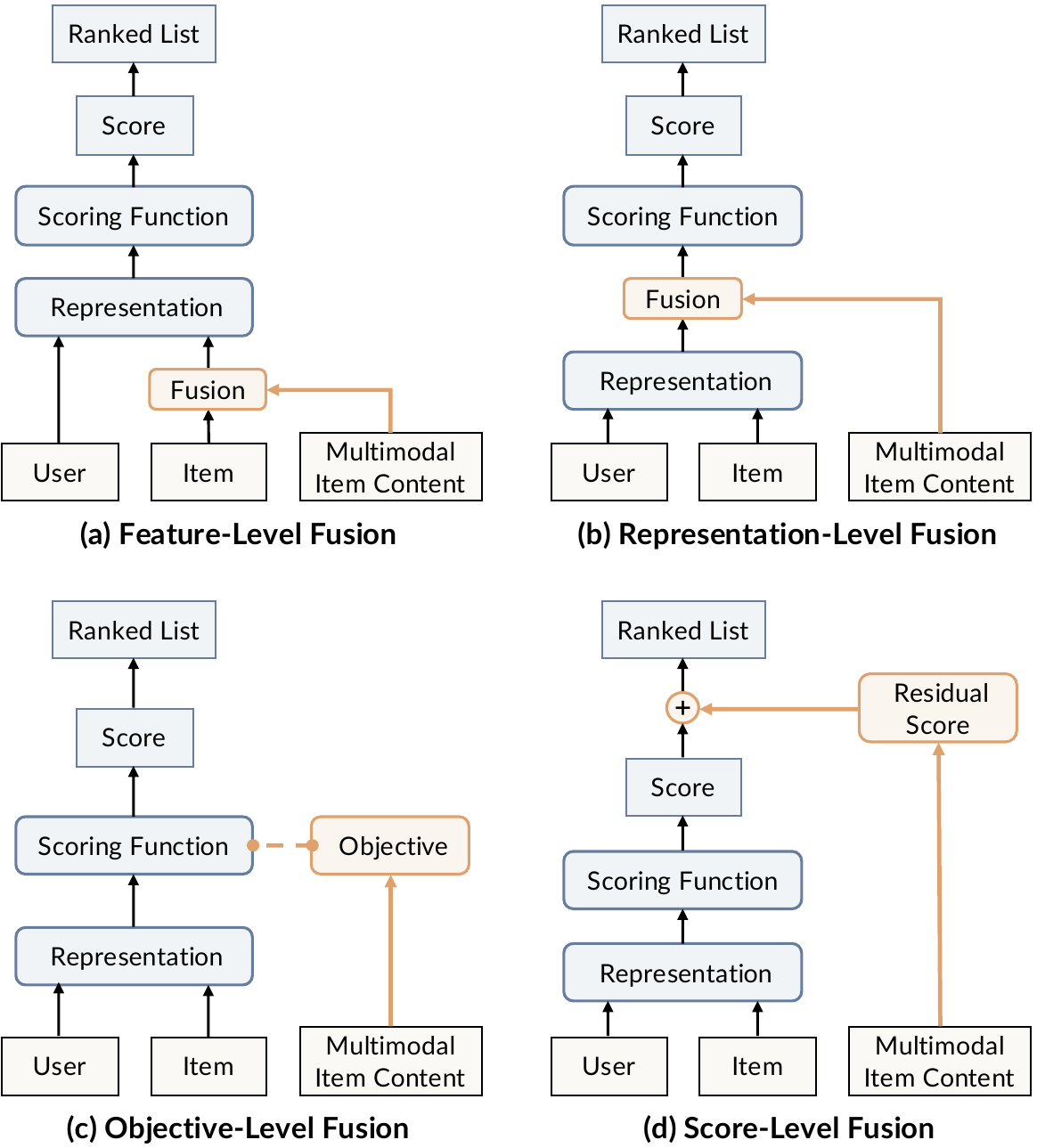}
  \caption{Fusion levels in FedMRs.
  (a) Feature-level fusion combines inputs.
  (b) Representation-level fusion combines representations.
  (c) Objective-level fusion uses content in training objectives.
  (d) Score-level fusion combines scores, including SERA's semantic residual scores.
  Blue and orange show the collaborative and content paths, and dashed links show training signals.}
  \label{fig:fusion-taxonomy}
\end{figure}

Fig.~\ref{fig:fusion-taxonomy} illustrates how Federated Multimodal Recommenders (FedMRs) add this content.
(a)~Feature-level fusion combines content with the raw inputs through early fusion, as in AMMFRS \cite{feng2024robust}.
The model can then learn fine-grained cross-modal interactions, and the FedRS has to be rebuilt and retrained.
(b)~Representation-level fusion merges content into the learned representations through intermediate fusion \cite{zhou2026comprehensive}.
FedVLR \cite{li2024fedvlr} uses it to adapt content to each user with fusion parameters on every client.
(c)~Content also enters training through extra loss terms \cite{zhao2024fusion}, and we treat this as objective-level fusion.
FedMMR \cite{li2024towards} and GFMFR \cite{zhang2026multimodal} add alignment and distillation losses, which alter the training objective of the FedRS and leave inference unchanged.
All three levels build content into the model behind the collaborative scores, so the final ranking mixes the effect of content with the collaborative signal.

Score-level fusion in Fig.~\ref{fig:fusion-taxonomy}(d) instead keeps the collaborative scores and adds a content score to each.
An item then rises above another only when its content score exceeds the other's by more than their collaborative score gap.
Each rank change thus traces to an item pair, and capping these scores limits how far content can reach into the collaborative ranking.

We therefore propose SERA, a federated \textbf{SE}mantic \textbf{R}esidual \textbf{A}dapter that brings score-level late fusion to existing FedRSs.
SERA keeps content out of the inputs, representations, and training objective of a FedRS and adds a semantic residual score to each collaborative score.
Collaborative scores weight each user's history items to form a personalized content query, and each candidate's semantic residual score grows with its content similarity to this query.
A residual budget, scaled by the standard deviation of the user's collaborative scores, caps each semantic residual score at a user-adaptive residual bound.

Combining the pairwise condition with this cap, we prove that, for any content, an item outside the collaborative top-$K$ list can enter the final top-$K$ list only if its collaborative score is within twice the residual bound of the $K$-th item.
The bound can trade off reach against accuracy, since a larger bound lets content lift more items, including some irrelevant ones.

Our contributions span method, theory, and experiments:
\begin{itemize}
\item We propose SERA, a federated adapter that adds semantic residual scores to the collaborative scores of an existing FedRS. SERA uses these scores both to weight each user's history items and to scale the residuals per user.
\item Every rank change under SERA traces to item pairs, and we prove a reach condition that holds for any content.
\item On six datasets, SERA improves all three FedRSs with shared parameters alone, beats existing FedMRs in most comparisons, and shows the predicted trade-off.
\end{itemize}

\section{Related Work}\label{sec:related}

\subsection{Federated Recommendation Systems}
FedRSs learn from interaction histories stored on user devices by aggregating client updates \cite{mcmahan2017communication, pillutla2022partial}.
Federated Collaborative Filtering keeps user factors local and aggregates item-factor gradients to learn shared item representations \cite{ammad2019federated}.
FedNCF extends this design to neural models and aggregates sparse item embeddings separately from dense neural parameters \cite{perifanis2022federated}.
Differences in client interaction distributions motivate personalized models \cite{kairouz2021advances,collins2021fedrep}.
PFedRec retains a local user model and adapts shared item representations to local interactions \cite{zhang2023dual}.
FedRAP decomposes item representations into globally shared and user-specific components \cite{li2024fedrap}, while GPFedRec builds a user-relation graph from client-specific item embeddings for personalized aggregation \cite{zhang2024gpfedrec}.
To accommodate device resources, HeteFedRec transfers knowledge across models with different embedding dimensions through dual-task learning and relation-based distillation \cite{yuan2024hetefedrec}.
These FedRSs provide the collaborative scores that SERA builds on.

\subsection{Multimodal Recommendation}
Multimodal recommenders supplement interaction supervision with item text and images \cite{zhang2023beyond,wei2019mmgcn}.
VBPR projects visual features into a user-specific visual preference term added to collaborative scores \cite{he2016vbpr}, while ACF uses item and component attention to aggregate multimodal interaction histories \cite{chen2017attentive}.
In item-based collaborative filtering, NAIS weights history items by relevance to the candidate \cite{he2018nais}. SERA weights history content by collaborative scores to form one query for all candidates.
MMCN refines item representations through attention-based residuals and hierarchical contrastive learning \cite{wang2025mmcn}.
MICRO learns modality-aware item graphs and combines shared and modality-specific signals through contrastive learning \cite{zhang2023micro}, while FREEDOM freezes content-derived graphs and denoises interaction graphs \cite{zhou2023freedom}.
BM3 combines interaction reconstruction with cross-modal alignment \cite{zhou2023bootstrap}, and MMSSL contrasts collaborative and multimodal views \cite{wei2023mmssl}.
DropoutNet reconstructs cold-item representations from content \cite{volkovs2017dropoutnet}, while CLCRec aligns content and collaborative spaces \cite{wei2021clcrec}.
PCKD reduces preference inconsistencies across distillation projection through a ranking regularization term \cite{zhu2025pckd}.

Content reach also relates to studies that limit changes to existing lists.
Additive corrections to a frozen ranker and score regularization trade accuracy gains against the share of queries whose rankings change \cite{li2020stabilizing}.
CertDR uses randomized smoothing to certify top-$K$ exclusion under word-substitution attacks \cite{wu2022certified}.
Reachability analyses ask whether a user can make an item recommended by changing their own interactions \cite{dean2020recommendations,curmei2021quantifying}.
SERA instead bounds the collaborative score gaps that semantic residual scores can overcome for any content.

\subsection{Federated Multimodal Recommendation}
FedMRs integrate item content under federated resource and privacy constraints \cite{li2026survey}.
Their designs differ in fusion level and in the item representations that clients and servers exchange.
IFedRec learns an attribute network from aggregated item embeddings and aligns its output with local collaborative representations to serve cold-start items \cite{zhang2024ifedrec}.
AMMFRS fuses text, image, identity, and behavior features through client-side attention \cite{feng2024robust}.
FedMMR aligns a server-side multimodal model with lightweight collaborative models on clients \cite{li2024towards}.
FedVLR generates vision-language views on the server and combines them through a client-specific mixture of experts \cite{li2024fedvlr}.
GFMFR uses server-side group-level multimodal representations as distillation targets for local training \cite{zhang2026multimodal}.
SERA instead fuses content at the score level and personalizes its semantic residual scores with local interaction histories.

\section{Preliminaries}\label{sec:prelim}

\subsection{Notation}

Let $\mathcal{U}$ contain $N$ users and let $\mathcal{I}$ contain $M$ items. We use $u\in\mathcal{U}$ for a user and $i\in\mathcal{I}$ for an item.
The candidate set $\mathcal{C}_u\subseteq\mathcal{I}$ contains the items ranked for user $u$, and $M_u=|\mathcal{C}_u|$ is its size.
A FedRS maps each user-item pair $(u,i)$ to a collaborative score $\anchor(u,i)=\anchor(u,i;\theta^{(u)})$.
The parameter $\theta^{(u)}=[\theta_{\mathrm{s}},\theta_u]$ contains shared parameters $\theta_{\mathrm{s}}$ and personalized parameters $\theta_u$.
Client $u$ stores its interaction history $\mathcal{H}_u$ on the device.
The client updates $\theta^{(u)}$ locally and sends the update associated with the shared parameters $\theta_{\mathrm{s}}$ to the server.

\subsection{Federated Recommendation Systems}

For user $u$, let $i^{+}\in\mathcal{H}_u$ be an observed item and $i^{-}$ be a sampled negative item.
Under the Bayesian Personalized Ranking (BPR) loss \cite{rendle2009bpr}, a score function $s$ has client risk:
\begin{equation}
\mathcal{L}_u(s) =
\mathbb{E}_{
  i^{+}\sim \mathcal{H}_u,\; i^{-}\sim \mathcal{D}
}
\!
\left[
  -\log \sigma\!\big( s(u,i^{+}) - s(u,i^{-}) \big)
\right].
\label{eq:clientrisk}
\end{equation}
Here, $\mathcal{D}$ is the negative-item sampling distribution, and $\sigma(\cdot)$ is the logistic function.
The federated objective aggregates the client risks using a nonnegative weight $\alpha_u$ for each user:
\begin{equation}
  \mathcal{L}(s) = \sum_{u} \alpha_u \mathcal{L}_u(s),
  \;
  \text{where } \sum_u \alpha_u = 1.
\label{eq:fedloss}
\end{equation}
SERA trains its semantic residual scores on this aggregated risk.
Each FedRS retains its own training objective and its own rules for updating and sharing model parameters.

\section{Methodology}
\label{sec:method}

SERA in Fig.~\ref{fig:framework} fuses item content with an existing FedRS at the score level.
It learns semantic residual scores from local histories and caps them with a global residual budget.

\begin{figure}[!t]
  \centering
  \includegraphics[width=.9\columnwidth]{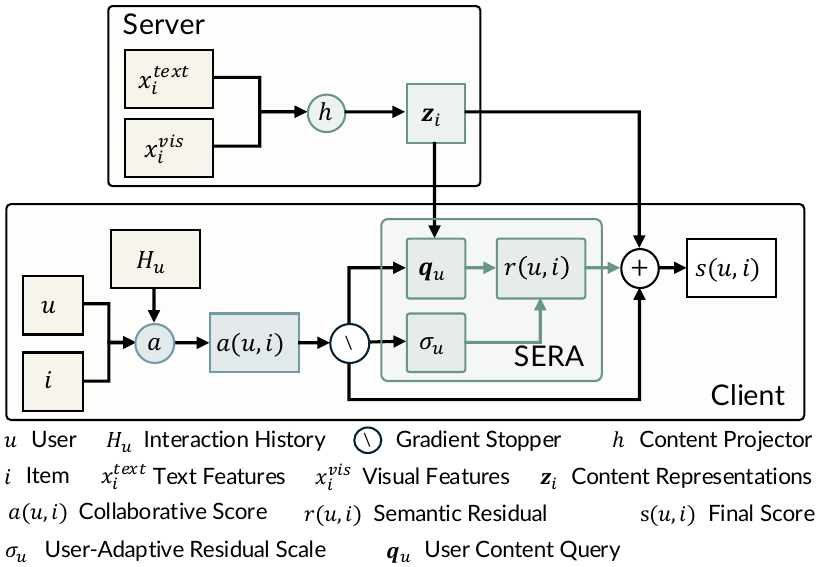}
  \caption{Personalized content-based scoring in SERA.
  A shared projector maps item text and images to content representations.
  Collaborative scores weight history items to form a content query. 
  The standard deviation of these scores across items sets the user-adaptive residual scale.
  SERA compares candidates with this query and adds bounded semantic residual scores to the collaborative scores.
  Local histories and collaborative scores personalize these scores.}
  \label{fig:framework}
\end{figure}

\subsection{Residual Scoring}
\label{subsec:reranking}

SERA starts from the collaborative score $\anchor(u,i)$ that an ID-based FedRS assigns to pair $(u,i)$.
SERA forms the final score $s(u,i)$ by adding to $\anchor(u,i)$ a \textit{semantic residual score} $r(u,i)$ derived from the text and images of the candidate item $i$:
\begin{equation}
s(u,i) = \sg\!\big[\anchor(u,i)\big] + r(u,i).
\label{eq:score}
\end{equation}
The stop-gradient operator $\sg[\cdot]$ keeps the value of the collaborative score and blocks its gradient, so SERA's loss never updates the FedRS. Item content can thus change the final ranking only through the semantic residual score $r(u,i)$.

\subsection{Item Content Representations}
Each item $i$ provides processed text features $\mathbf{x}_i^{\mathrm{text}}$ and visual features $\mathbf{x}_i^{\mathrm{vis}}$.
The shared projector $h_{\phi}$ with parameters $\phi$ maps their concatenation to normalized content representation $\mathbf{z}_i$:
\begin{equation}
\mathbf{z}_i = \Pi\!\big( h_{\phi}([\mathbf{x}_i^{\mathrm{text}}, \mathbf{x}_i^{\mathrm{vis}}]) \big).
\label{eq:basis}
\end{equation}
In Eq.~\eqref{eq:basis}, $\Pi(\mathbf{v})=\mathbf{v}/\max\{\|\mathbf{v}\|_2,\varepsilon\}$ is L2 normalization, and $\varepsilon>0$ avoids division by zero.
During training, the item features remain frozen while $h_{\phi}$ learns content representations.

\subsection{History-Weighted Content Query}
SERA combines the representations $\mathbf{z}_i$ of history items into a \emph{content query}.
Collaborative scores set the \textit{history weights}:
\begin{equation}
\beta_{ui}
=
\frac{\exp\!\big(\sg[\anchor(u,i)]/\tau_q\big)}
{\sum_{j\in\mathcal{H}_u}\exp\!\big(\sg[\anchor(u,j)]/\tau_q\big)},
\quad
\text{where } i\in\mathcal{H}_u.
\label{eq:weights}
\end{equation}
The temperature $\tau_q>0$ controls how strongly the history weights favor high-scoring history items.
These weights combine the history items' content into the content query $\mathbf{q}_u$:
\begin{equation}
\mathbf{q}_u
=
\Pi\!\left(\sum_{i\in\mathcal{H}_u}\beta_{ui}\mathbf{z}_i\right).
\label{eq:query}
\end{equation}
Users share the projector parameters, while their local histories and collaborative scores determine personalized queries.

\subsection{User-Adaptive Bounded Residual}
\label{subsec:residual}
SERA compares the content query $\mathbf{q}_u$ with each content representation $\mathbf{z}_i$ through their content similarity $c_{ui}$:
\begin{equation}
c_{ui} = \langle \mathbf{q}_u, \mathbf{z}_i \rangle .
\label{eq:compat}
\end{equation}
As both vectors are normalized, $c_{ui}\in[-1,1]$.
The \textit{residual function} maps this content similarity to the \textit{shape term} $g_{ui}$:
\begin{equation}
g_{ui} = \tanh\!\left( \frac{c_{ui}}{\tau_r} \right).
\label{eq:shape}
\end{equation}
The residual temperature $\tau_r>0$ controls saturation. 
A small value drives $g_{ui}$ toward $\pm1$, and a large value keeps it close to linear in $c_{ui}$.
Let $v_u$ be the population standard deviation of user $u$'s collaborative scores over the full item set $\mathcal I$. SERA then multiplies $g_{ui}$ by a residual budget $\kappa\ge0$ and the user-adaptive residual scale $\sigma_u=\max\{v_u,\varepsilon\}>0$, which gives:
\begin{equation}
r(u,i) = \kappa \, \sigma_u \, g_{ui}.
\label{eq:residual}
\end{equation}
\begin{proposition}[User-Adaptive Residual Bound]
\label{prop:residual-bound}
For $\kappa>0$ and $\tau_r>0$, the semantic residual score is bounded as follows:
\begin{equation}
|r(u,i)| \le \kappa\,\sigma_u\tanh\!\left(\frac{1}{\tau_r}\right) < \kappa\,\sigma_u,
\;
\forall i\in\mathcal{I}.
\label{eq:bound}
\end{equation}
\end{proposition}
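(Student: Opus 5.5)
The plan is to chain three elementary facts: the content similarity lies in $[-1,1]$, $\tanh$ is odd and strictly increasing, and $\tanh$ stays strictly below $1$. Everything else follows from the factorization $r(u,i)=\kappa\,\sigma_u\,g_{ui}$ in Eq.~\eqref{eq:residual} and the nonnegativity of its scalar factors.

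\textbf{Step 1: pull out the scalar factors.} Since $\kappa>0$ and $\sigma_u=\max\{v_u,\varepsilon\}\ge\varepsilon>0$, I have $|r(u,i)|=\kappa\,\sigma_u\,|g_{ui}|$. So it suffices to bound $|g_{ui}|=|\tanh(c_{ui}/\tau_r)|$ uniformly in $i$.

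\textbf{Step 2: bound the similarity.} I will verify $|c_{ui}|\le 1$ via Cauchy--Schwarz, $|\langle\mathbf{q}_u,\mathbf{z}_i\rangle|\le\|\mathbf{q}_u\|_2\|\mathbf{z}_i\|_2$. The one point needing care is that $\Pi$ with the $\varepsilon$-floor does not always return a unit vector. If $\|\mathbf{v}\|_2\ge\varepsilon$, then $\|\Pi(\mathbf{v})\|_2=1$. Otherwise $\|\Pi(\mathbf{v})\|_2=\|\mathbf{v}\|_2/\varepsilon<1$. In either case $\|\Pi(\mathbf{v})\|_2\le 1$, so applying this to both $\mathbf{q}_u$ and $\mathbf{z}_i$ gives $|c_{ui}|\le 1$ for every $i\in\mathcal{I}$. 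This is the only step that is more than a one-liner, because the paper's remark that ``both vectors are normalized'' should be backed by this small case split.

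\textbf{Step 3: apply monotonicity of $\tanh$.} Because $\tanh$ is odd and strictly increasing and $\tau_r>0$, I get $|\tanh(c_{ui}/\tau_r)|=\tanh(|c_{ui}|/\tau_r)\le\tanh(1/\tau_r)$. Multiplying by $\kappa\sigma_u$ yields the first inequality in Eq.~\eqref{eq:bound}.

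\textbf{Step 4: strict inequality.} For every finite $x$, $\tanh(x)=1-2/(e^{2x}+1)<1$. Combined with $\kappa\sigma_u>0$, this gives $\kappa\sigma_u\tanh(1/\tau_r)<\kappa\sigma_u$. The strictness genuinely uses $\kappa>0$ and $\sigma_u>0$, so I will note explicitly where those hypotheses enter. I do not expect any real obstacle beyond the normalization check in Step 2.
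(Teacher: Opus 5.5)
Your proof is correct and follows essentially the same argument as the paper's appendix proof. That proof uses $\|\mathbf{q}_u\|_2\le 1$ and $\|\mathbf{z}_i\|_2\le 1$ (taken there as condition (A1), which your case split on the $\varepsilon$-floor of $\Pi$ justifies), then Cauchy--Schwarz, the identity $|\tanh(\xi)|=\tanh(|\xi|)$, and finally $0<\tanh(1/\tau_r)<1$ with $\kappa\sigma_u>0$ for the strict inequality.
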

\begin{proof}[Proof Sketch]
The Cauchy--Schwarz inequality applied to Eq.~\eqref{eq:compat} gives $|c_{ui}|\le1$.
The oddness and strict monotonicity of $\tanh$ in Eqs.~\eqref{eq:shape} and \eqref{eq:residual} give the bound in Eq.~\eqref{eq:bound}.
\end{proof}
We denote this residual bound by $R_u = \kappa\,\sigma_u\tanh(1/\tau_r)$.
When $\kappa=0$, the semantic residual score vanishes and $s(u,i)=\anchor(u,i)$.
The analysis below assumes $\kappa>0$.

\subsection{Objective Function}
Let $\mathcal{L}_{\mathrm{base}}(\anchor)$ be the FedRS's training objective for collaborative scores $\anchor(u,i)$.
SERA's residual objective applies the client risk in Eq.~\eqref{eq:clientrisk} to the final score $s$ of each user:
\begin{equation}
\mathcal{J}(\phi)
=
\lambda\sum_{u\in\mathcal{U}}\alpha_u\mathcal{L}_u(s),
\;
\text{where }
\sum_{u\in\mathcal{U}}\alpha_u=1.
\label{eq:objective_sera}
\end{equation}
Here, $\lambda>0$ scales the residual loss.
SERA applies $\sg[\cdot]$ to every collaborative score that enters $s(u,i)$, $\beta_{ui}$, or $\sigma_u$, so each residual update treats the collaborative scores as constants and updates only $\phi$.
Adding $\mathcal{J}(\phi)$ in Eq.~\eqref{eq:objective_sera} to $\mathcal{L}_{\mathrm{base}}(\anchor)$ forms a joint objective for the FedRS and the shared projector:
\begin{equation}
\min_{\theta_{\mathrm{s}},\{\theta_u\}_{u \in \mathcal{U}},\phi} \,
\mathcal{L}_{\mathrm{base}}(\anchor) + \mathcal{J}(\phi).
\label{eq:objective}
\end{equation}
Eq.~\eqref{eq:objective} trains the FedRS on interactions alone and the projector through the final scores.
SERA therefore gives any FedRS personalized semantic residual scores bounded in proportion to the standard deviation of each user's collaborative scores.

\section{Algorithm Analysis}
\label{sec:theory}

\subsection{Federated Optimization}
Algorithm~\ref{alg:fedsera} splits the optimization of Eq.~\eqref{eq:objective} between clients and the server, which holds the item features and the projector $h_\phi$.
At round $t$, the server computes the content table $\mathbf{Z}^t=[\mathbf{z}_i]_{i\in\mathcal{I}}$ by Eq.~\eqref{eq:basis}, samples clients $\mathcal{S}_t$, and normalizes their data counts into weights $\alpha_{u,t}$ that sum to one.
Each sampled client receives $\mathbf{Z}^t$ with the shared collaborative parameters, performs $E$ local steps of the FedRS, and keeps its personalized parameters on the device.
Over these updates, it accumulates the gradient $\mathbf{G}_u^t$ of its residual loss with respect to $\mathbf{Z}^t$ and uploads it with its collaborative model update.

The server applies the FedRS's aggregation rule, averages the uploaded gradients with weights $\alpha_{u,t}$, and back-propagates the average to update $h_\phi$.
Every local step reads the table built from $\phi^t$. For fixed collaborative parameters, the update thus follows an unbiased gradient of Eq.~\eqref{eq:objective_sera} with $\alpha_u$ replaced by $\mathbb{E}[\alpha_{u,t}]$.
If this objective is smooth with bounded gradient variance and the step size scales as $T^{-1/2}$, standard nonconvex analysis~\cite{bottou2018optimization} shows that its average squared gradient norm is $O(T^{-1/2})$.
Algorithm~\ref{alg:fedsera} computes $\sigma_u$ from all $M$ collaborative scores.
At inference, the client builds its query from its history and the stored table and ranks the candidate set $\mathcal{C}_u$ by $s(u,i)$.

\begin{algorithm}[t]
  \caption{Training a FedRS with SERA}
  \label{alg:fedsera}
  \begin{algorithmic}[1]
  \REQUIRE Hyperparameters of SERA $(h_\phi,\kappa,\tau_r,\tau_q,\lambda)$, item features $\{\mathbf{x}_i^{\mathrm{text}}, \mathbf{x}_i^{\mathrm{vis}}\}$ on the server, and a FedRS
  \REQUIRE Communication rounds $T$, local steps $E$, and server learning rate $\eta_{\mathrm{s}}$
  \ENSURE Shared collaborative parameters $\theta_{\mathrm{s}}^{T}$, SERA projector $\phi^{T}$, and personalized parameters $\{\theta_u^{T}\}_{u\in\mathcal{U}}$
  \STATE Initialize the FedRS and the SERA projector $\phi^{0}$
  \FOR{$t=0$ \TO $T-1$}
    \STATE Compute $\mathbf{Z}^{t}=[\mathbf{z}_i]_{i\in\mathcal{I}}$ with $h_{\phi^{t}}$ by Eq.~\eqref{eq:basis}
    \STATE Sample clients $\mathcal{S}_t\subseteq\mathcal{U}$ and set weights $\alpha_{u,t}$
    \STATE Broadcast $\theta_{\mathrm{s}}^{t}$ and $\mathbf{Z}^{t}$ to $\mathcal{S}_t$
    \FORALL{clients $u\in\mathcal{S}_t$ \textbf{in parallel}}
      \STATE Set $\theta_{\mathrm{s}}^{(u)}\leftarrow\theta_{\mathrm{s}}^{t}$ and $\mathbf{G}_u^{t}\leftarrow\mathbf{0}$
      \FOR{$e=1$ \TO $E$}
        \STATE Draw positives $\mathcal{B}_u^{+}$ and sampled negatives $\mathcal{B}_u^{-}$
        \STATE Compute $\anchor(u,\mathcal{I})$ with the FedRS, then $\sigma_u$
        \STATE Compute $\mathbf{q}_u$ by Eqs.~\eqref{eq:weights} and \eqref{eq:query} from rows of $\mathbf{Z}^{t}$
        \STATE Compute $s(u,i)$ for $i\in\mathcal{B}_u^{+}\cup\mathcal{B}_u^{-}$ by Eq.~\eqref{eq:score}
        \STATE Update $\mathbf{G}_u^{t}\leftarrow\mathbf{G}_u^{t}+\partial\big(\lambda\mathcal{L}_u(s)\big)/\partial\mathbf{Z}^{t}$
        \STATE Update $\theta_{\mathrm{s}}^{(u)}$ and $\theta_u$ by one FedRS step on $\mathcal{L}_{\mathrm{base}}$
      \ENDFOR
      \STATE Upload $\theta_{\mathrm{s}}^{(u)}$ and $\mathbf{G}_u^{t}$ to the server
    \ENDFOR
    \STATE $\theta_{\mathrm{s}}^{t+1}\leftarrow\operatorname{Aggregate}\big(\{\theta_{\mathrm{s}}^{(u)}\}_{u\in\mathcal{S}_t},\{\alpha_{u,t}\}_{u\in\mathcal{S}_t}\big)$
    \STATE $\phi^{t+1}\leftarrow\phi^{t}-\eta_{\mathrm{s}}\sum_{i\in\mathcal{I}}\big(\partial\mathbf{z}_i/\partial\phi\big)^{\top}\big[\sum_{u\in\mathcal{S}_t}\alpha_{u,t}\mathbf{G}_u^{t}\big]_i$
  \ENDFOR
  \end{algorithmic}
\end{algorithm}

\subsection{Content Reach Boundaries}
\label{subsec:reach}

The residual bound restricts ranking changes after training.
For user $u$ and items $i,j\in\mathcal{I}$, the \textit{collaborative score gap} is:
\begin{equation}
\gamma_{u,ij} = \anchor(u,i)-\anchor(u,j).
\label{eq:pairgap}
\end{equation}
According to Eq.~\eqref{eq:pairgap}, for items with different collaborative scores, a strict ranking change requires a residual difference with the sign opposite to this gap and a larger magnitude.

To link pairwise ranking changes to list membership, the collaborative and final rankings share the candidate set $\mathcal{C}_u$ and a fixed tie-breaking rule.
For $1\le K< M_u$, let $i_{u,K}$ be the item at collaborative rank $K$, with \textit{collaborative top-$K$ boundary score} $\anchor^{(K)}_{u}=\anchor(u,i_{u,K})$.
For a candidate $j$ outside the collaborative top-$K$ list, its \textit{collaborative boundary gap} is:
\begin{equation}
\gamma^{(K)}_{u,j} = \gamma_{u,i_{u,K}j} = \anchor^{(K)}_{u} - \anchor(u,j).
\label{eq:gap}
\end{equation}
Eq.~\eqref{eq:gap} measures how far $j$ falls below the collaborative top-$K$ boundary score.
To locate ranking changes, we fix the trained model, history, and candidate set and vary $\kappa>0$.
The collaborative scores, user-adaptive residual scale, and shape terms remain constant, so each pairwise score difference is affine in the budget.
Lemma~\ref{lem:flip} gives the change points.

\begin{lemma}[Change Point]
\label{lem:flip}
For user $u \in \mathcal{U}$ and items $i,j \in \mathcal{I}$, let $\delta_{u,ij}=g_{ui}-g_{uj}\ne 0$ be their shape-term difference.
The \emph{change point} of the pair $(i,j)$ is then defined as follows:
\begin{equation}
\kappa_{u,ij} = -\frac{\gamma_{u,ij}}{\sigma_u\,\delta_{u,ij}} .
\label{eq:flippoint}
\end{equation}
With this change point, the score difference factors as:
\begin{equation}
s(u,i)-s(u,j) = \sigma_u\,\delta_{u,ij}\big(\kappa-\kappa_{u,ij}\big).
\label{eq:flip}
\end{equation}
Since $\sigma_u>0$, the pair changes ranking at some positive budget exactly when $\gamma_{u,ij}$ and $\delta_{u,ij}$ have opposite signs.
\end{lemma}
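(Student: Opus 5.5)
The plan is a direct algebraic verification from the score decomposition in Eq.~\eqref{eq:score} and the residual definition in Eq.~\eqref{eq:residual}. We work at inference, where $\sg[\cdot]$ only affects gradients, so $s(u,i)=\anchor(u,i)+\kappa\sigma_u g_{ui}$. Subtracting the same expression for $j$ gives
\begin{equation*}
s(u,i)-s(u,j)=\gamma_{u,ij}+\kappa\,\sigma_u\,\delta_{u,ij},
\end{equation*}
using Eq.~\eqref{eq:pairgap} and the definition of $\delta_{u,ij}$. The essential observation is the one made just before the lemma: with the trained model, history, and candidate set fixed, $\anchor$, $\sigma_u$, $\beta_{ui}$, $\mathbf{q}_u$, $c_{ui}$ and $g_{ui}$ do not depend on $\kappa$. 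This holds because $\kappa$ enters only through Eq.~\eqref{eq:residual}. It is also worth noting that $\sigma_u$ is computed from collaborative scores alone, so it too is $\kappa$-free. Hence the difference above is affine in $\kappa$.

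Next, since $\sigma_u>0$ (it is a maximum with $\varepsilon>0$) and $\delta_{u,ij}\neq 0$ by hypothesis, the change point $\kappa_{u,ij}$ in Eq.~\eqref{eq:flippoint} is well defined. Factoring $\sigma_u\delta_{u,ij}$ out of the affine expression gives $\sigma_u\delta_{u,ij}\bigl(\kappa+\gamma_{u,ij}/(\sigma_u\delta_{u,ij})\bigr)$, which is Eq.~\eqref{eq:flip} once we substitute the definition of $\kappa_{u,ij}$.

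For the final claim, the sign of $s(u,i)-s(u,j)$ equals $\operatorname{sign}(\delta_{u,ij})\cdot\operatorname{sign}(\kappa-\kappa_{u,ij})$. So, as $\kappa$ varies over $(0,\infty)$, the pairwise order changes exactly when $\kappa$ crosses $\kappa_{u,ij}$. We read ``the pair changes ranking at some positive budget'' as the statement that $\kappa_{u,ij}>0$. Because $\sigma_u>0$, we have $\kappa_{u,ij}>0$ iff $-\gamma_{u,ij}/\delta_{u,ij}>0$, that is, iff $\gamma_{u,ij}$ and $\delta_{u,ij}$ have opposite signs. We also check consistency with the baseline: as $\kappa\to0^+$, the sign of the difference tends to that of $\gamma_{u,ij}$ whenever $\gamma_{u,ij}\ne0$. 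If $\gamma_{u,ij}=0$, then $\kappa_{u,ij}=0$; no positive change point exists, and the order at every $\kappa>0$ is fixed by $\delta_{u,ij}$. This is consistent with the ``exactly when'' statement, since zero has no strict sign.

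The only real obstacle is interpretive rather than computational. We must argue carefully that no quantity other than $\kappa$ moves, especially $\sigma_u$ and $\mathbf{q}_u$, which are built from stopped collaborative scores. We must also pin down ``changes ranking'' as a strict sign change across $\kappa_{u,ij}$, handling the tie at $\kappa=\kappa_{u,ij}$ via the fixed tie-breaking rule.
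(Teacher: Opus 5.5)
Your proposal is correct and follows the paper's proof step for step. Both derive the affine identity $s(u,i)-s(u,j)=\gamma_{u,ij}+\kappa\sigma_u\delta_{u,ij}$ under a fixed model, factor out the nonzero slope $\sigma_u\delta_{u,ij}$, read the order off $\operatorname{sign}(\delta_{u,ij})\cdot\operatorname{sign}(\kappa-\kappa_{u,ij})$, and conclude that a positive change point exists exactly when $\gamma_{u,ij}\delta_{u,ij}<0$; your handling of $\gamma_{u,ij}=0$ matches the paper's remark on degenerate pairs.
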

\begin{proof}[Proof Sketch]
Eqs.~\eqref{eq:score} and \eqref{eq:residual} give $s(u,i)-s(u,j)=\gamma_{u,ij}+\kappa\sigma_u\delta_{u,ij}$, which is affine in $\kappa$ with slope $\sigma_u\delta_{u,ij} \neq 0$.
Factoring out the slope yields Eq.~\eqref{eq:flip}, where only the factor $\kappa-\kappa_{u,ij}$ depends on $\kappa$, and it changes sign at $\kappa_{u,ij}$.
\end{proof}

Each pair thus changes ranking at most once. Its change point depends on the collaborative score gap, user-adaptive residual scale, and learned shape-term difference.
Each semantic residual score has magnitude at most $R_u$, so the residual difference of two items has magnitude at most $2R_u$.
Comparing this bound with the collaborative score gap gives a sufficient condition for keeping the order of two items.

\begin{corollary}[Order-Preservation Certificate]
\label{cor:certificate}
For user $u$ and items $i,j\in\mathcal{I}$, if $|\gamma_{u,ij}| > 2R_u$, then $i$ and $j$ keep their collaborative order under $s$.
The threshold is tight: if $\anchor(u,j)-\anchor(u,i)=2R_u$, $c_{ui}=1$, and $c_{uj}=-1$, then $s(u,i)=s(u,j)$.
\end{corollary}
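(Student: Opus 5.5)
The plan has two parts: order preservation via the triangle inequality, and tightness via an explicit instance. Both rest on Proposition~\ref{prop:residual-bound} and the affine score difference from the proof of Lemma~\ref{lem:flip}.

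For order preservation, write
\[
s(u,i)-s(u,j)=\gamma_{u,ij}+r(u,i)-r(u,j).
\]
By Proposition~\ref{prop:residual-bound}, each residual has magnitude at most $R_u$. Hence $|r(u,i)-r(u,j)|\le 2R_u$. Assume without loss of generality that $\gamma_{u,ij}>2R_u$; the case $\gamma_{u,ij}<-2R_u$ follows by swapping $i$ and $j$. Then
\[
s(u,i)-s(u,j)\ge \gamma_{u,ij}-2R_u>0,
\]
so $i$ stays strictly above $j$. The inequality is strict, so the tie-breaking rule never intervenes, and the collaborative order is kept.

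For tightness, take $\anchor(u,j)-\anchor(u,i)=2R_u$, $c_{ui}=1$, and $c_{uj}=-1$. Since $\tanh$ is odd, Eqs.~\eqref{eq:shape} and \eqref{eq:residual} give
\[
r(u,i)=\kappa\sigma_u\tanh(1/\tau_r)=R_u,\qquad r(u,j)=-R_u.
\]
Then
\[
s(u,i)-s(u,j)=-2R_u+R_u-(-R_u)=0,
\]
which shows that the strict inequality in the hypothesis cannot be weakened to equality. Any slightly smaller gap would flip the order. Lemma~\ref{lem:flip} confirms this: the change point equals the current $\kappa$ exactly in this configuration.

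The one point that needs care is whether $c_{ui}=1$ and $c_{uj}=-1$ are realizable. They are, by choosing $\mathbf{z}_i=\mathbf{q}_u$ and $\mathbf{z}_j=-\mathbf{q}_u$. The corollary states tightness conditionally on these similarity values, so only the arithmetic has to be verified. I therefore expect no real obstacle beyond keeping the sign conventions of $\gamma_{u,ij}$ straight.
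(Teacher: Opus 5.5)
Your proposal is correct and follows the paper's proof. Both bound the residual difference by $2R_u$ through Proposition~\ref{prop:residual-bound} and the triangle inequality, then conclude that $\gamma_{u,ij}+r(u,i)-r(u,j)$ keeps the sign of $\gamma_{u,ij}$; the paper treats the two signs separately where you use symmetry, and your explicit check that $r(u,i)=R_u$ and $r(u,j)=-R_u$ give a zero score difference supplies the tightness arithmetic that the paper only mentions in its remark that the bound is attained at $c_{ui}=\pm1$.
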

\begin{proof}[Proof Sketch]
By Eq.~\eqref{eq:bound} and the triangle inequality, $|\kappa\sigma_u\delta_{u,ij}|=|r(u,i)-r(u,j)|\le 2R_u<|\gamma_{u,ij}|$, so adding the semantic residual scores preserves the sign of $\gamma_{u,ij}$.
\end{proof}

An outside item enters the top-$K$ list only by outranking an item of the collaborative top-$K$ list, so Corollary~\ref{cor:certificate} bounds the collaborative boundary gap of the entering item.

\begin{theorem}[Reach Condition]
\label{thm:reach}
Let $j\in\mathcal{C}_u$ lie outside the collaborative top-$K$ list and enter the final one, both formed from $\mathcal{C}_u$ with fixed tie-breaking.
Its collaborative boundary gap in Eq.~\eqref{eq:gap} must then satisfy the following \textit{reach condition}:
\begin{equation}
\gamma^{(K)}_{u,j} \le 2R_u .
\label{eq:reachcond}
\end{equation}
\end{theorem}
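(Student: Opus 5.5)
We argue by contradiction, using a counting argument on the final list together with Corollary~\ref{cor:certificate}. Suppose $j\in\mathcal{C}_u$ lies outside the collaborative top-$K$ list, enters the final top-$K$ list, and satisfies $\gamma^{(K)}_{u,j}>2R_u$. Write $T_K$ for the collaborative top-$K$ list; it has exactly $K$ items.

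First, every $i\in T_K$ has collaborative rank at most $K$, so $\anchor(u,i)\ge\anchor^{(K)}_u$. Hence
\[
\gamma_{u,ij}=\anchor(u,i)-\anchor(u,j)\ \ge\ \gamma^{(K)}_{u,j}\ >\ 2R_u .
\]
Corollary~\ref{cor:certificate} then gives $s(u,i)-s(u,j)\ge\gamma_{u,ij}-2R_u>0$. So every $i\in T_K$ strictly beats $j$ under $s$. Because the inequality is strict, tie-breaking plays no role here.

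Next, all $K$ items of $T_K$ therefore precede $j$ in the final ranking of $\mathcal{C}_u$. This places $j$ at final rank at least $K+1$, so $j$ is not in the final top-$K$ list, which is a contradiction. We conclude $\gamma^{(K)}_{u,j}\le 2R_u$.

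The main point needing care is the first step: the bound must hold against every member of $T_K$, not just against the boundary item $i_{u,K}$. Items in $T_K$ can tie with $i_{u,K}$ in collaborative score, so we rely only on the weak inequality $\anchor(u,i)\ge\anchor^{(K)}_u$, and strictness comes entirely from the hypothesis $\gamma^{(K)}_{u,j}>2R_u$. We also use that both rankings are taken over the same candidate set $\mathcal{C}_u$ and that $K<M_u$, so $T_K$ is a well-defined set of $K$ distinct candidates other than $j$.
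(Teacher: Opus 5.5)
Your proof is correct and is essentially the paper's argument in contrapositive form. The paper takes one displaced item $i\in\mathcal{T}_u\setminus\tilde{\mathcal{T}}_u$ by pigeonhole and chains $\anchor(u,j)+R_u\ge s(u,j)\ge s(u,i)\ge\anchor^{(K)}_u-R_u$, whereas you apply Corollary~\ref{cor:certificate} to every member of the collaborative top-$K$ list, which is exactly the top-$K$ case of Corollary~\ref{cor:rank-range} that the paper itself points out.
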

\begin{proof}[Proof Sketch]
Let $\mathcal{T}_u\subseteq\mathcal{C}_u$ be the collaborative top-$K$ set.
Since $j\notin\mathcal{T}_u$ enters the final top $K$, at least one item $i\in\mathcal{T}_u$ is displaced.
Thus $s(u,j)\ge s(u,i)$, and in a tie the fixed tie-breaking rule favors $j$.
By Eq.~\eqref{eq:bound} and $\anchor(u,i)\ge\anchor^{(K)}_{u}$:
\begin{equation}
  s(u,j)\le\anchor(u,j)+R_u,\;
  s(u,i)\ge\anchor(u,i)-R_u\ge\anchor^{(K)}_{u}-R_u.
\end{equation}
Combining these inequalities gives $\anchor(u,j)+R_u\ge\anchor^{(K)}_{u}-R_u$, so the collaborative boundary gap satisfies $\gamma^{(K)}_{u,j}\le2R_u$.
\end{proof}

\begin{figure}[!t]
  \centering
  \includegraphics[width=.9\columnwidth]{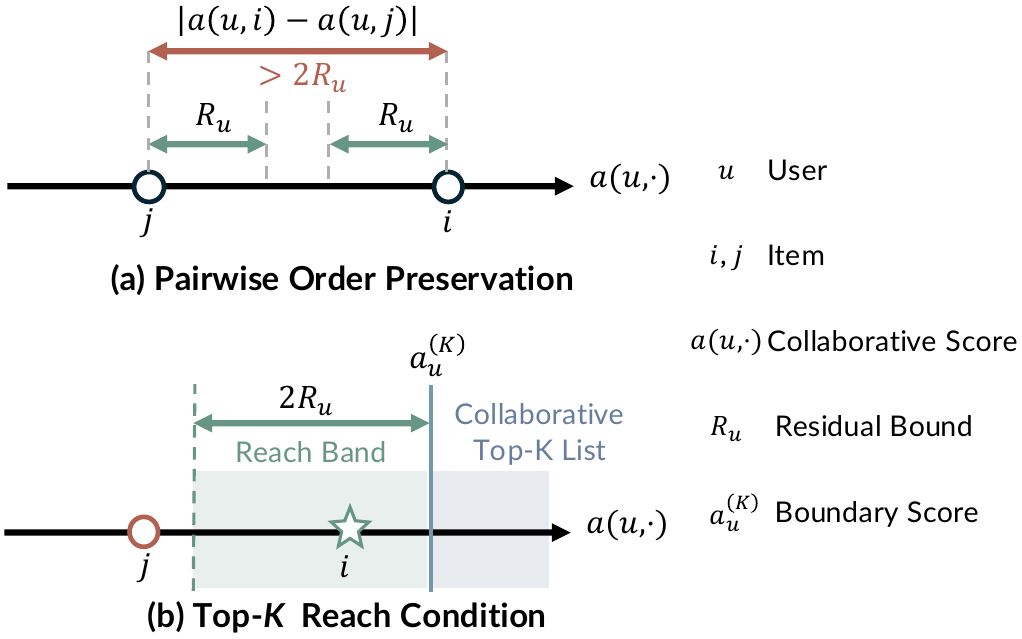}
  \caption{Limits on ranking changes under SERA's residual bound.
  (a) Two items retain their collaborative order when their collaborative score gap exceeds the maximum residual difference $2R_u$.
  (b) An item outside the collaborative top-$K$ list can enter only if its collaborative boundary gap is at most $2R_u$.
  Within this $2R_u$ range, semantic residual scores decide entry.}
  \label{fig:mechanism}
\end{figure}

The reach band in Fig.~\ref{fig:mechanism} contains the outside candidates with collaborative scores between $\anchor_u^{(K)}-2R_u$ and $\anchor_u^{(K)}$.
The same argument bounds the final rank of every item.
Let $\rho_u(j)$ and $\tilde\rho_u(j)$ denote the collaborative and final ranks of $j$ in $\mathcal{C}_u$.

\begin{corollary}[Content-Agnostic Rank Range]
\label{cor:rank-range}
For user $u$ and item $j\in\mathcal{C}_u$, let $n^{+}_{u,j}$ count the candidates with collaborative scores above \mbox{$\anchor(u,j)+2R_u$}, and $n^{-}_{u,j}$ those below \mbox{$\anchor(u,j)-2R_u$}.
For any content and projector parameters:
\begin{equation}
1+n^{+}_{u,j}\le\tilde\rho_u(j)\le M_u-n^{-}_{u,j}.
\label{eq:rank-range}
\end{equation}
\end{corollary}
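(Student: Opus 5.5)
The plan is to reduce both inequalities in Eq.~\eqref{eq:rank-range} to the pairwise Order-Preservation Certificate in Corollary~\ref{cor:certificate}, which holds for any content and projector parameters because it uses only the bound of Proposition~\ref{prop:residual-bound}. Since the final ranking is a total order on $\mathcal{C}_u$ formed by $s$ with fixed tie-breaking, the final rank satisfies
\begin{equation*}
\tilde\rho_u(j)=1+\big|\{i\in\mathcal{C}_u\setminus\{j\}:\ i \text{ is ranked above } j \text{ under } s\}\big|.
\end{equation*}
So I will show that a fixed set of items must lie above $j$, and that another fixed set must lie below it.

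For the lower bound, take any candidate $i$ with $\anchor(u,i)>\anchor(u,j)+2R_u$. Then $\gamma_{u,ij}>2R_u\ge0$, so Corollary~\ref{cor:certificate} gives $s(u,i)>s(u,j)$. This inequality is strict, so tie-breaking is irrelevant and $i$ is ranked above $j$ in the final list. There are $n^{+}_{u,j}$ such items, and none of them equals $j$ because $R_u\ge0$. Hence at least $n^{+}_{u,j}$ items precede $j$, which gives $\tilde\rho_u(j)\ge 1+n^{+}_{u,j}$.

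For the upper bound, I argue symmetrically. Each candidate $i$ with $\anchor(u,i)<\anchor(u,j)-2R_u$ satisfies $\gamma_{u,ji}>2R_u$, so $s(u,j)>s(u,i)$ strictly and $i$ is ranked below $j$. At least $n^{-}_{u,j}$ items therefore follow $j$ among the $M_u$ candidates. This leaves at most $M_u-1-n^{-}_{u,j}$ items above $j$, and so $\tilde\rho_u(j)\le M_u-n^{-}_{u,j}$.

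The only step needing care is the strictness. The certificate's strict hypothesis $|\gamma_{u,ij}|>2R_u$ yields a strict inequality between final scores, so tie-breaking never enters the argument. I also need that the sets counted by $n^{+}_{u,j}$ and $n^{-}_{u,j}$ exclude $j$ and are disjoint from each other, which is immediate because $2R_u\ge0$. Finally, content-agnosticism follows because $R_u=\kappa\sigma_u\tanh(1/\tau_r)$ depends only on the collaborative scores and the hyperparameters, not on $\phi$ or the item features.
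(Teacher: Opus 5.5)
Your proposal is correct and follows essentially the same route as the paper: apply Corollary~\ref{cor:certificate} to keep every candidate more than $2R_u$ above or below $j$ strictly on its side, so tie-breaking never matters, and then count positions to get at least $n^{+}_{u,j}$ and at most $M_u-1-n^{-}_{u,j}$ items preceding $j$. The paper additionally spells out why Proposition~\ref{prop:residual-bound} holds for arbitrary content: the normalization $\Pi$ guarantees $\lVert\mathbf z_i\rVert_2\le1$ and $\lVert\mathbf q_u\rVert_2\le1$ for every input, a point you leave implicit. Your remark that $R_u$, and hence $n^{\pm}_{u,j}$, does not depend on $\phi$ or the item features is a useful complement to it.
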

\begin{proof}[Proof Sketch]
Normalization in Eqs.~\eqref{eq:basis} and \eqref{eq:query} keeps Proposition~\ref{prop:residual-bound} valid for any content, and Corollary~\ref{cor:certificate} then keeps the order between $j$ and each counted candidate unchanged.
\end{proof}

Theorem~\ref{thm:reach} is the top-$K$ case of Eq.~\eqref{eq:rank-range}, because $j$ enters only if $n^{+}_{u,j}<K$, which holds exactly when $\anchor^{(K)}_u-\anchor(u,j)\le2R_u$.
Since content cannot change the collaborative scores, the range also limits how high untrusted content or projector parameters can raise an item.

\subsection{Ranking Analysis}
\label{subsec:ranking-analysis}
We analyze when Theorem~\ref{thm:reach} limits the ranking and how pairwise ranking changes move each item's rank as the budget grows.
Because $\sigma_u$ is at least the standard deviation of the $M$ collaborative scores, every collaborative score gap satisfies $|\gamma_{u,ij}|\le\sigma_u\sqrt{2M}$.
The coverage condition $2\kappa\tanh(1/\tau_r)\ge\sqrt{2M}$ thus makes Theorem~\ref{thm:reach} and Corollary~\ref{cor:certificate} vacuous.
It depends only on $\kappa$, $\tau_r$, and $M$, so it can be checked before training.
Actual ranking changes depend on the learned residual differences between competing items.
For a fixed model and budget $\kappa$, a target item $i\in\mathcal{C}_u$ has two rank counts:
\begin{equation}
\begin{aligned}
P_i &= \big|\{\,j\ne i:\ \anchor(u,j)>\anchor(u,i),\ s(u,j)<s(u,i)\,\}\big|, \\
Q_i &= \big|\{\,j\ne i:\ \anchor(u,j)<\anchor(u,i),\ s(u,j)>s(u,i)\,\}\big|.
\end{aligned}
\label{eq:pq}
\end{equation}
Both counts range over $j\in\mathcal{C}_u\setminus\{i\}$.
They record strict ranking changes: $P_i$ counts competitors moved below $i$, and $Q_i$ counts competitors moved above it.
Moving $j$ below $i$ adds one to $P_i$ and one to $Q_j$, so $P_i$ and $Q_i$ have equal sums over $i\in\mathcal{C}_u$.
As $\kappa$ grows, these counts give the exact rank of the target.

\begin{proposition}[Rank Path]
\label{prop:rank-path}
Fix user $u$, a trained model, and a target $i\in\mathcal{C}_u$ whose collaborative score differs from all others.
For every budget $\kappa>0$ other than the change points $\kappa_{u,ij}$ of $i$, $P_i$ counts the competitors with higher collaborative scores whose change points with $i$ lie in $(0,\kappa)$, and $Q_i$ counts those with lower ones. The final rank of target $i$ then satisfies:
\begin{equation}
\tilde\rho_u(i)=\rho_u(i)-P_i+Q_i.
\label{eq:rank-path}
\end{equation}
For $\kappa$ above all positive change points of $i$, its final rank is its rank by similarity $c_{ui}$, and collaborative order breaks ties.
\end{proposition}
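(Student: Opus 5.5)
The plan is to reduce everything to Lemma~\ref{lem:flip}. Fix $u$, the trained model, and the target $i$. Varying $\kappa$ leaves $\anchor$, $\sigma_u$ and every $g_{uj}$ unchanged, because the shape terms depend only on $\phi$ and the history, not on $\kappa$. For each competitor $j\ne i$, I will write $s(u,j)-s(u,i)=\gamma_{u,ji}+\kappa\sigma_u\delta_{u,ji}$. Since $\gamma_{u,ji}\ne0$ by the distinctness assumption, I will split competitors into two cases by the sign of $\delta_{u,ji}$.

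\emph{Case $\delta_{u,ji}=0$.} The difference is constant in $\kappa$, so its sign equals that of $\gamma_{u,ji}$ and $j$ never changes order with $i$. Such a $j$ has no change point with $i$.

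\emph{Case $\delta_{u,ji}\ne0$.} By Eq.~\eqref{eq:flip}, the sign of $s(u,j)-s(u,i)$ is $\operatorname{sign}(\delta_{u,ji})\operatorname{sign}(\kappa-\kappa_{u,ji})$. At $\kappa=0$ this sign equals $\operatorname{sign}(\gamma_{u,ji})$. Take $j$ with $\anchor(u,j)>\anchor(u,i)$. Since $\kappa\ne\kappa_{u,ji}$, the strict inequality $s(u,j)<s(u,i)$ holds exactly when the affine function has crossed zero on $(0,\kappa)$. That happens exactly when $\kappa_{u,ji}\in(0,\kappa)$, which by Lemma~\ref{lem:flip} requires opposite signs of $\gamma$ and $\delta$. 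The same argument for lower-scored $j$ identifies $Q_i$. Change points are symmetric in the pair, $\kappa_{u,ij}=\kappa_{u,ji}$, so these are ``change points of $i$''. Excluding the change points guarantees $s(u,j)\ne s(u,i)$ for every $j$, so the final ranking around $i$ has no ties and tie-breaking is irrelevant for $i$'s rank.

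For the rank formula, I will compute the final rank directly:
\[
\tilde\rho_u(i)=1+\big|\{j\ne i: s(u,j)>s(u,i)\}\big| .
\]
The set on the right splits into two parts. The first is the higher-collaborative competitors that are still above $i$, numbering $(\rho_u(i)-1)-P_i$. This uses distinctness: exactly $\rho_u(i)-1$ competitors have higher $\anchor$, regardless of tie-breaking among the others. The second is the lower-collaborative competitors now above $i$, which is $Q_i$ by Eq.~\eqref{eq:pq}. Adding these gives Eq.~\eqref{eq:rank-path}.

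For the final claim, take $\kappa$ larger than every positive change point of $i$. For each $j$ with $\delta_{u,ji}\ne0$, the sign of $s(u,j)-s(u,i)$ is then $\operatorname{sign}(\delta_{u,ji})$. Because $\tanh(\cdot/\tau_r)$ is strictly increasing, this equals $\operatorname{sign}(c_{uj}-c_{ui})$. Two subcases remain. If a change point is negative or absent, the sign is constant on $(0,\infty)$ and already agrees with $\operatorname{sign}(\delta_{u,ji})$ in the limit. This holds because a negative change point means $\gamma$ and $\delta$ share a sign. When $\delta_{u,ji}=0$, i.e.\ $c_{uj}=c_{ui}$, the order is the collaborative one. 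Hence $i$'s position equals its rank under similarity with collaborative tie-breaks.

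The main obstacle is bookkeeping rather than depth. One must confirm that the counts in Eq.~\eqref{eq:pq} refer only to strict reversals. One must also confirm that the distinctness hypothesis makes $\rho_u(i)$ independent of the fixed tie-breaking rule. Both points will be checked explicitly.
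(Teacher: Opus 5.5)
Your proposal is correct and follows essentially the same route as the paper's proof. Both arguments proceed in the same steps: every pair involving $i$ is strictly ordered away from its change points; the rank identity comes from splitting the competitors above $i$ into unpassed higher ones and lower ones that passed it; $P_i$ and $Q_i$ are identified through the sign factorization of Lemma~\ref{lem:flip}; and the large-budget limit follows from the strict monotonicity of $\tanh$. The only difference is cosmetic: you read off the limiting order directly from $\operatorname{sign}(\delta_{u,ji})$ for each pair, whereas the paper substitutes the limiting counts into Eq.~\eqref{eq:rank-path}.
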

\begin{proof}[Proof Sketch]
The items above $i$ in the final ranking are those with higher collaborative scores that $i$ has not passed and those with lower ones that have passed it.
By Lemma~\ref{lem:flip}, each of these pairs changes ranking only at a positive change point.
\end{proof}

By Proposition~\ref{prop:rank-path}, an item $j$ with $\rho_u(j)>K$ lies in the final top-$K$ list exactly when $P_j-Q_j\ge\rho_u(j)-K$, which refines the reach condition of Theorem~\ref{thm:reach} into an exact test.

An item can also rise and later fall.
Each positive change point with a higher competitor raises the item by one position, and each one with a lower competitor lowers it by one, so an outside item can enter the top-$K$ list and later leave it.

\subsection{Complexity Analysis}

\paragraph{Time Complexity}
Let $\mathcal{B}_u=\mathcal{B}_u^{+}\cup\mathcal{B}_u^{-}$ hold the mini-batch items, let $d_x$ and $d$ be the projector's input and output dimensions, and let the projector have $|\phi|$ parameters, with $|\phi|=\Theta(d_xd)$ for a linear projection.
After computing all $M$ collaborative scores, a local step of Algorithm~\ref{alg:fedsera} adds $\Theta(M+(|\mathcal{H}_u|+|\mathcal{B}_u|)d)$ client time.
The $\Theta(M)$ term computes $\sigma_u$, and the $\Theta((|\mathcal{H}_u|+|\mathcal{B}_u|)d)$ term reads the history and batch rows of $\mathbf{Z}^t$, builds the query, scores the batch items, and accumulates their gradients.
If the FedRS update scores only sampled items, scoring all $M$ items for $\sigma_u$ is extra work.
Each round, the server evaluates and differentiates $h_\phi$ on the item set in $\Theta(M|\phi|)$ time and averages the uploaded gradients in $\Theta(|\mathcal{S}_t|Md)$ time.
Given the collaborative scores, recommendation over $\mathcal{C}_u$, including the FedRS's own top-$K$ selection, takes $\Theta(M+(|\mathcal{H}_u|+M_u)d+M_u\log K)$ time.

\paragraph{Space Complexity}
Each client stores the content table in $\Theta(Md)$ space and uses $\Theta(M+(|\mathcal{H}_u|+|\mathcal{B}_u|)d+Md)$ working memory for the collaborative scores, the query and batch rows, and the accumulated gradient.
The server stores the features, the projector with its optimizer state, and the full content table in $\Theta(Md_x+|\phi|+Md)$ space in total.

\paragraph{Communication Complexity}
Beyond the FedRS's communication, broadcasting $\mathbf{Z}^t$ and uploading $\mathbf{G}_u^t$ add $\Theta(|\mathcal{S}_t|Md)$ values per round in each direction. The item features and the projector stay on the server, so this cost does not grow with history lengths or the projector size $|\phi|$.

\subsection{Privacy Discussion}
\label{subsec:privacy}
SERA keeps user interaction histories $\mathcal{H}_u$ and personalized parameters $\theta_u$ on the client, where each client builds its content query $\mathbf{q}_u$, computes the semantic residual scores $r(u,i)$, and ranks items by $s(u,i)$.
The server receives the collaborative updates and the uploaded gradients $\mathbf{G}_u^t$ of Algorithm~\ref{alg:fedsera}.
The nonzero rows of $\mathbf{G}_u^t$ can reveal indices of history or mini-batch items, as the FedRS's item updates do.
The projector update needs only the weighted sum of $\mathbf{G}_u^t$, so these gradients can be combined through secure aggregation~\cite{kairouz2021advances}.
We evaluate the effects of client-side clipping and Gaussian noise in Sec.~\ref{sec:privacy-noise}.

\begin{table}[!t]
  \centering
  \caption{Statistics of the six datasets.
  \#Users and \#Items are the counts of users and items.
  \#Ratings is the number of observed interactions.
  Sparsity is $1-\#\text{Ratings}/(\#\text{Users}\times\#\text{Items})$.
  }
  \label{tab:datasets}
  \setlength{\tabcolsep}{8pt}
  \begin{tabular}{lrrrr}
  \toprule
  Dataset & \#Users & \#Items & \#Ratings & Sparsity \\
  \midrule
  Beauty & 253    & 356   & 2{,}535   & 97.19\% \\
  Cards  & 377    & 129   & 2{,}429   & 95.01\% \\
  ML     & 610    & 3{,}650 & 90{,}274 & 95.95\% \\
  Dance  & 10{,}231 & 1{,}676 & 80{,}086 & 99.53\% \\
  Food   & 5{,}990  & 1{,}125 & 36{,}482 & 99.46\% \\
  Movie  & 15{,}908 & 2{,}528 & 111{,}091 & 99.72\% \\
  \bottomrule
  \end{tabular}
\end{table}

\section{Experiments}
\label{sec:exp}

Following Secs.~\ref{sec:method} and~\ref{sec:theory}, we evaluate SERA's performance and test the implications of the reach and rank-path analysis.

\subsection{Experimental Setup}

\paragraph{Datasets}
We use six public recommendation datasets from three domains.
\textbf{Beauty} and \textbf{Cards} are the All Beauty and Gift Cards subsets of Amazon Reviews 2023\footnote{\url{https://amazon-reviews-2023.github.io/}} and contain product-review interactions and item metadata \cite{hou2024bridging}.
MovieLens-latest-small (\textbf{ML})\footnote{\url{https://grouplens.org/datasets/movielens/latest/}} contains movie ratings and item metadata \cite{harper2015movielens}.
\textbf{Dance}, \textbf{Food}, and \textbf{Movie} are short-video subsets of NineRec\footnote{\url{https://github.com/westlake-repl/NineRec}} with visual and textual content for each item \cite{zhang2024ninerec}.
We reserve one validation and one test interaction per user under a chronological leave-one-out split \cite{he2017neural}, and rank the test item against the full item set \cite{krichene2020sampled}.
We also retain users and items with at least five interactions before splitting to avoid cold-start cases.
Table~\ref{tab:datasets} reports the dataset statistics.

\paragraph{Metrics}
We measure ranking quality with Recall and Normalized Discounted Cumulative Gain (NDCG).
Recall@$K$ measures the proportion of users whose test item appears among the top-$K$ recommendations \cite{he2017neural}.
NDCG@$K$ gives higher scores when the test item ranks closer to the top \cite{jarvelin2002cumulated}.
We set $K=50$ and report means and standard deviations across five random seeds, with both metrics in percent and their differences in percentage points.
The hyperparameters of all methods are selected by NDCG@50 on the validation set.

\subsection{Baselines}
The baselines serve three roles.
\textbf{Centralized multimodal recommenders} show the accuracy when a model learns from interactions and item content in a centralized setting.
\begin{itemize}[leftmargin=*,itemsep=0pt,topsep=2pt]
\item \textbf{VBPR} \cite{he2016vbpr} extends BPR with pretrained visual item features for visually informed user preference estimation.
\item \textbf{BM3} \cite{zhou2023bootstrap} learns multimodal representations through self-supervised reconstruction without negative sampling.
\item \textbf{MGCN} \cite{yu2023multiview} denoises modality features and propagates them on modality-specific graphs for behavior-aware fusion.
\item \textbf{FITMM} \cite{yang2025fitmm} decomposes multimodal representations into frequency bands and regularizes their information content.
\item \textbf{PGL} \cite{yu2025principal} extracts principal user-item subgraphs to preserve local interaction structure during graph message passing.
\item \textbf{MSCA} \cite{li2026msca} aligns collaborative and modality-augmented views through semantic contrastive representation learning.
\end{itemize}
\textbf{ID-based FedRSs} provide the backbones for measuring the effect of adding multimodal item content.
\begin{itemize}[leftmargin=*,itemsep=0pt,topsep=2pt]
\item \textbf{FedAvg} \cite{mcmahan2017communication} aggregates shared model parameters through client-weighted averaging across participating clients.
\item \textbf{PFedRec} \cite{zhang2023dual} learns lightweight client models with locally personalized user and item representations on the device.
\item \textbf{FedRAP} \cite{li2024fedrap} decomposes item representations into shared and personalized components for federated personalization.
\end{itemize}
\textbf{FedMRs} provide federated multimodal alternatives that add item content at the feature, representation, or objective level.
\begin{itemize}[leftmargin=*,itemsep=0pt,topsep=2pt]
\item \textbf{FedMMR} \cite{li2024towards} aligns client collaborative and server multimodal item representations through a contrastive objective.
\item \textbf{AMMFRS} \cite{feng2024robust} fuses text, image, identity, and behavior features through client-side attention before federated training.
\item \textbf{FedVLR} \cite{li2024fedvlr} constructs server-side multimodal views and selects them through a client-specific mixture of experts.
\item \textbf{GFMFR} \cite{zhang2026multimodal} distills group-level multimodal preferences into each client's own local recommendation objective.
\end{itemize}
We attach SERA to the same FedAvg, PFedRec, and FedRAP backbones that FedVLR and GFMFR extend.
All methods use the same framework, data splits, and evaluation protocols.

\begin{table*}[!t]
  \centering
  \caption{Recommendation performance on Beauty, Cards, and ML in Recall@50 and NDCG@50, reported as means $\pm$ standard deviations in percent across five seeds.
  \textbf{Bold} and \underline{underlining} mark the highest and second-highest federated means.
  SERA improves all three backbones, narrows their gaps with the best centralized methods, and lifts some above centralized baselines.}
  \label{tab:main}
  \small
  \setlength{\tabcolsep}{4pt}
  \renewcommand{\arraystretch}{0.95}
  \begin{tabular}{c|l cc cc cc}
    \toprule
    \multicolumn{2}{c}{\textbf{Method}} & \multicolumn{2}{c}{\textbf{Beauty}} & \multicolumn{2}{c}{\textbf{Cards}} & \multicolumn{2}{c}{\textbf{ML}} \\
    \multicolumn{2}{c}{} & \textbf{Recall@50} & \textbf{NDCG@50} & \textbf{Recall@50} & \textbf{NDCG@50} & \textbf{Recall@50} & \textbf{NDCG@50} \\
    \midrule
    \multirow{6}{*}{\rotatebox{90}{Centralized}}
    & VBPR  & 23.79 $\pm$ 2.08 & 6.77 $\pm$ 0.61 & 74.96 $\pm$ 1.20 & 29.05 $\pm$ 0.63 & 21.44 $\pm$ 1.14 & 6.43 $\pm$ 0.31 \\
    & BM3   & 10.12 $\pm$ 2.66 & 2.58 $\pm$ 0.72 & 80.64 $\pm$ 1.23 & 34.38 $\pm$ 1.53 & 16.56 $\pm$ 3.01 & 4.69 $\pm$ 0.73 \\
    & MGCN  & 19.13 $\pm$ 1.82 & 5.51 $\pm$ 0.64 & 79.42 $\pm$ 0.81 & 33.84 $\pm$ 0.56 & 22.66 $\pm$ 0.57 & 6.54 $\pm$ 0.12 \\
    & FITMM & 19.68 $\pm$ 1.13 & 5.92 $\pm$ 0.54 & 81.91 $\pm$ 1.53 & 35.16 $\pm$ 0.19 & 22.23 $\pm$ 1.21 & 6.72 $\pm$ 0.18 \\
    & PGL   & 22.06 $\pm$ 1.35 & 5.74 $\pm$ 0.54 & 82.02 $\pm$ 2.04 & 34.52 $\pm$ 0.75 & 15.34 $\pm$ 6.84 & 4.81 $\pm$ 1.84 \\
    & MSCA  & 21.74 $\pm$ 1.96 & 6.47 $\pm$ 0.41 & 74.43 $\pm$ 1.37 & 31.41 $\pm$ 0.43 & 20.33 $\pm$ 1.73 & 5.94 $\pm$ 0.51 \\
    \midrule
    & FedMMR         & 13.83 $\pm$ 2.85 & 3.40 $\pm$ 0.73 & 74.59 $\pm$ 1.97 & 31.12 $\pm$ 0.93 & 8.56 $\pm$ 0.55 & 2.74 $\pm$ 0.16 \\
    & AMMFRS         & 16.28 $\pm$ 4.12 & 4.46 $\pm$ 1.42 & 69.18 $\pm$ 2.57 & 30.13 $\pm$ 0.49 & 9.54 $\pm$ 1.12 & 2.90 $\pm$ 0.44 \\
    & FedAvg         & 13.04 $\pm$ 0.56 & 3.77 $\pm$ 0.17 & 36.47 $\pm$ 0.13 & 10.89 $\pm$ 0.03 & 6.34 $\pm$ 5.36 & 1.99 $\pm$ 1.77 \\
    & \quad + FedVLR & 16.60 $\pm$ 0.94 & 3.99 $\pm$ 0.39 & 64.67 $\pm$ 1.21 & 24.54 $\pm$ 1.82 & 9.08 $\pm$ 0.17 & 2.42 $\pm$ 0.07 \\
    & \quad + GFMFR  & 13.12 $\pm$ 1.46 & 3.28 $\pm$ 0.54 & 69.39 $\pm$ 3.61 & 21.54 $\pm$ 5.23 & 9.97 $\pm$ 2.09 & 2.94 $\pm$ 0.55 \\
    \rowcolor{RoyalBlue!6}
    \cellcolor{white} & \quad + SERA   & 21.26 $\pm$ 3.63 & 5.94 $\pm$ 0.93 & \underline{74.85 $\pm$ 1.84} & \underline{31.86 $\pm$ 0.49} & \underline{11.54 $\pm$ 1.98} & \underline{3.65 $\pm$ 0.41} \\
    & PFedRec        & 15.97 $\pm$ 0.32 & 3.97 $\pm$ 0.07 & 32.63 $\pm$ 0.64 & 8.67 $\pm$ 0.81 & 1.64 $\pm$ 0.14 & 0.42 $\pm$ 0.02 \\
    & \quad + FedVLR & 17.94 $\pm$ 1.38 & 4.82 $\pm$ 0.44 & 71.17 $\pm$ 1.35 & 27.65 $\pm$ 0.60 & 7.15 $\pm$ 1.21 & 2.14 $\pm$ 0.25 \\
    & \quad + GFMFR  & 14.78 $\pm$ 6.13 & 3.83 $\pm$ 1.51 & 53.26 $\pm$ 2.93 & 20.16 $\pm$ 0.74 & 7.18 $\pm$ 1.31 & 2.01 $\pm$ 0.37 \\
    \rowcolor{RoyalBlue!6}
    \cellcolor{white} & \quad + SERA   & \underline{24.58 $\pm$ 2.08} & \underline{7.05 $\pm$ 0.70} & 70.34 $\pm$ 6.25 & 24.66 $\pm$ 6.68 & 10.49 $\pm$ 0.81 & 3.48 $\pm$ 0.70 \\
    & FedRAP         & 16.21 $\pm$ 1.02 & 4.12 $\pm$ 0.27 & 42.49 $\pm$ 0.11 & 11.09 $\pm$ 0.02 & 1.71 $\pm$ 0.22 & 0.69 $\pm$ 0.11 \\
    & \quad + FedVLR & \bestresult{34.55 $\pm$ 1.74} & \bestresult{11.98 $\pm$ 0.33} & 57.56 $\pm$ 0.56 & 22.65 $\pm$ 0.27 & 10.75 $\pm$ 0.64 & 3.60 $\pm$ 0.06 \\
    & \quad + GFMFR  & 14.94 $\pm$ 1.62 & 4.14 $\pm$ 0.51 & 47.85 $\pm$ 7.70 & 13.47 $\pm$ 3.60 & 9.28 $\pm$ 1.47 & 2.76 $\pm$ 0.44 \\
    \rowcolor{RoyalBlue!6}
    \cellcolor{white}\multirow{-14}{*}{\rotatebox{90}{Federated}}
    & \quad + SERA   & 20.47 $\pm$ 4.95 & 5.70 $\pm$ 1.53 & \bestresult{82.02 $\pm$ 0.44} & \bestresult{33.31 $\pm$ 1.15} & \bestresult{11.70 $\pm$ 0.59} & \bestresult{3.72 $\pm$ 0.13} \\
    \bottomrule
  \end{tabular}
\end{table*}

\begin{table*}[!t]
  \centering
  \caption{Recommendation performance on Dance, Food, and Movie in Recall@50 and NDCG@50, reported as means $\pm$ standard deviations in percent across five seeds.
  Formatting follows Table~\ref{tab:main}.
  AMMFRS exceeds the available memory on these datasets.
  SERA improves every federated backbone and outperforms FedVLR and GFMFR in both metrics across all three short-video datasets.}
  \label{tab:main-shortvideo}
  \small
  \setlength{\tabcolsep}{4pt}
  \renewcommand{\arraystretch}{0.95}
  \begin{tabular}{c|l cc cc cc}
    \toprule
    \multicolumn{2}{c}{\textbf{Method}} & \multicolumn{2}{c}{\textbf{Dance}} & \multicolumn{2}{c}{\textbf{Food}} & \multicolumn{2}{c}{\textbf{Movie}} \\
    \multicolumn{2}{c}{} & \textbf{Recall@50} & \textbf{NDCG@50} & \textbf{Recall@50} & \textbf{NDCG@50} & \textbf{Recall@50} & \textbf{NDCG@50} \\
    \midrule
    \multirow{6}{*}{\rotatebox{90}{Centralized}}
    & VBPR  & 23.07 $\pm$ 0.22 & 7.56 $\pm$ 0.08 & 21.12 $\pm$ 0.21 & 6.75 $\pm$ 0.12 & 12.64 $\pm$ 0.16 & 4.24 $\pm$ 0.07 \\
    & BM3   & 24.21 $\pm$ 0.40 & 8.13 $\pm$ 0.11 & 25.03 $\pm$ 0.22 & 8.28 $\pm$ 0.12 & 17.62 $\pm$ 0.23 & 6.00 $\pm$ 0.06 \\
    & MGCN  & 22.17 $\pm$ 2.48 & 7.60 $\pm$ 0.89 & 24.88 $\pm$ 0.29 & 8.39 $\pm$ 0.07 & 16.77 $\pm$ 0.14 & 5.83 $\pm$ 0.06 \\
    & FITMM & 20.74 $\pm$ 0.34 & 7.31 $\pm$ 0.07 & 22.62 $\pm$ 0.22 & 7.58 $\pm$ 0.03 & 15.33 $\pm$ 0.11 & 5.51 $\pm$ 0.05 \\
    & PGL   & 24.55 $\pm$ 0.44 & 8.45 $\pm$ 0.16 & 24.84 $\pm$ 0.27 & 8.28 $\pm$ 0.12 & 17.18 $\pm$ 0.18 & 6.19 $\pm$ 0.03 \\
    & MSCA  & 21.00 $\pm$ 0.41 & 7.23 $\pm$ 0.08 & 22.72 $\pm$ 3.22 & 7.76 $\pm$ 0.84 & 12.95 $\pm$ 0.07 & 4.74 $\pm$ 0.04 \\
    \midrule
    & FedMMR         & 13.45 $\pm$ 0.65 & 4.66 $\pm$ 0.31 & 11.18 $\pm$ 0.84 & 3.53 $\pm$ 0.37 & 7.35 $\pm$ 0.33 & 2.44 $\pm$ 0.11 \\
    & FedAvg         & 4.49 $\pm$ 1.03 & 1.20 $\pm$ 0.53 & 4.79 $\pm$ 1.59 & 1.34 $\pm$ 0.43 & 4.19 $\pm$ 1.86 & 1.09 $\pm$ 0.46 \\
    & \quad + FedVLR & 9.76 $\pm$ 0.49 & 2.57 $\pm$ 0.31 & 9.91 $\pm$ 0.05 & 2.70 $\pm$ 0.02 & 5.64 $\pm$ 0.00 & 1.61 $\pm$ 0.00 \\
    & \quad + GFMFR  & 8.61 $\pm$ 0.56 & 2.22 $\pm$ 0.10 & 8.81 $\pm$ 0.88 & 2.38 $\pm$ 0.21 & 5.58 $\pm$ 0.63 & 1.79 $\pm$ 0.21 \\
    \rowcolor{RoyalBlue!6}
    \cellcolor{white} & \quad + SERA   & 16.69 $\pm$ 2.44 & 4.48 $\pm$ 2.07 & \underline{14.30 $\pm$ 3.61} & \underline{4.35 $\pm$ 1.56} & 7.23 $\pm$ 0.29 & 2.11 $\pm$ 0.11\\
    & PFedRec        & 2.52 $\pm$ 0.63 & 0.68 $\pm$ 0.17 & 3.27 $\pm$ 0.78 & 0.93 $\pm$ 0.21 & 1.70 $\pm$ 0.35 & 0.42 $\pm$ 0.11 \\
    & \quad + FedVLR & 7.92 $\pm$ 0.99 & 1.92 $\pm$ 0.17 & 9.96 $\pm$ 0.89 & 2.49 $\pm$ 0.29 & 5.59 $\pm$ 0.81 & 1.56 $\pm$ 0.19 \\
    & \quad + GFMFR  & 5.94 $\pm$ 0.34 & 1.65 $\pm$ 0.09 & 6.39 $\pm$ 0.28 & 1.74 $\pm$ 0.11 & 3.96 $\pm$ 0.15 & 1.15 $\pm$ 0.05 \\
    \rowcolor{RoyalBlue!6}
    \cellcolor{white} & \quad + SERA   & \bestresult{22.57 $\pm$ 0.49} & \bestresult{7.68 $\pm$ 0.22} & \bestresult{19.07 $\pm$ 1.67} & \bestresult{6.19 $\pm$ 0.60} & \bestresult{12.11 $\pm$ 0.14} & \bestresult{4.30 $\pm$ 0.07} \\
    & FedRAP         & 2.37 $\pm$ 0.74 & 0.53 $\pm$ 0.23 & 5.48 $\pm$ 1.32 & 1.63 $\pm$ 0.45 & 1.68 $\pm$ 0.65 & 0.44 $\pm$ 0.13 \\
    & \quad + FedVLR & 7.27 $\pm$ 1.76 & 1.91 $\pm$ 0.54 & 8.58 $\pm$ 0.15 & 2.20 $\pm$ 0.03 & 6.02 $\pm$ 0.40 & 1.65 $\pm$ 0.09 \\
    & \quad + GFMFR  & 5.02 $\pm$ 1.16 & 1.40 $\pm$ 0.40 & 7.45 $\pm$ 1.04 & 2.70 $\pm$ 0.36 & 5.09 $\pm$ 1.71 & 1.45 $\pm$ 0.50 \\
    \rowcolor{RoyalBlue!6}
    \cellcolor{white}\multirow{-13}{*}{\rotatebox{90}{Federated}}
    & \quad + SERA   & \underline{17.33 $\pm$ 2.87} & \underline{5.23 $\pm$ 1.09} & 13.52 $\pm$ 1.63 & 3.98 $\pm$ 0.39 & \underline{9.88 $\pm$  0.86} & \underline{3.14 $\pm$ 0.27} \\
    \bottomrule
  \end{tabular}
\end{table*}

\subsection{Performance Comparison}

Tables~\ref{tab:main} and \ref{tab:main-shortvideo} compare centralized and federated recommendation methods across six datasets.
Comparing each backbone row with the indented SERA row below it shows that SERA improves every backbone on each dataset.
These gains narrow the gap with the best centralized methods, and SERA lets some backbones outperform centralized baselines.

Within the same backbone architecture, SERA exceeds GFMFR across all datasets and FedVLR in most comparisons.
SERA trails FedVLR only with FedRAP on Beauty and PFedRec on Cards.
The gains also vary across backbones and change which backbone leads on most datasets.
This backbone dependence is built into SERA: collaborative scores set the collaborative ranking, history weights, and user-adaptive residual scale.
Each backbone thus sets the semantic residual scores and the collaborative score gaps they must exceed.

SERA's gains extend from the product-review datasets to the larger item sets and sparser interaction matrices of the short-video datasets.
Both domains have sparse interaction histories on average, as shown in Table~\ref{tab:datasets}.
The shared projector learns from aggregated user updates, so a short local history can use content representations learned from other users' interactions.
SERA scores candidates with these representations, so its semantic residual scores also cover items with few interactions.

\begin{table*}[!t]
  \centering
  \caption{Parameter comparison of FedMRs with latent dimension 64 and feature dimension 512 per modality.
  SERA uses a single linear projection with bias.
  Backbone rows report total personalized parameters per client, and indented rows report additions.
  K denotes thousands.
  SERA derives personalized queries from local histories and shares all added trainable parameters.
  It adds zero personalized parameters and fewer shared parameters than FedVLR and GFMFR across all three backbones.}
  \label{tab:param-analysis}
  \begin{tabular}{l*{6}{c}|*{3}{c}}
    \toprule
    & \multicolumn{6}{c|}{\makecell{Per-client personalized parameters}}
    & \multicolumn{3}{c}{Global multimodal additions} \\
    Method
    & Beauty & Cards & ML & Dance & Food & Movie
    & \makecell{Client}
    & \makecell{Server}
    & \makecell{Representation\\per item} \\
    \midrule
    FedMMR
    & 64 & 64 & 64 & 64 & 64 & 64
    & 0 & 66K & 0 \\
    AMMFRS
    & 1027K & 1020K & 1261K & 1750K & 1443K & 2168K
    & 0 & 0 & 0 \\
    \midrule
    FedAvg
    & 193 & 193 & 193 & 193 & 193 & 193
    & 0 & 0 & 0 \\
    \quad + FedVLR
    & 13K & 13K & 13K & 13K & 13K & 13K
    & 124K & 0 & 64 \\
    \quad + GFMFR
    & 0 & 0 & 0 & 0 & 0 & 0
    & 0 & 134K & 0 \\
    \rowcolor{RoyalBlue!6}\quad + SERA
    & 0 & 0 & 0 & 0 & 0 & 0
    & 0 & 66K & 64 \\
    PFedRec
    & 65 & 65 & 65 & 65 & 65 & 65
    & 0 & 0 & 0 \\
    \quad + FedVLR
    & 13K & 13K & 13K & 13K & 13K & 13K
    & 124K & 0 & 64 \\
    \quad + GFMFR
    & 0 & 0 & 0 & 0 & 0 & 0
    & 0 & 134K & 0 \\
    \rowcolor{RoyalBlue!6}\quad + SERA
    & 0 & 0 & 0 & 0 & 0 & 0
    & 0 & 66K & 64 \\
    FedRAP
    & 23K & 8K & 234K & 107K & 72K & 162K
    & 0 & 0 & 0 \\
    \quad + FedVLR
    & 13K & 13K & 13K & 13K & 13K & 13K
    & 124K & 0 & 64 \\
    \quad + GFMFR
    & 0 & 0 & 0 & 0 & 0 & 0
    & 8K & 134K & 0 \\
    \rowcolor{RoyalBlue!6}\quad + SERA
    & 0 & 0 & 0 & 0 & 0 & 0
    & 0 & 66K & 64 \\
    \bottomrule
  \end{tabular}
\end{table*}

\subsection{Parameter Analysis}

Table~\ref{tab:param-analysis} shows that SERA adds no personalized or client-side trainable parameters on any backbone or dataset.
When validation selects a multilayer projector, SERA's server-side projector has up to 410K parameters.
Linear SERA with PFedRec has personalized and shared parameters comparable to FedMMR, while every SERA variant has fewer personalized parameters than AMMFRS.
Each query in Eqs.~\eqref{eq:weights} and \eqref{eq:query} comes from local histories and collaborative scores, so shared parameters yield personalized semantic residual scores.

\subsection{Content and Personalized Ranking}
\label{sec:content-ranking}

\subsubsection{Content and Modality Controls}
Fig.~\ref{fig:content-ablation} compares each full model with SERA retrained on altered content.
Shuffling permutes features across items, and the random-feature control replaces them with random vectors of the same mean and variance.
Both controls usually reduce mean accuracy.
Image removal causes more consistent losses than text removal.
On Cards, image removal lowers mean accuracy with every backbone, and text removal raises it for FedAvg and PFedRec.

The shuffled and random-feature controls detach each item from its own features, so the query in Eq.~\eqref{eq:query} and the similarity in Eq.~\eqref{eq:compat} stop reflecting the content of the user's history.
The modalities also contribute unequally to $\mathbf{z}_i$ in Eq.~\eqref{eq:basis}.
On Cards and Food, the image features vary more across items than the text features, and many Cards items share nearly identical title features.
Removing images discards the more variable modality, while removing text retains it and produces smaller gains or losses in ranking accuracy across backbones.

\begin{figure*}[!t]
  \centering
  \includegraphics[width=\textwidth]{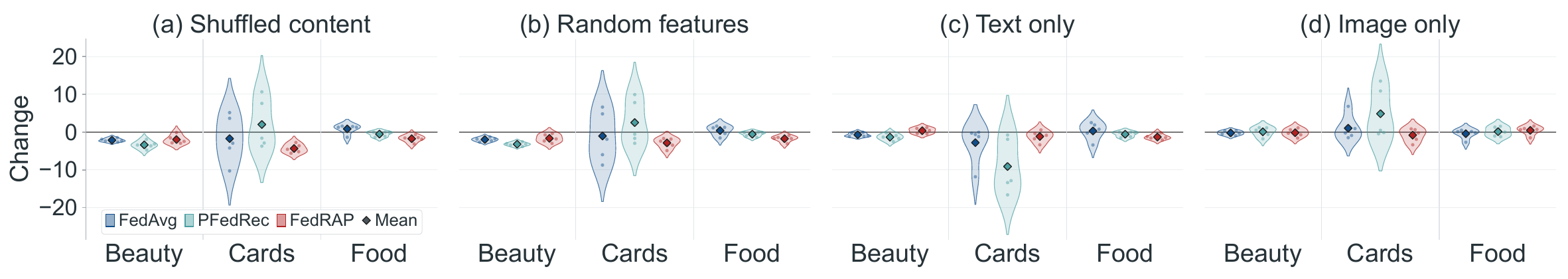}
  \caption{Content and modality ablations of SERA as NDCG@50 changes in percent from the full model in each setting.
  Each control is trained separately.
  (a) Shuffled content and (b) random features usually reduce mean accuracy.
  (c) Removing images causes more consistent losses than (d) removing text.}
  \label{fig:content-ablation}
\end{figure*}

\subsubsection{Residual and Content Contributions}
Fig.~\ref{fig:content-gains} compares SERA with separately trained zero-residual and shuffled-content controls.
The zero-residual control uses collaborative scores alone.
Shuffling keeps the residual computation unchanged, so this control tests whether each item needs its own features.
SERA achieves positive mean gains with FedRAP over both controls on every displayed dataset, while some gains with FedAvg and PFedRec are near zero or negative.

\begin{figure}[!t]
  \centering
  \includegraphics[width=\columnwidth]{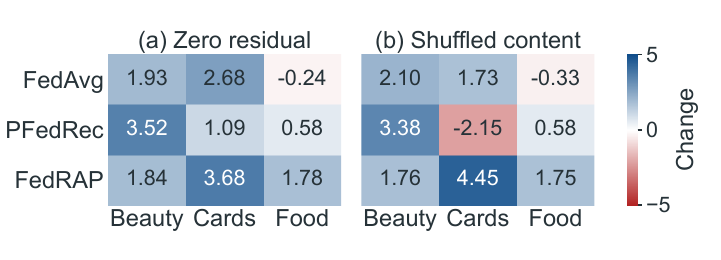}
  \caption{Contributions of SERA's semantic residual scores and correct item features to NDCG@50 in percent.
  Gains are mean differences across five seeds between full SERA and separately trained controls using (a) collaborative scores alone or (b) shuffled item content.
  Mean gains are positive for all displayed FedRAP configurations and vary in sign for other backbones.}
  \label{fig:content-gains}
\end{figure}

\subsubsection{User-Level Ranking Changes}
Fig.~\ref{fig:user-effects} holds each trained model fixed and compares users' NDCG@50 with and without its semantic residual scores.
More users improve than decline.
An item overtakes a competitor when their residual difference exceeds their collaborative score gap in Eqs.~\eqref{eq:score} and \eqref{eq:pairgap}.
Pairwise ranking changes can improve accuracy by promoting the test item within the top 50 or into it.
By Corollary~\ref{cor:rank-range}, NDCG@50 stays fixed for a user when the test item has only one possible final rank or can only rank below the top 50.

\begin{figure}[!t]
  \centering
  \includegraphics[width=\columnwidth]{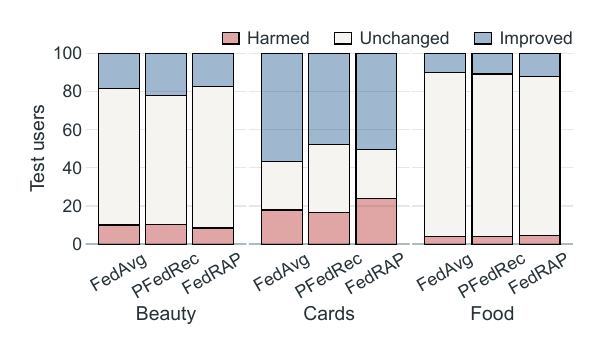}
  \caption{Percentages of Improved, Unchanged, and Harmed users, whose NDCG@50 rises, stays, or falls when SERA adds its semantic residual scores.
  More users benefit than decline in every configuration.
  The largest group is \textit{Unchanged} on Beauty and Food and \textit{Improved} on Cards.
  Ranks outside the top 50 have zero NDCG, so users whose test items stay there are \textit{Unchanged}.}
  \label{fig:user-effects}
\end{figure}

\subsection{Shared Projector Learning}
\label{sec:projector-learning}

\subsubsection{Bias and Weight Learning}
Fig.~\ref{fig:projector-learning} measures the gain of bias learning over a fixed projector and the added gain of weight learning.
Bias learning usually improves accuracy.
Weight learning often lowers it, except for FedRAP on Cards.

The frozen features lie in a narrow cone~\cite{liang2022mind}, and a fixed projector in Eq.~\eqref{eq:basis} keeps all items close to their mean direction.
Every item's residual gradient updates the same bias vector, so the shared biases can remove this common direction and spread the representations apart.
The weights far outnumber the biases and must fit the sparse feedback in Table~\ref{tab:datasets}, so their updates rarely improve on the bias shift.
Bias-only tuning likewise rivals full tuning of pretrained transformers on small data~\cite{benzaken2022bitfit}.
Eqs.~\eqref{eq:query} and \eqref{eq:compat} use these representations, so bias updates change all users' semantic residual scores at once.

\begin{figure}[!t]
  \centering
  \includegraphics[width=\columnwidth]{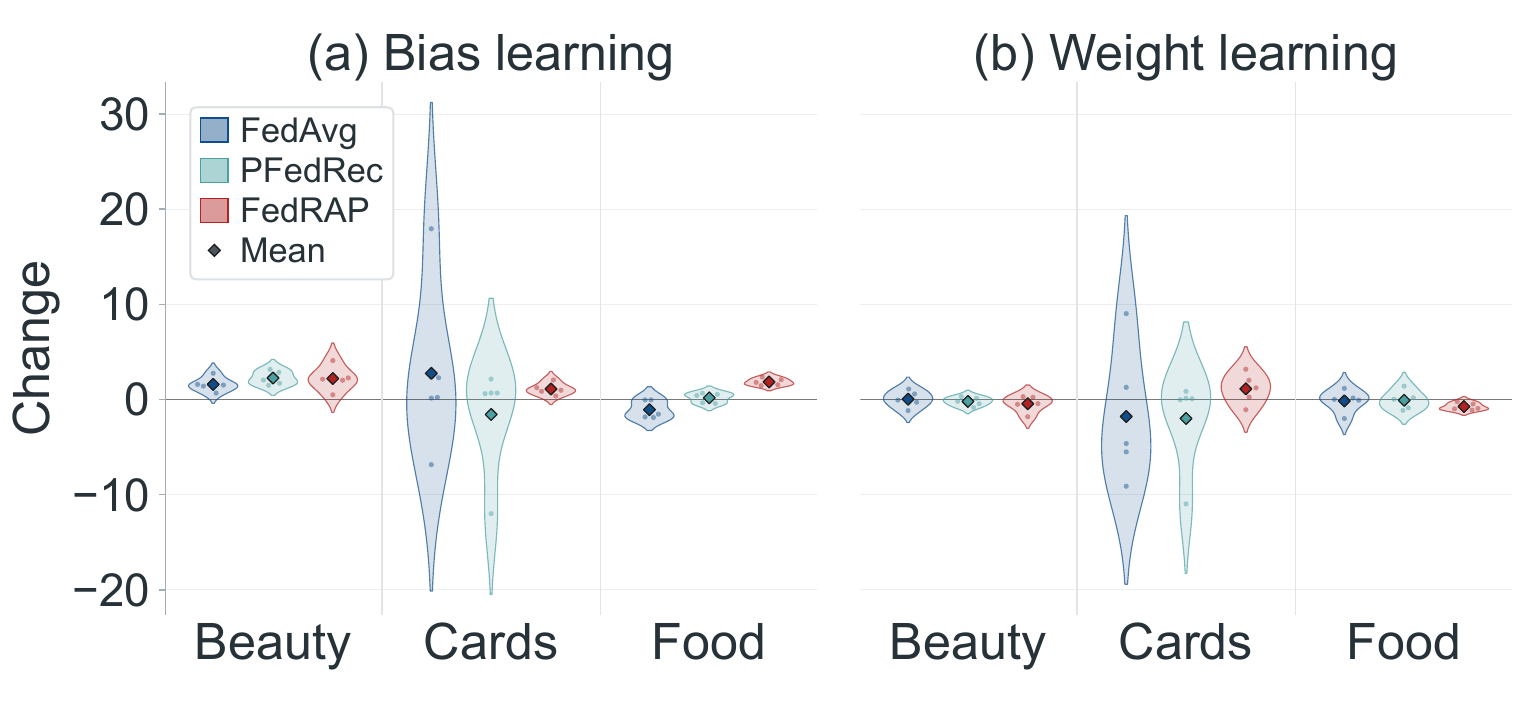}
  \caption{Contributions of SERA's shared biases and projection weights to NDCG@50 in percent.
  Differences are (a) bias learning minus a fixed projector and (b) full training minus bias learning alone. Bias learning usually improves accuracy, while additional weight learning often lowers it.}
  \label{fig:projector-learning}
\end{figure}

\subsubsection{Feature Centering}
Fig.~\ref{fig:bias-centering} tests whether subtracting the mean item feature reduces SERA's gain from bias learning.
Centering removes most of this gain on Food, where SERA uses a linear projector.
On Beauty and Cards, most of the gain remains in SERA's multilayer projectors, whose hidden-layer biases continue to adjust the content representations after input centering, altering both queries and candidate similarities.

\begin{figure}[!t]
  \centering
  \includegraphics[width=\columnwidth]{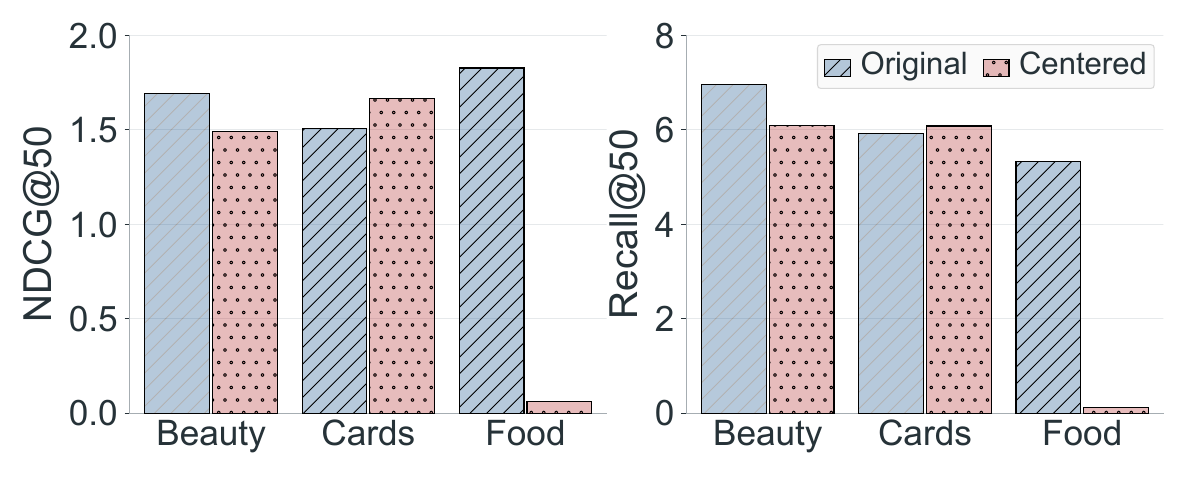}
  \caption{Effect of feature centering on SERA's bias gains in NDCG@50 and Recall@50 in percent. 
  Gains are learned-bias minus fixed-bias scores for FedRAP with fixed projection weights over 15 paired seeds. 
  Centering subtracts the feature mean over the item set. Gains persist on Beauty and Cards with multilayer projectors and shrink on Food with a linear one.}
  \label{fig:bias-centering}
\end{figure}

\subsection{Collaborative Training and Content Learning}
\label{sec:collaborative-content}

SERA's projector learns from scores whose collaborative part changes in training.
Fig.~\ref{fig:training-effects} measures the accuracy difference between intact and shuffled content with collaborative parameters fixed at initialization or updated during training.
Continued training usually reduces these gains and makes their means negative in some FedAvg and PFedRec configurations, which matches the weaker gains in Fig.~\ref{fig:content-gains}.
SERA's mean gains stay positive with FedRAP on all displayed datasets.
Collaborative training affects content learning through the pairwise loss in Eq.~\eqref{eq:clientrisk}, which becomes less sensitive to score changes as the margin between observed items and sampled negatives grows.
Collaborative updates therefore change which training pairs still give the projector large gradients.
Through Eqs.~\eqref{eq:weights}, \eqref{eq:residual}, and \eqref{eq:pairgap}, they also reweight history items, rescale semantic residual scores, and change the gaps that these scores must overcome.
The comparison measures their combined effect on content gains.
The stop-gradient in Eq.~\eqref{eq:score} only blocks residual gradients from flowing into the backbone.

\begin{figure}[!t]
  \centering
  \includegraphics[width=\columnwidth]{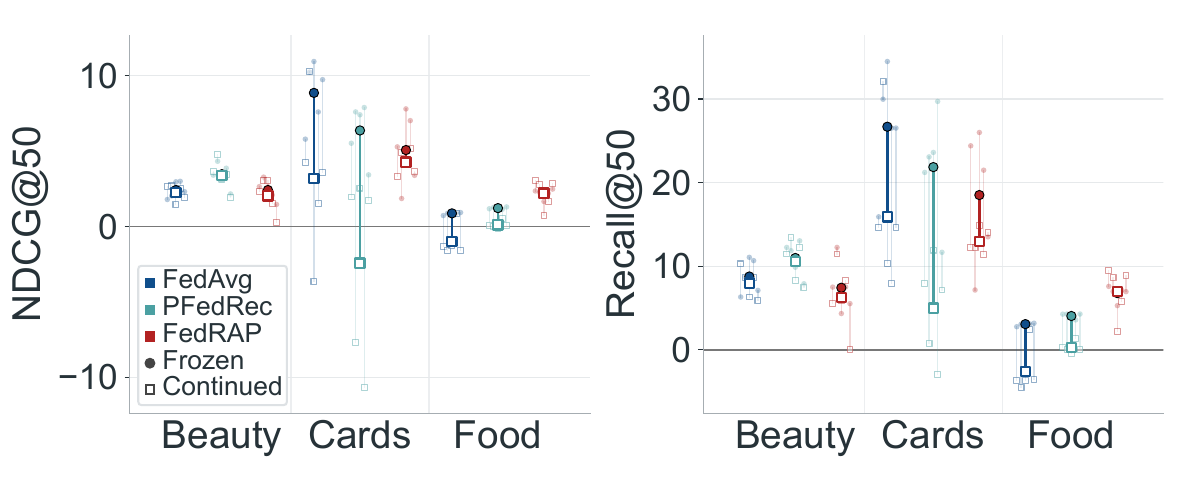}
  \caption{Effect of collaborative training on SERA's content gains in NDCG@50 and Recall@50 in percent.
  Gains are intact minus shuffled content, and the content projector learns in both conditions.
  Fixing collaborative parameters at initialization usually yields larger gains than continued training does.}
  \label{fig:training-effects}
\end{figure}

\subsection{Residual Budget and Residual Function}
\label{sec:residual-control}

\subsubsection{Residual Budget}
Figs.~\ref{fig:bound-ndcg} and \ref{fig:budget-users} vary the budget with the trained model fixed.
Their axes use the ratio $\kappa/\kappa_{\mathrm{train}}$, where $\kappa_{\mathrm{train}}$ is the training budget.
Gains initially rise, then plateau or decline in several configurations as the harmed share increases.
By Proposition~\ref{prop:rank-path}, as the budget grows, the test item passes some competitors and is passed by others, so its rank can rise and then fall. The final ranking approaches the ranking by similarity.
Pairwise ranking changes that push the test item down within the top 50 or out of it lower its NDCG.
At large budgets, FedRAP on Cards improves more users than it harms but lowers mean NDCG.
NDCG weights higher ranks more, so a few large losses can outweigh many smaller gains.

\begin{figure}[!t]
  \centering
  \includegraphics[width=\columnwidth]{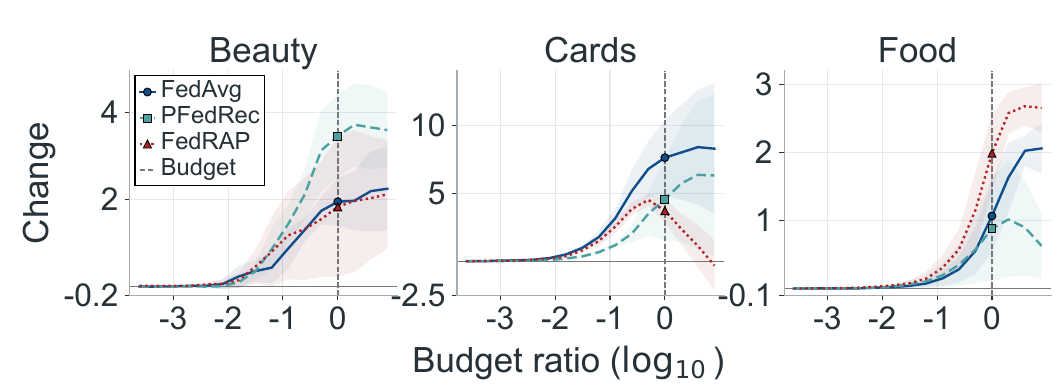}
  \caption{Effect of SERA's residual budget on NDCG@50 changes in percent.
  Only the budget varies after training, and changes are relative to zero residual. The horizontal axis is $\log_{10}(\kappa/\kappa_{\mathrm{train}})$, so zero retains the trained budget.
  Shading shows one standard deviation across five seeds. Accuracy initially improves and later plateaus or declines in several displayed configurations.}
  \label{fig:bound-ndcg}
\end{figure}

\begin{figure}[!t]
  \centering
  \includegraphics[width=\columnwidth]{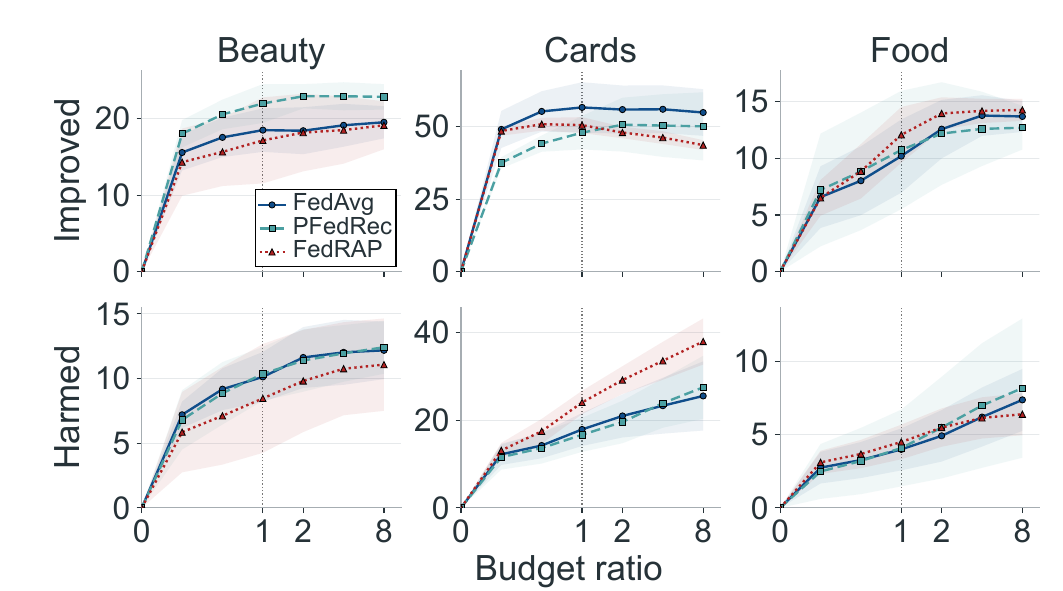}
  \caption{User proportions in percent for NDCG@50 gains and losses across SERA's residual budgets. The horizontal axis is the budget ratio $\kappa/\kappa_{\mathrm{train}}$, and changes are relative to zero residual. Shading shows one standard deviation across five seeds. The harmed share rises as the improved share often plateaus.}
  \label{fig:budget-users}
\end{figure}

\subsubsection{Residual Function}
Within the residual bound, the residual function determines how similarity differences become score differences.
Fig.~\ref{fig:shape-comparison} compares separately trained tanh residuals with linear residuals $\kappa\sigma_u\tanh(1/\tau_r)\,c_{ui}$, which share the residual bound in Eq.~\eqref{eq:bound}.
Mean accuracy is similar on Beauty, while the linear residual is better with all three backbones on Food.
Separate training also gives each function its own representations and training trajectory.
Both residual functions preserve the order by similarity for fixed representations.
At the same residual bound, the linear residual scales all similarity differences equally, and the slope of Eq.~\eqref{eq:shape} varies with similarity.
The two functions thus give different residual differences and change the ranking of different pairs.

\begin{figure}[!t]
  \centering
  \includegraphics[width=\columnwidth]{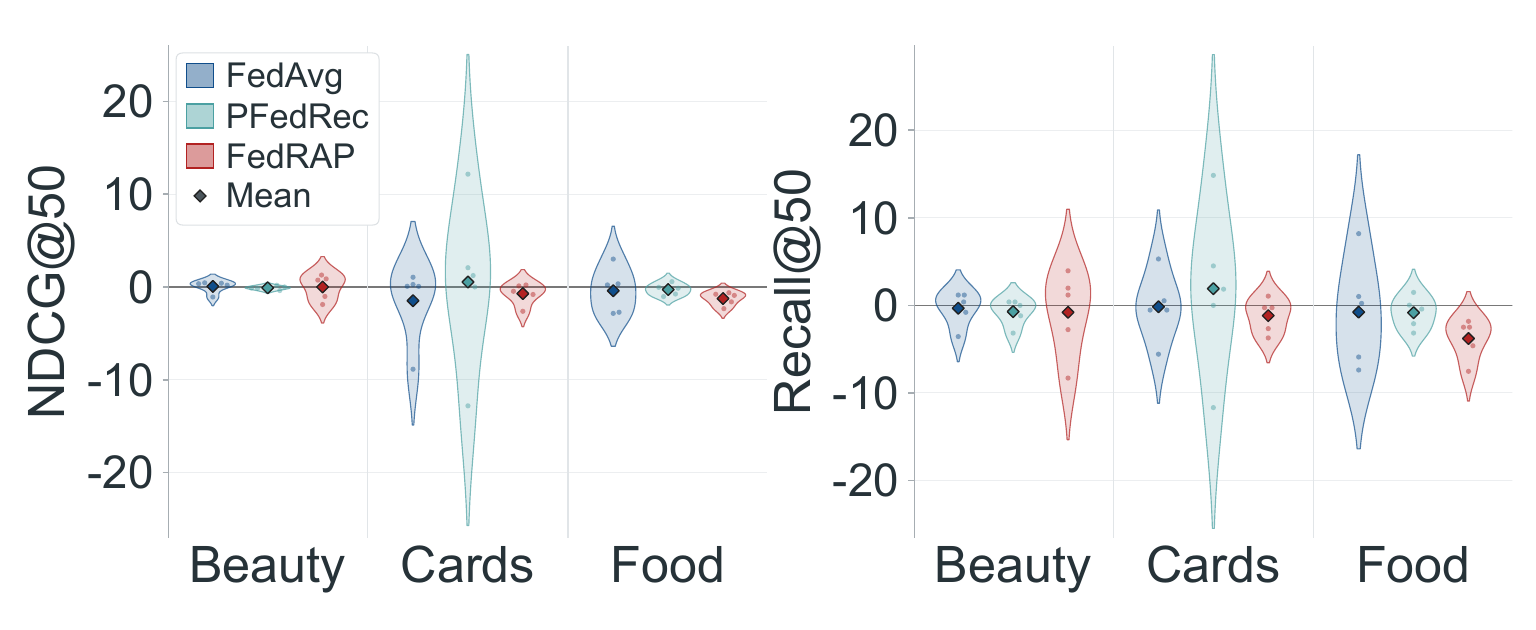}
  \caption{SERA's nonlinear versus linear residuals in NDCG@50 and Recall@50 in percent. Differences are tanh minus linear residuals, each trained separately with the same ratio of residual bound to user-adaptive residual scale. Linear residuals usually have higher mean accuracy, notably on Food.}
  \label{fig:shape-comparison}
\end{figure}

\subsubsection{Residual Temperature}
The temperature $\tau_r$ in Eq.~\eqref{eq:shape} controls the saturation of SERA's nonlinear residual.
Fig.~\ref{fig:temperature-response} compares models trained at different temperatures against a linear residual at matched residual bounds in Eq.~\eqref{eq:bound}.
At the lowest temperature, SERA falls below the linear residual in every displayed configuration.
Higher temperatures generally reduce these mean deficits.
Lower temperatures drive the shape term in Eq.~\eqref{eq:shape} toward $\pm1$, so highly similar items receive nearly equal semantic residual scores.
Higher temperatures reduce saturation and bring the tanh residual closer to the linear residual at the matched residual bound in Eq.~\eqref{eq:bound}.

\begin{figure}[!t]
  \centering
  \includegraphics[width=\columnwidth]{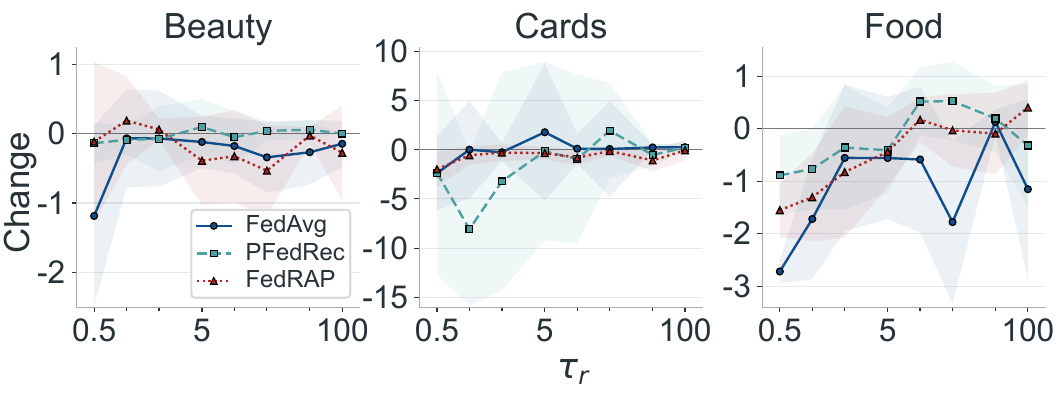}
  \caption{SERA's residual temperature and NDCG@50 differences from a linear residual in percent. All conditions are trained separately with the same ratio $R_u/\sigma_u$. Shading shows one standard deviation across five seeds. Mean accuracy differences generally shrink toward zero at higher temperatures.}
  \label{fig:temperature-response}
\end{figure}

\subsection{Privacy-Utility Trade-Off}
\label{sec:privacy-noise}
In Fig.~\ref{fig:privacy-noise}, each client clips its accumulated gradient $\mathbf{G}_u^t$ of Algorithm~\ref{alg:fedsera} and adds independent Gaussian noise to all $Md$ entries before weighted aggregation.
The horizontal axis gives the target signal-to-noise ratio (SNR) used to set a fixed noise multiplier before training.
Accuracy falls as the target SNR decreases and then levels off over the evaluated noise levels.
The accounting value $\varepsilon$ reaches 10 at a higher target SNR on Cards than on Beauty and at backbone-specific targets on Food. By then, accuracy has mostly leveled off.

Noise in the uploaded gradients disrupts how the shared projector learns semantic residual scores.
Through the query construction in Eq.~\eqref{eq:query} and the candidate comparison in Eq.~\eqref{eq:compat}, this disturbance changes which items receive larger residual scores.
The stop-gradient in Eq.~\eqref{eq:score} keeps the noisy residual updates out of collaborative training, while local histories and collaborative scores continue to personalize each query.
Eq.~\eqref{eq:bound} caps each semantic residual score even when the projector is poorly learned.
Corollary~\ref{cor:certificate} shows that pairs retain their collaborative order when the score gap exceeds twice this cap.
The noise level affects how accurately SERA reorders the remaining competitors, while the residual budget limits the collaborative score gaps these scores can overcome.

\begin{figure}[!t]
  \centering
  \includegraphics[width=\columnwidth]{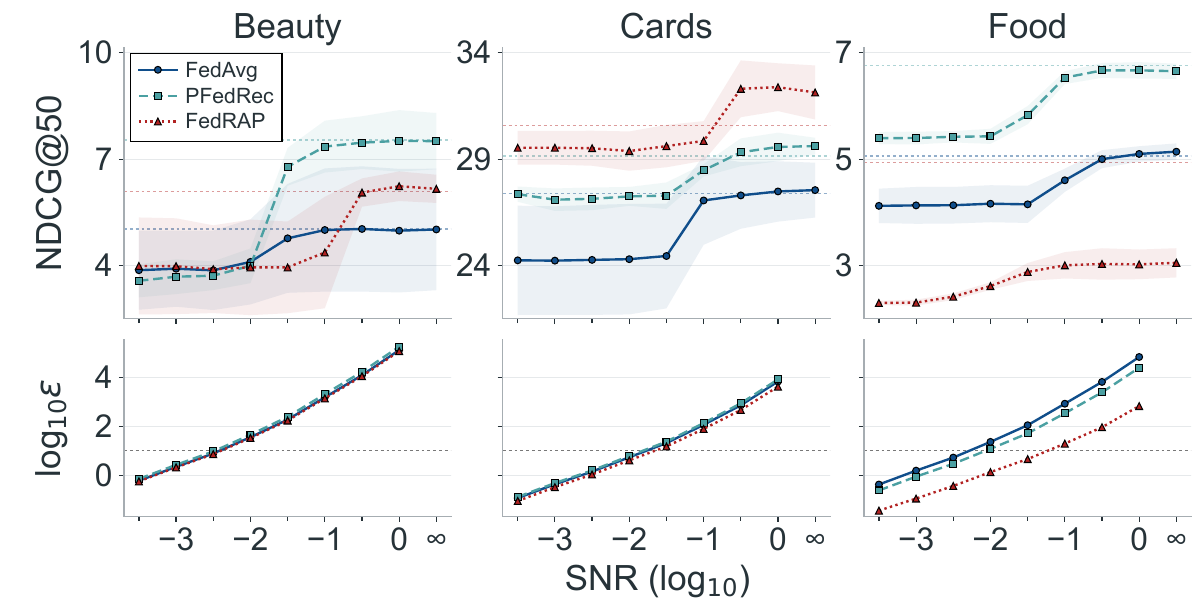}
  \caption{SERA's privacy-utility trade-off with clipped, noised uploads. The horizontal axis gives target SNR.
  Top: test NDCG@50 in percent over the last four rounds with one standard deviation across seeds. Dashed lines omit clipping and noise, and SNR $\infty$ is clipping alone.
  Bottom: $\log_{10}\varepsilon$ at $\delta=10^{-5}$ and a dotted line at $\varepsilon=10$. These conditional RDP accounting values do not establish differential privacy for the full training procedure.}
  \label{fig:privacy-noise}
\end{figure}

\section{Conclusion}
\label{sec:conclusion}
SERA fuses item content with a FedRS at the score level, adding semantic residual scores that are learned from local histories and capped at a user-adaptive residual bound.
Each rank change is then a pairwise ranking change in which a residual difference overcomes a collaborative score gap, and each pair changes ranking at most once as the budget grows.
For any content, an item outside the collaborative top-$K$ list can enter the top $K$ only if its collaborative boundary gap is at most twice the residual bound.
Larger budgets let content cross wider gaps but can lower accuracy by moving competitors above the test items.
SERA improves all three FedRSs on all six datasets without adding personalized parameters.
Federated semantic residuals thus bring content-based personalization to existing FedRSs with explicit control of content reach.

\bibliographystyle{IEEEtran}
\bibliography{references}

\begin{thebibliography}{10}
\providecommand{\url}[1]{#1}
\csname url@samestyle\endcsname
\providecommand{\newblock}{\relax}
\providecommand{\bibinfo}[2]{#2}
\providecommand{\BIBentrySTDinterwordspacing}{\spaceskip=0pt\relax}
\providecommand{\BIBentryALTinterwordstretchfactor}{4}
\providecommand{\BIBentryALTinterwordspacing}{\spaceskip=\fontdimen2\font plus
\BIBentryALTinterwordstretchfactor\fontdimen3\font minus
  \fontdimen4\font\relax}
\providecommand{\BIBforeignlanguage}[2]{{%
\expandafter\ifx\csname l@#1\endcsname\relax
\typeout{** WARNING: IEEEtran.bst: No hyphenation pattern has been}%
\typeout{** loaded for the language `#1'. Using the pattern for}%
\typeout{** the default language instead.}%
\else
\language=\csname l@#1\endcsname
\fi
#2}}
\providecommand{\BIBdecl}{\relax}
\BIBdecl

\bibitem{zhang2025personalized}
C.~Zhang, G.~Long, Z.~Zhang, Z.~Li, H.~Zhang, Q.~Yang, and B.~Yang,
  ``Personalized recommendation models in federated settings: A survey,''
  \emph{IEEE Trans. Knowl. Data Eng.}, vol.~37, no.~11, pp. 6562--6581, 2025.

\bibitem{ammad2019federated}
M.~{Ammad-ud-din}, E.~Ivannikova, S.~A. Khan, W.~Oyomno, Q.~Fu, K.~E. Tan, and
  A.~Flanagan, ``Federated collaborative filtering for privacy-preserving
  personalized recommendation system,'' \emph{arXiv preprint arXiv:1901.09888},
  2019.

\bibitem{koren2009matrix}
Y.~Koren, R.~Bell, and C.~Volinsky, ``Matrix factorization techniques for
  recommender systems,'' \emph{Computer}, vol.~42, no.~8, pp. 30--37, 2009.

\bibitem{li2024fedrap}
\BIBentryALTinterwordspacing
Z.~Li, G.~Long, and T.~Zhou, ``Federated recommendation with additive
  personalization,'' in \emph{Proc. Int. Conf. Learn. Represent.}, 2024.
  [Online]. Available: \url{https://openreview.net/forum?id=xkXdE81mOK}
\BIBentrySTDinterwordspacing

\bibitem{li2024fedcore}
Z.~Li, X.~Wu, W.~Pan, Y.~Ding, Z.~Wu, S.~Tan, Q.~Xu, Q.~Yang, and Z.~Ming,
  ``{FedCORE}: Federated learning for cross-organization recommendation
  ecosystem,'' \emph{IEEE Trans. Knowl. Data Eng.}, vol.~36, no.~8, pp.
  3817--3831, 2024.

\bibitem{zhang2024ifedrec}
C.~Zhang, G.~Long, T.~Zhou, Z.~Zhang, P.~Yan, and B.~Yang, ``When federated
  recommendation meets cold-start problem: Separating item attributes and user
  interactions,'' in \emph{Proc. ACM Web Conf.}, 2024, pp. 3632--3642.

\bibitem{du2018personalized}
X.~Du, H.~Yin, L.~Chen, Y.~Wang, Y.~Yang, and X.~Zhou, ``Personalized video
  recommendation using rich contents from videos,'' \emph{IEEE Trans. Knowl.
  Data Eng.}, vol.~32, no.~3, pp. 492--505, 2020.

\bibitem{zhang2023beyond}
X.~Zhang, B.~Xu, F.~Ma, C.~Li, L.~Yang, and H.~Lin, ``Beyond co-occurrence:
  Multi-modal session-based recommendation,'' \emph{IEEE Trans. Knowl. Data
  Eng.}, vol.~36, no.~4, pp. 1450--1462, 2024.

\bibitem{xu2021multi}
Y.~Xu, L.~Zhu, Z.~Cheng, J.~Li, Z.~Zhang, and H.~Zhang, ``Multi-modal discrete
  collaborative filtering for efficient cold-start recommendation,'' \emph{IEEE
  Trans. Knowl. Data Eng.}, vol.~35, no.~1, pp. 741--755, 2023.

\bibitem{zhang2021lattice}
J.~Zhang, Y.~Zhu, Q.~Liu, S.~Wu, S.~Wang, and L.~Wang, ``Mining latent
  structures for multimedia recommendation,'' in \emph{Proc. ACM Int. Conf.
  Multimedia}, 2021, pp. 3872--3880.

\bibitem{feng2024robust}
C.~Feng, D.~Feng, G.~Huang, Z.~Liu, Z.~Wang, and X.-G. Xia, ``Robust
  privacy-preserving recommendation systems driven by multimodal federated
  learning,'' \emph{IEEE Trans. Neural Netw. Learn. Syst.}, vol.~36, no.~5, pp.
  8896--8910, 2025.

\bibitem{zhou2026comprehensive}
H.~Zhou, X.~Zhou, Z.~Zeng, L.~Zhang, and C.~Miao, ``A comprehensive survey on
  multimodal recommender systems: Taxonomy, evaluation, and future
  directions,'' \emph{IEEE Trans. Pattern Anal. Mach. Intell.}, 2026.

\bibitem{li2024fedvlr}
Z.~Li, G.~Long, J.~Jiang, C.~Zhang, and Q.~Yang, ``Federated
  vision-language-recommendation with personalized fusion,'' in \emph{Proc.
  AAAI Conf. Artif. Intell.}, vol.~40, no.~28, 2026, pp. 23\,337--23\,345.

\bibitem{zhao2024fusion}
F.~Zhao, C.~Zhang, and B.~Geng, ``Deep multimodal data fusion,'' \emph{ACM
  Comput. Surv.}, vol.~56, no.~9, 2024, {Art. no.} 216.

\bibitem{li2024towards}
G.~Li, X.~Ding, L.~Yuan, L.~Zhang, and Q.~Rong, ``Towards resource-efficient
  and secure federated multimedia recommendation,'' in \emph{Proc. IEEE Int.
  Conf. Acoust., Speech Signal Process.}, 2024, pp. 5515--5519.

\bibitem{zhang2026multimodal}
C.~Zhang, W.~Zhang, G.~Long, Z.~Xue, R.~Xia, and B.~Yang, ``Multimodal-enhanced
  federated recommendation: A group-wise fusion approach,'' in \emph{Proc. ACM
  Web Conf.}, 2026, pp. 6597--6608.

\bibitem{mcmahan2017communication}
H.~B. McMahan, E.~Moore, D.~Ramage, S.~Hampson, and B.~Ag{\"u}era~y Arcas,
  ``Communication-efficient learning of deep networks from decentralized
  data,'' in \emph{Proc. Int. Conf. Artif. Intell. Statist.}, ser. PMLR,
  vol.~54, 2017, pp. 1273--1282.

\bibitem{pillutla2022partial}
K.~Pillutla, K.~Malik, A.-R. Mohamed, M.~Rabbat, M.~Sanjabi, and L.~Xiao,
  ``Federated learning with partial model personalization,'' in \emph{Proc.
  Int. Conf. Mach. Learn.}, ser. PMLR, vol. 162, 2022, pp. 17\,716--17\,758.

\bibitem{perifanis2022federated}
V.~Perifanis and P.~S. Efraimidis, ``Federated neural collaborative
  filtering,'' \emph{Knowl.-Based Syst.}, vol. 242, 2022, {Art. no.} 108441.

\bibitem{kairouz2021advances}
P.~Kairouz \emph{et~al.}, ``Advances and open problems in federated learning,''
  \emph{Found. Trends Mach. Learn.}, vol.~14, no. 1--2, pp. 1--210, 2021.

\bibitem{collins2021fedrep}
L.~Collins, H.~Hassani, A.~Mokhtari, and S.~Shakkottai, ``Exploiting shared
  representations for personalized federated learning,'' in \emph{Proc. Int.
  Conf. Mach. Learn.}, ser. PMLR, vol. 139, 2021, pp. 2089--2099.

\bibitem{zhang2023dual}
C.~Zhang, G.~Long, T.~Zhou, P.~Yan, Z.~Zhang, C.~Zhang, and B.~Yang, ``Dual
  personalization on federated recommendation,'' in \emph{Proc. Int. Joint
  Conf. Artif. Intell.}, 2023, pp. 4558--4566.

\bibitem{zhang2024gpfedrec}
C.~Zhang, G.~Long, T.~Zhou, Z.~Zhang, P.~Yan, and B.~Yang, ``{GPFedRec}:
  Graph-guided personalization for federated recommendation,'' in \emph{Proc.
  ACM SIGKDD Int. Conf. Knowl. Discovery Data Mining}, 2024, pp. 4131--4142.

\bibitem{yuan2024hetefedrec}
W.~Yuan, L.~Qu, L.~Cui, Y.~Tong, X.~Zhou, and H.~Yin, ``{HeteFedRec}: Federated
  recommender systems with model heterogeneity,'' in \emph{Proc. IEEE Int.
  Conf. Data Eng.}, 2024, pp. 1324--1337.

\bibitem{wei2019mmgcn}
Y.~Wei, X.~Wang, L.~Nie, X.~He, R.~Hong, and T.-S. Chua, ``{MMGCN}: Multi-modal
  graph convolution network for personalized recommendation of micro-video,''
  in \emph{Proc. ACM Int. Conf. Multimedia}, 2019, pp. 1437--1445.

\bibitem{he2016vbpr}
R.~He and J.~McAuley, ``{VBPR}: Visual {Bayesian} personalized ranking from
  implicit feedback,'' in \emph{Proc. AAAI Conf. Artif. Intell.}, vol.~30,
  no.~1, 2016, pp. 144--150.

\bibitem{chen2017attentive}
J.~Chen, H.~Zhang, X.~He, L.~Nie, W.~Liu, and T.-S. Chua, ``Attentive
  collaborative filtering: Multimedia recommendation with item- and
  component-level attention,'' in \emph{Proc. Int. ACM SIGIR Conf. Res.
  Develop. Inf. Retrieval}, 2017, pp. 335--344.

\bibitem{he2018nais}
X.~He, Z.~He, J.~Song, Z.~Liu, Y.-G. Jiang, and T.-S. Chua, ``{NAIS}: Neural
  attentive item similarity model for recommendation,'' \emph{IEEE Trans.
  Knowl. Data Eng.}, vol.~30, no.~12, pp. 2354--2366, 2018.

\bibitem{wang2025mmcn}
Z.~Wang, Y.~Feng, X.~Zhang, R.~Yang, and B.~Du, ``Multi-modal correction
  network for recommendation,'' \emph{IEEE Trans. Knowl. Data Eng.}, vol.~37,
  no.~2, pp. 810--822, 2025.

\bibitem{zhang2023micro}
J.~Zhang, Y.~Zhu, Q.~Liu, M.~Zhang, S.~Wu, and L.~Wang, ``Latent structure
  mining with contrastive modality fusion for multimedia recommendation,''
  \emph{IEEE Trans. Knowl. Data Eng.}, vol.~35, no.~9, pp. 9154--9167, 2023.

\bibitem{zhou2023freedom}
X.~Zhou and Z.~Shen, ``A tale of two graphs: Freezing and denoising graph
  structures for multimodal recommendation,'' in \emph{Proc. ACM Int. Conf.
  Multimedia}, 2023, pp. 935--943.

\bibitem{zhou2023bootstrap}
X.~Zhou, H.~Zhou, Y.~Liu, Z.~Zeng, C.~Miao, P.~Wang, Y.~You, and F.~Jiang,
  ``Bootstrap latent representations for multi-modal recommendation,'' in
  \emph{Proc. ACM Web Conf.}, 2023, pp. 845--854.

\bibitem{wei2023mmssl}
W.~Wei, C.~Huang, L.~Xia, and C.~Zhang, ``Multi-modal self-supervised learning
  for recommendation,'' in \emph{Proc. ACM Web Conf.}, 2023, pp. 790--800.

\bibitem{volkovs2017dropoutnet}
M.~Volkovs, G.~Yu, and T.~Poutanen, ``{DropoutNet}: Addressing cold start in
  recommender systems,'' in \emph{Proc. Adv. Neural Inf. Process. Syst.},
  vol.~30, 2017, pp. 4957--4966.

\bibitem{wei2021clcrec}
Y.~Wei, X.~Wang, Q.~Li, L.~Nie, Y.~Li, X.~Li, and T.-S. Chua, ``Contrastive
  learning for cold-start recommendation,'' in \emph{Proc. ACM Int. Conf.
  Multimedia}, 2021, pp. 5382--5390.

\bibitem{zhu2025pckd}
Z.~Zhu and W.~Zhang, ``Preference-consistent knowledge distillation for
  recommender system,'' \emph{IEEE Trans. Knowl. Data Eng.}, vol.~37, no.~4,
  pp. 2071--2084, 2025.

\bibitem{li2020stabilizing}
R.~Li, Z.~Qin, X.~Wang, S.~J. Chen, and D.~Metzler, ``Stabilizing neural search
  ranking models,'' in \emph{Proc. ACM Web Conf.}, 2020, pp. 2725--2732.

\bibitem{wu2022certified}
C.~Wu, R.~Zhang, J.~Guo, W.~Chen, Y.~Fan, M.~de~Rijke, and X.~Cheng,
  ``Certified robustness to word substitution ranking attack for neural ranking
  models,'' in \emph{Proc. ACM Int. Conf. Inf. Knowl. Manage.}, 2022, pp.
  2128--2137.

\bibitem{dean2020recommendations}
S.~Dean, S.~Rich, and B.~Recht, ``Recommendations and user agency: The
  reachability of collaboratively-filtered information,'' in \emph{Proc. Conf.
  Fairness, Accountability, Transparency}, 2020, pp. 436--445.

\bibitem{curmei2021quantifying}
M.~Curmei, S.~Dean, and B.~Recht, ``Quantifying availability and discovery in
  recommender systems via stochastic reachability,'' in \emph{Proc. Int. Conf.
  Mach. Learn.}, ser. PMLR, vol. 139, 2021, pp. 2265--2275.

\bibitem{li2026survey}
\BIBentryALTinterwordspacing
Z.~Li, G.~Long, C.~Zhang, H.~Zhang, J.~Jiang, and C.~Zhang, ``A survey of
  personalized federated foundation models for privacy-preserving
  recommendation,'' in \emph{Proc. Int. Joint Conf. Artif. Intell.}, 2026.
  [Online]. Available: \url{https://arxiv.org/abs/2506.11563}
\BIBentrySTDinterwordspacing

\bibitem{rendle2009bpr}
S.~Rendle, C.~Freudenthaler, Z.~Gantner, and L.~Schmidt-Thieme, ``{BPR}:
  Bayesian personalized ranking from implicit feedback,'' in \emph{Proc. Conf.
  Uncertainty Artif. Intell.}, 2009, pp. 452--461.

\bibitem{bottou2018optimization}
L.~Bottou, F.~E. Curtis, and J.~Nocedal, ``Optimization methods for large-scale
  machine learning,'' \emph{SIAM Rev.}, vol.~60, no.~2, pp. 223--311, 2018.

\bibitem{hou2024bridging}
Y.~Hou, J.~Li, X.~Fu, Z.~He, A.~Yan, X.~Chen, and J.~McAuley, ``Bridging
  language and items for retrieval and recommendation: Benchmarking {LLMs} as
  semantic encoders,'' in \emph{Proc. Annu. Meeting Assoc. Comput.
  Linguistics}, 2026, pp. 3251--3265.

\bibitem{harper2015movielens}
F.~M. Harper and J.~A. Konstan, ``The {MovieLens} datasets: History and
  context,'' \emph{ACM Trans. Interact. Intell. Syst.}, vol.~5, no.~4, pp.
  1--19, 2016.

\bibitem{zhang2024ninerec}
J.~Zhang, Y.~Cheng, Y.~Ni, Y.~Pan, Z.~Yuan, J.~Fu, Y.~Li, J.~Wang, and F.~Yuan,
  ``{NineRec}: A benchmark dataset suite for evaluating transferable
  recommendation,'' \emph{IEEE Trans. Pattern Anal. Mach. Intell.}, vol.~47,
  no.~7, pp. 5256--5267, 2025.

\bibitem{he2017neural}
X.~He, L.~Liao, H.~Zhang, L.~Nie, X.~Hu, and T.-S. Chua, ``Neural collaborative
  filtering,'' in \emph{Proc. ACM Web Conf.}, 2017, pp. 173--182.

\bibitem{krichene2020sampled}
W.~Krichene and S.~Rendle, ``On sampled metrics for item recommendation,'' in
  \emph{Proc. ACM SIGKDD Int. Conf. Knowl. Discovery Data Mining}, 2020, pp.
  1748--1757.

\bibitem{jarvelin2002cumulated}
K.~J{\"a}rvelin and J.~Kek{\"a}l{\"a}inen, ``Cumulated gain-based evaluation of
  {IR} techniques,'' \emph{ACM Trans. Inf. Syst.}, vol.~20, no.~4, pp.
  422--446, 2002.

\bibitem{yu2023multiview}
P.~Yu, Z.~Tan, G.~Lu, and B.-K. Bao, ``Multi-view graph convolutional network
  for multimedia recommendation,'' in \emph{Proc. ACM Int. Conf. Multimedia},
  2023, pp. 6576--6585.

\bibitem{yang2025fitmm}
W.~Yang, R.~Zhong, Y.~Chen, S.~Li, H.~Ping, C.~Lu, and P.~Jiang, ``{FITMM}:
  Adaptive frequency-aware multimodal recommendation via information-theoretic
  representation learning,'' in \emph{Proc. ACM Int. Conf. Multimedia}, 2025,
  pp. 6193--6202.

\bibitem{yu2025principal}
P.~Yu, Z.~Tan, G.~Lu, and B.-K. Bao, ``Mind individual information! {Principal}
  graph learning for multimedia recommendation,'' in \emph{Proc. AAAI Conf.
  Artif. Intell.}, vol.~39, no.~12, 2025, pp. 13\,096--13\,105.

\bibitem{li2026msca}
J.~Li and H.~Wang, ``Multi-view semantic contrastive alignment for multimodal
  recommendation,'' in \emph{Proc. ACM Web Conf.}, 2026, pp. 5941--5952.

\bibitem{liang2022mind}
W.~Liang, Y.~Zhang, Y.~Kwon, S.~Yeung, and J.~Zou, ``Mind the gap:
  Understanding the modality gap in multi-modal contrastive representation
  learning,'' in \emph{Proc. Adv. Neural Inf. Process. Syst.}, vol.~35, 2022,
  pp. 17\,612--17\,625.

\bibitem{benzaken2022bitfit}
E.~Ben~Zaken, Y.~Goldberg, and S.~Ravfogel, ``{BitFit}: Simple
  parameter-efficient fine-tuning for transformer-based masked
  language-models,'' in \emph{Proc. Annu. Meeting Assoc. Comput. Linguistics},
  2022, pp. 1--9.

\bibitem{radford2021clip}
A.~Radford, J.~W. Kim, C.~Hallacy, A.~Ramesh, G.~Goh, S.~Agarwal, G.~Sastry,
  A.~Askell, P.~Mishkin, J.~Clark, G.~Krueger, and I.~Sutskever, ``Learning
  transferable visual models from natural language supervision,'' in
  \emph{Proc. Int. Conf. Mach. Learn.}, ser. PMLR, vol. 139, 2021, pp.
  8748--8763.

\bibitem{akiba2019optuna}
T.~Akiba, S.~Sano, T.~Yanase, T.~Ohta, and M.~Koyama, ``{Optuna}: A
  next-generation hyperparameter optimization framework,'' in \emph{Proc. ACM
  SIGKDD Int. Conf. Knowl. Discovery Data Mining}, 2019, pp. 2623--2631.

\bibitem{mironov2017renyi}
I.~Mironov, ``{R{\'e}nyi} differential privacy,'' in \emph{Proc. IEEE Comput.
  Secur. Found. Symp.}, 2017, pp. 263--275.

\bibitem{balle2020hypothesis}
\BIBentryALTinterwordspacing
B.~Balle, G.~Barthe, M.~Gaboardi, J.~Hsu, and T.~Sato, ``Hypothesis testing
  interpretations and {R{\'e}nyi} differential privacy,'' in \emph{Proc. Int.
  Conf. Artif. Intell. Stat.}, ser. PMLR, vol. 108, 2020, pp. 2496--2506.
  [Online]. Available: \url{https://proceedings.mlr.press/v108/balle20a.html}
\BIBentrySTDinterwordspacing

\end{thebibliography}

\clearpage

\appendices

\section{Notation}
\label{app:notation}

Table~\ref{tab:notation} summarizes the symbols used in SERA's formulation and analysis.
For user $u$ and item $i$, the final score satisfies $s(u,i)=\anchor(u,i)+r(u,i)$ at inference, where the semantic residual score $r(u,i)$ acts as a content correction to the collaborative score $\anchor(u,i)$.
The full item set $\mathcal{I}$ defines the user-adaptive residual scale $\sigma_u$, while the candidate set $\mathcal{C}_u$ contains the items ranked for that user.
The residual budget $\kappa$ scales the semantic residual score relative to $\sigma_u$, and the residual bound $R_u$ caps its magnitude for each item.

\begin{table}[!t]
\centering
\caption{Notation and definitions used in the main text and appendices.}
\label{tab:notation}
\setlength{\tabcolsep}{0pt}
\renewcommand{\arraystretch}{1.1}
\newcommand{\notationcolumns}[1]{%
  {\begin{tabularx}{\linewidth}{@{}p{82pt}@{\hspace{6pt}}>{\raggedright\arraybackslash}X@{}}#1\end{tabularx}}%
}
\begin{tabularx}{\columnwidth}{@{}c@{\hspace{2.5pt}}|@{\hspace{2.5pt}}X@{}}
\toprule
Group & \notationcolumns{Symbol & Meaning \\} \\
\midrule
\rotatebox[origin=c]{90}{Users and items} & \notationcolumns{
$u,\ i,j$ & user and item indices \\
$\mathcal{U},\ N$ & user set and its size \\
$\mathcal{I},\ M$ & item set and its size \\
$\mathcal{C}_u,\ M_u$ & candidate set and its size \\
$\mathcal{H}_u$ & user interaction history \\
} \\
\midrule
\rotatebox[origin=c]{90}{Scores and semantic residuals} & \notationcolumns{
$\anchor(u,i),\ r(u,i),\ s(u,i)$ & collaborative, semantic residual, and final scores \\
$\mathbf{z}_i,\ \mathbf{q}_u$ & content representation and query \\
$\beta_{ui}$ & history weight \\
$h_{\phi}$ & shared content projector \\
$\phi,\ |\phi|$ & projector parameters and count \\
$\theta_{\mathrm s},\theta_u$ & shared and personalized FedRS parameters \\
$\Pi(\cdot),\ \varepsilon$ & L2 normalization and positive floor \\
$c_{ui},\ g_{ui}$ & cosine similarity and bounded shape term \\
$\kappa$ & global residual budget \\
$\sigma_u,\ \hat\sigma_u$ & user-adaptive residual scale and estimate \\
$\bar a_u$ & mean collaborative score over $\mathcal{I}$ \\
$v_u,\ \hat v_u$ & standard deviation of collaborative scores over $\mathcal I$ and estimate \\
$R_u$ & residual bound \\
$\tau_r,\ \tau_q$ & residual and attention temperatures \\
} \\
\midrule
\rotatebox[origin=c]{90}{Ranking and score analysis} & \notationcolumns{
$K$ & number of recommended items \\
$\anchor^{(K)}_{u},\ \anchor^{(K+1)}_{u}$ & collaborative scores at ranks $K$ and $K+1$ \\
$\mathcal T_u,\ \tilde{\mathcal T}_u$ & collaborative and final top-$K$ sets \\
$\gamma_{u,ij}$ & collaborative score gap \\
$\gamma^{(K)}_{u,j}$ & collaborative boundary gap of item $j$ \\
$\delta_{u,ij}$ & pairwise shape-term difference \\
$\kappa_{u,ij}$ & pairwise ranking change point \\
$P_i,\ Q_i$ & competitors that $i$ passes or that pass $i$ \\
$\rho_u(i),\ \tilde\rho_u(i)$ & collaborative and final item ranks \\
$n^{+}_{u,j},\ n^{-}_{u,j}$ & candidates above/below $\anchor(u,j)\pm2R_u$ \\
$\kappa^{\mathrm{in}}_{u,j}$ & entry budget of item $j$ \\
$\bar a_u^{\mathcal C},\ \bar g_u^{\mathcal C},\ \bar s_u^{\mathcal C}$ & means over $\mathcal{C}_u$ \\
$\operatorname{sd}_u(\cdot)$ & standard deviation over $\mathcal{C}_u$ \\
$\chi_u$ & scale ratio $\sigma_u/\operatorname{sd}_u(a)$ \\
$w_u$ & effective mixing weight \\
} \\
\midrule
\rotatebox[origin=c]{90}{Federated training} & \notationcolumns{
$\mathcal{B}_u^{+},\ \mathcal{B}_u^{-}$ & positive and negative mini-batches \\
$\sg[\cdot]$ & stop-gradient operator \\
$\mathcal{L}_u(s),\ \mathcal{L}(s)$ & client risk and federated objective \\
$\sigma(\cdot)$ & logistic function \\
$\mathcal{J}(\phi)$ & SERA residual objective \\
$\mathcal{J}_{\bar{\alpha}}(\phi)$ & residual objective with expected client weights \\
$\mathcal{L}_{\mathrm{base}}(\anchor)$ & collaborative training objective \\
$\mathcal{S}_t$ & clients selected in round $t$ \\
$\mathbf{Z}^t,\ \mathbf{G}_u^t$ & content table and its client gradient \\
$\alpha_u,\alpha_{u,t},\bar{\alpha}_u$ & population, round, and expected client weights \\
$\lambda$ & residual loss weight \\
} \\
\bottomrule
\end{tabularx}
\end{table}

\section{Experimental Settings}
\label{app:experimental-comparisons}

\subsection{Training and Evaluation}
Item features and the federated training setup follow FedVLR~\cite{li2024fedvlr}.
A pretrained CLIP \mbox{ViT-B/32} model~\cite{radford2021clip} extracts frozen 512-dimensional text and image features, and the content representations have 64 dimensions.
All clients participate in every round, and each client runs $E=5$ local steps with a batch size of 2{,}048.
Unless specified otherwise below, training stops after 10 consecutive rounds without improvement in validation NDCG@50 or after 100 rounds, and testing uses the best validation round.
The runs in Fig.~\ref{fig:privacy-noise} instead use a fixed number of rounds $T$ without early stopping or checkpoint selection and report the mean test NDCG@50 over rounds $T-3$ to $T$.
Ranking over $\mathcal I$ masks each user's training and validation items.

\subsection{Model Selection}
For the standard comparisons, we tune SERA with the tree-structured Parzen estimator in Optuna~\cite{akiba2019optuna}, using 500 trials for each backbone and dataset.
The search space covers $\kappa\in\{0,0.25,0.5,1,2,3,4,6,8,16\}$, $\tau_r\in\{0.1,0.5,1,2,5\}$, $\tau_q\in\{0.1,0.5,1,2,5,10\}$, $\lambda$ drawn log-uniformly from $[0.01,3]$, and projectors that are linear or have $2$ or $3$ hidden ReLU layers of width $64$, $128$, or $256$.
For FedAvg and PFedRec on ML, the 10 best hyperparameter settings are each trained with five seeds, and the one with the highest mean validation NDCG@50 is kept.
The search uses seed 2024. All reported results except Fig.~\ref{fig:bias-centering} use seeds 2025 to 2029, and the reselection on ML uses only their validation scores.
Table~\ref{tab:knobs} lists the selected hyperparameter values.

\begin{table}[!t]
  \centering
  \caption{Validation-selected SERA hyperparameters, with $\lambda$ rounded to four decimal places. Width and Depth give the hidden width and layer count, and zeros mark a purely linear projector.}
  \label{tab:knobs}
  \small
  \setlength{\tabcolsep}{2pt}
  \renewcommand{\arraystretch}{1.1}
  \begin{tabular*}{\columnwidth}{@{\extracolsep{\fill}}llrrrrrr@{}}
    \toprule
    Backbone & Dataset & $\kappa$ & $\tau_r$ & $\tau_q$ & $\lambda$ & Width & Depth \\
    \midrule
    \multirow{6}{*}{FedAvg} & Beauty & 16 & 1 & 5 & 0.0925 & 256 & 3 \\
     & Cards & 8 & 1 & 10 & 0.0201 & 64 & 2 \\
     & ML & 0.5 & 0.1 & 0.1 & 0.0400 & 64 & 2 \\
     & Dance & 0.5 & 0.1 & 2 & 0.0182 & 128 & 2 \\
     & Food & 3 & 1 & 0.1 & 0.0626 & 0 & 0 \\
     & Movie & 0.5 & 2 & 10 & 0.0114 & 0 & 0 \\
    \midrule
    \multirow{6}{*}{PFedRec} & Beauty & 16 & 1 & 10 & 0.0143 & 0 & 0 \\
     & Cards & 16 & 5 & 5 & 0.2621 & 0 & 0 \\
     & ML & 4 & 2 & 2 & 0.2536 & 128 & 2 \\
     & Dance & 16 & 2 & 10 & 0.0110 & 128 & 3 \\
     & Food & 16 & 5 & 2 & 0.1931 & 0 & 0 \\
     & Movie & 6 & 5 & 1 & 0.0421 & 256 & 3 \\
    \midrule
    \multirow{6}{*}{FedRAP} & Beauty & 16 & 0.5 & 0.5 & 0.6123 & 256 & 3 \\
     & Cards & 6 & 2 & 10 & 0.2786 & 256 & 3 \\
     & ML & 8 & 1 & 2 & 0.0141 & 0 & 0 \\
     & Dance & 16 & 2 & 1 & 0.0409 & 0 & 0 \\
     & Food & 16 & 2 & 2 & 0.0137 & 0 & 0 \\
     & Movie & 16 & 2 & 0.5 & 0.0545 & 0 & 0 \\
    \bottomrule
  \end{tabular*}
\end{table}

\subsection{Computing Environment}
Experiments run on servers with 16 CPU cores, 125\,GB of RAM, and two NVIDIA RTX A5500 GPUs with 24\,GB of memory each, under Red Hat Enterprise Linux 8.10 with Python 3.12, PyTorch 2.7.1, and CUDA Toolkit 12.8.

\subsection{Noisy-Gradient Experiments and Accounting}
\label{app:privacy-accounting}
The experiments in Fig.~\ref{fig:privacy-noise} use full participation and $E=5$ local steps.
Client $u$ clips its accumulated $\mathbf G_u^t$ to Frobenius norm $C$ and adds independent $\mathcal N(0,\nu^2C^2)$ noise to every entry, including untouched rows.
The server weights these messages by the training interaction counts $n_u$ and divides their sum by $\sum_u n_u$ before updating the projector.
The accounting assumes an honest server without client collusion, honest clients without dropout, and ideal real-valued secure aggregation revealing only this sum, the total weight, and the participant list.
Weights and participation are fixed and public; interaction counts and participation identities are not protected.
The simulator processes individual messages and retains diagnostics; it does not implement this secure-aggregation idealization.

For the conditional aggregation mechanism at a common history, zero-out adjacency replaces one user's data-dependent gradient contribution by zero while retaining its weight, participation, and independent noise.
With all other contributions fixed, the sum's sensitivity is at most $C\max_u n_u$, and each noise coordinate has standard deviation $\nu C\lVert\mathbf n\rVert_2$.
Writing $r=\lVert\mathbf n\rVert_2/\max_u n_u$, the accountant conservatively uses $z=\nu\sqrt{r^2-1}$ instead of $\nu r$, discarding the largest client's noise contribution.
The resulting per-round R\'enyi differential privacy (RDP) cost is at most $\alpha/(2z^2)$~\cite{mironov2017renyi}; full participation provides no subsampling amplification.
We convert the sum of these conditional costs using~\cite{balle2020hypothesis}:
\begin{equation}
\varepsilon=\min_{\alpha\in\mathcal A}
\left[\frac{T\alpha}{2z^2}+\ln\frac{\alpha-1}{\alpha}
-\frac{\ln\delta+\ln\alpha}{\alpha-1}\right],
\label{eq:privacy-accounting}
\end{equation}
where $\mathcal A=\{1.01,1.02,\ldots,200.99\}$ and $\delta=10^{-5}$.
The accompanying \texttt{privacy\_accounting.csv} provides the full-precision inputs and the 72 accounting values.

\begin{table}[!t]
\centering
\caption{Fixed configurations for Fig.~\ref{fig:privacy-noise}. $C$, $r$, and $\nu_1$ are rounded to four significant digits; accounting uses the stored full-precision values. Target SNR $s$ sets $\nu=\nu_1/s$.}
\label{tab:privacy-accounting}
\small
\setlength{\tabcolsep}{2pt}
\renewcommand{\arraystretch}{1.05}
\begin{tabular*}{\columnwidth}{@{\extracolsep{\fill}}llrrrr@{}}
\toprule
Backbone & Dataset & $T$ & $C$ & $r$ & $\nu_1$ \\
\midrule
FedAvg & Beauty & 37 & 0.006852 & 2.853 & 0.004479 \\
PFedRec & Beauty & 64 & 0.009684 & 2.853 & 0.005004 \\
FedRAP & Beauty & 88 & 0.8871 & 2.853 & 0.007278 \\
FedAvg & Cards & 77 & 0.004160 & 3.314 & 0.02474 \\
PFedRec & Cards & 70 & 0.1069 & 3.314 & 0.02125 \\
FedRAP & Cards & 42 & 0.08127 & 3.314 & 0.02351 \\
FedAvg & Food & 100 & 0.04754 & 8.117 & 0.003391 \\
PFedRec & Food & 96 & 0.01354 & 8.117 & 0.005567 \\
FedRAP & Food & 91 & 0.0001715 & 8.117 & 0.03665 \\
\bottomrule
\end{tabular*}
\end{table}

These experiments reuse earlier hyperparameter configurations selected by validation NDCG@10.
A noiseless calibration run with seed 2024 on the same data selects $T$ by the highest four-round mean validation NDCG@10 and sets $C$ to the median, over rounds $1$ to $T$, of the median client gradient norm.
Calibration sets $\nu_1$ to match a median predicted SNR of one, using an estimated clipping effect and an approximation to the aggregate noise norm.
Each target $s$ then fixes $\nu=\nu_1/s$ throughout training, as listed in Table~\ref{tab:privacy-accounting}. The target $s$ is not the SNR measured in the noisy run.

The accounting holds the selected configuration fixed across neighboring inputs and includes only the conditional gradient mechanisms above.
It excludes collaborative releases, same-data calibration, and hyperparameter selection.
Unprotected collaborative messages can affect subsequent gradients, so these costs do not establish a privacy bound for the complete gradient-sum sequence or the full training procedure.

\section{Design Component Comparisons}
\label{app:design-comparisons}
\begin{table*}[!t]
  \centering
  \caption{Performance of SERA and three simpler alternatives on Beauty, Cards, and Food in Recall@50 and NDCG@50, reported as means $\pm$ standard deviations in percent across five paired seeds.
  The Scale group gives all users a shared scale, the Gradient group trains the backbone with the residual loss, and the Weighting group weights history items uniformly.
  In each group, the backbone row reports SERA, run with the same five seeds as the alternative below it.
  \textbf{Bold} marks the higher mean in each pair.}
  \label{tab:design-comparisons}
  \small
  \setlength{\tabcolsep}{4pt}
  \renewcommand{\arraystretch}{0.95}
  \begin{tabular}{c|l cc cc cc}
    \toprule
    \multicolumn{2}{c}{\textbf{Method}} & \multicolumn{2}{c}{\textbf{Beauty}} & \multicolumn{2}{c}{\textbf{Cards}} & \multicolumn{2}{c}{\textbf{Food}} \\
    \multicolumn{2}{c}{} & \textbf{Recall@50} & \textbf{NDCG@50} & \textbf{Recall@50} & \textbf{NDCG@50} & \textbf{Recall@50} & \textbf{NDCG@50} \\
    \midrule
    \rowcolor{RoyalBlue!6}
    \cellcolor{white} & FedAvg & \bestresult{21.03 $\pm$ 3.56} & \bestresult{5.87 $\pm$ 0.94} & 68.97 $\pm$ 2.89 & 21.39 $\pm$ 4.96 & \bestresult{12.78 $\pm$ 3.29} & \bestresult{3.69 $\pm$ 1.38} \\
    & \quad w/ shared scale & 20.08 $\pm$ 6.14 & 5.38 $\pm$ 1.70 & \bestresult{72.84 $\pm$ 3.58} & \bestresult{23.81 $\pm$ 3.31} & 11.75 $\pm$ 3.96 & 3.45 $\pm$ 1.29 \\
    \rowcolor{RoyalBlue!6}
    \cellcolor{white} & PFedRec & 24.58 $\pm$ 2.08 & 7.05 $\pm$ 0.70 & 70.40 $\pm$ 6.30 & 24.64 $\pm$ 6.66 & \bestresult{18.83 $\pm$ 1.45} & \bestresult{6.01 $\pm$ 0.63} \\
    & \quad w/ shared scale & \bestresult{24.90 $\pm$ 1.61} & \bestresult{7.23 $\pm$ 0.57} & \bestresult{74.86 $\pm$ 0.91} & \bestresult{29.45 $\pm$ 0.36} & 16.99 $\pm$ 0.54 & 5.38 $\pm$ 0.15 \\
    \rowcolor{RoyalBlue!6}
    \cellcolor{white} & FedRAP & \bestresult{20.87 $\pm$ 5.15} & 5.80 $\pm$ 1.56 & \bestresult{82.33 $\pm$ 0.61} & \bestresult{33.45 $\pm$ 1.22} & 13.89 $\pm$ 2.73 & 3.94 $\pm$ 0.87 \\
    \multirow{-6}{*}{\rotatebox{90}{Scale}}
    & \quad w/ shared scale & 20.55 $\pm$ 5.21 & \bestresult{5.88 $\pm$ 1.67} & 81.27 $\pm$ 0.97 & 33.04 $\pm$ 1.36 & \bestresult{14.53 $\pm$ 1.07} & \bestresult{4.40 $\pm$ 0.33} \\
    \midrule
    \rowcolor{RoyalBlue!6}
    \cellcolor{white} & FedAvg & \bestresult{21.03 $\pm$ 3.56} & \bestresult{5.87 $\pm$ 0.94} & \bestresult{70.13 $\pm$ 3.36} & \bestresult{23.14 $\pm$ 5.13} & 11.75 $\pm$ 3.45 & 3.35 $\pm$ 1.41 \\
    & \quad w/ joint training & 20.87 $\pm$ 3.29 & 5.77 $\pm$ 0.85 & 70.08 $\pm$ 3.42 & 22.73 $\pm$ 5.30 & \bestresult{12.84 $\pm$ 4.26} & \bestresult{3.85 $\pm$ 1.70} \\
    \rowcolor{RoyalBlue!6}
    \cellcolor{white} & PFedRec & 24.59 $\pm$ 2.08 & 7.05 $\pm$ 0.70 & 70.82 $\pm$ 6.57 & 24.93 $\pm$ 6.92 & \bestresult{19.60 $\pm$ 1.35} & \bestresult{6.28 $\pm$ 0.61} \\
    & \quad w/ joint training & \bestresult{25.30 $\pm$ 1.61} & \bestresult{7.12 $\pm$ 0.72} & \bestresult{74.16 $\pm$ 1.74} & \bestresult{28.56 $\pm$ 0.57} & 17.64 $\pm$ 1.86 & 5.64 $\pm$ 0.65 \\
    \rowcolor{RoyalBlue!6}
    \cellcolor{white} & FedRAP & 20.87 $\pm$ 5.15 & \bestresult{5.80 $\pm$ 1.56} & \bestresult{81.86 $\pm$ 1.55} & \bestresult{33.16 $\pm$ 1.39} & 12.84 $\pm$ 1.63 & 3.61 $\pm$ 0.40 \\
    \multirow{-6}{*}{\rotatebox{90}{Gradient}}
    & \quad w/ joint training & \bestresult{21.82 $\pm$ 2.67} & 5.77 $\pm$ 0.66 & 81.54 $\pm$ 1.39 & 32.52 $\pm$ 1.33 & \bestresult{14.67 $\pm$ 3.07} & \bestresult{4.34 $\pm$ 0.96} \\
    \midrule
    \rowcolor{RoyalBlue!6}
    \cellcolor{white} & FedAvg & 21.03 $\pm$ 3.56 & 5.87 $\pm$ 0.94 & 68.76 $\pm$ 3.30 & 21.53 $\pm$ 5.23 & \bestresult{11.75 $\pm$ 3.45} & \bestresult{3.35 $\pm$ 1.41} \\
    & \quad w/ uniform weighting & \bestresult{21.19 $\pm$ 2.35} & \bestresult{5.98 $\pm$ 0.63} & \bestresult{70.56 $\pm$ 2.74} & \bestresult{24.53 $\pm$ 4.73} & 11.25 $\pm$ 2.08 & 3.10 $\pm$ 0.72 \\
    \rowcolor{RoyalBlue!6}
    \cellcolor{white} & PFedRec & 24.51 $\pm$ 2.07 & 7.01 $\pm$ 0.74 & 70.82 $\pm$ 6.57 & 24.93 $\pm$ 6.92 & 18.48 $\pm$ 1.77 & 5.91 $\pm$ 0.60 \\
    & \quad w/ uniform weighting & \bestresult{24.74 $\pm$ 1.18} & 7.01 $\pm$ 0.43 & \bestresult{72.31 $\pm$ 5.90} & \bestresult{27.05 $\pm$ 6.05} & \bestresult{18.66 $\pm$ 2.12} & \bestresult{5.97 $\pm$ 0.74} \\
    \rowcolor{RoyalBlue!6}
    \cellcolor{white} & FedRAP & \bestresult{20.87 $\pm$ 5.15} & 5.80 $\pm$ 1.56 & \bestresult{81.75 $\pm$ 1.28} & \bestresult{33.09 $\pm$ 0.93} & \bestresult{15.00 $\pm$ 3.08} & \bestresult{4.39 $\pm$ 0.99} \\
    \multirow{-6}{*}{\rotatebox{90}{Weighting}}
    & \quad w/ uniform weighting & 20.32 $\pm$ 4.83 & 5.80 $\pm$ 1.59 & 81.38 $\pm$ 1.69 & 32.79 $\pm$ 1.15 & 14.70 $\pm$ 3.04 & 4.28 $\pm$ 0.91 \\
    \bottomrule
  \end{tabular}
\end{table*}

SERA's semantic residual score has three design components: the user-adaptive residual scale in Eq.~\eqref{eq:residual}, the stop-gradient in Eq.~\eqref{eq:score}, and the history weights in Eq.~\eqref{eq:weights}.
To test each one, we swap it for a simpler alternative and keep the rest of SERA fixed.
The shared scale gives every user the same scale $\bar{\sigma}$, the mean of $\sigma_u$ over the $N$ users of the trained model.
Joint training removes the stop-gradient on the collaborative score, so the residual loss also updates the backbone, while $\beta_{ui}$ and $\sigma_u$ keep their stop-gradient.
Uniform weighting is approximated by multiplying $\tau_q$ by $10^6$.

\paragraph{User-Adaptive Residual Scale}
SERA scales each user's semantic residual scores by $\sigma_u$, the standard deviation of that user's collaborative scores over the item set.
Larger variation in collaborative scores permits larger semantic residual magnitudes. By Proposition~\ref{prop:residual-bound} and Theorem~\ref{thm:reach}, one budget $\kappa$ then gives every user a reach band of equal width in units of $\sigma_u$.
Under the shared scale, these widths differ across users.
Replacing $\sigma_u$ by $\bar{\sigma}$ scales user $u$'s semantic residual scores by the budget ratio $\bar{\sigma}/\sigma_u$, so users with $\sigma_u$ below $\bar{\sigma}$ get a wider reach band than under SERA and those above it a narrower one.

Table~\ref{tab:design-comparisons} shows two consistent differences that keep their sign in every seed.
Both appear on Cards and point in opposite directions: compared with SERA, the shared scale raises Recall@50 for FedAvg and lowers it for FedRAP.
The budget sweeps on Cards in Figs.~\ref{fig:bound-ndcg} and~\ref{fig:budget-users} show a similar pattern.
Above the trained budget, SERA on FedAvg keeps gaining accuracy, whereas SERA on FedRAP loses accuracy as its harmed share of users grows fastest.
The shared scale raises the budget ratio of users with small $\sigma_u$, so for them it acts like a larger budget. On Cards, the shared scale helps SERA on backbones that benefit from larger budgets and hurts SERA on the backbone that suffers from them.
SERA on PFedRec follows the same pattern on Cards. It gains accuracy above its trained budget, and the shared scale gives it higher mean accuracy. Every other difference changes sign across seeds.
SERA's user-adaptive residual scale thus gives every user the same reach band in units of $\sigma_u$, and its effect on accuracy depends on which collaborative backbone it corrects.

\paragraph{Stop-Gradient}
SERA's stop-gradient in Eq.~\eqref{eq:score} passes the collaborative score forward and blocks its gradient, and SERA also applies the stop-gradient to the collaborative scores used for the history weights and $\sigma_u$.
Consequently, the residual loss in Eq.~\eqref{eq:objective_sera} trains only the projector, and the backbone keeps learning from $\mathcal{L}_{\mathrm{base}}(a)$ alone.
All of SERA's content correction therefore sits in the semantic residual score $r(u,i)$, which Proposition~\ref{prop:residual-bound}, Corollary~\ref{cor:certificate}, and Theorem~\ref{thm:reach} bound relative to the trained collaborative scores.
Joint training opens a path from the residual loss to the backbone through $s(u,i)$, and this path changes what the backbone learns.
As Sec.~\ref{sec:collaborative-content} notes, the pairwise loss in Eq.~\eqref{eq:clientrisk} becomes less sensitive to score changes as the margin between the observed item and the sampled negative grows.
Under joint training this margin includes $r(u,i)$, so pairs that $r(u,i)$ already ranks correctly send weaker gradients to the backbone and pairs that it ranks wrongly send stronger ones.
The backbone then learns from errors that already include the semantic residual score, so its collaborative scores carry content information, and the bounds of Sec.~\ref{sec:theory} cover only the content effect left in $r(u,i)$.

Table~\ref{tab:design-comparisons} supports this choice.
Its two consistent differences, both for PFedRec on Food, favor SERA's stop-gradient.
Every other difference changes sign across seeds, including joint training's largest mean advantage, for PFedRec on Cards.
The stop-gradient thus keeps SERA's guarantees in Sec.~\ref{subsec:reach} intact and matches joint training within seed variation.

\paragraph{History Weights}
SERA builds its content query from the user's history and weights each history item by its collaborative score in Eq.~\eqref{eq:weights}.
The temperature $\tau_q$ sets how sharply these weights favor high-scoring items, and a larger $\tau_q$ spreads them more evenly.
Uniform weighting is the large-$\tau_q$ limit. In Table~\ref{tab:knobs}, validation picks one of the two largest $\tau_q$ values in five of the nine configurations.
How much the weights matter depends on the history itself.
The content query in Eq.~\eqref{eq:query} is a normalized weighted average of the representations $\mathbf z_i$ of the history items, so the weights can only mix the directions of these items into $\mathbf q_u$.
When the items point in similar directions, any weighting gives nearly the same query, and Eqs.~\eqref{eq:compat} to~\eqref{eq:residual} then give nearly the same semantic residual scores.
The history weights therefore matter mainly for users whose histories mix dissimilar content.

In Table~\ref{tab:design-comparisons}, every difference between SERA's history weights and uniform weighting changes sign across seeds, including FedAvg on Food, where SERA's validated $\tau_q$ is the grid's smallest.
On these datasets, SERA's history weights thus match uniform weights within seed variation.
The content ablations in Fig.~\ref{fig:content-ablation} show that content drives SERA's gains.

\section{Deployment Cost}
\label{app:deploy-cost}
\begin{table*}[!t]
  \centering
  \caption{Space costs of the federated methods on the six datasets.
  (a) Communication per client, direction, and round in KiB.
  (b) On-device storage in MiB, covering shared parameters, one client's personal state, and the feature or content table read by the client.
  (c) Median peak GPU memory during one training round minus the memory allocated at its start, in MiB.
  SERA rows count the 64-dimensional content table and its gradient, with the projector and the item features on the server.
  ``-'' marks AMMFRS runs that exceed the host memory because the simulator keeps the personal models of all clients.}
  \label{tab:cost-space}
  \setlength{\tabcolsep}{2pt}
  \renewcommand{\arraystretch}{0.95}
  \resizebox{\textwidth}{!}{%
  \begin{tabular}{c|l cccccc cccccc cccccc}
    \toprule
    \multicolumn{2}{c}{\textbf{Method}} & \multicolumn{6}{c}{\textbf{(a) Communication}} & \multicolumn{6}{c}{\textbf{(b) On-device storage}} & \multicolumn{6}{c}{\textbf{(c) Peak GPU memory}} \\
    \cmidrule(lr){3-8}\cmidrule(lr){9-14}\cmidrule(lr){15-20}
    \multicolumn{2}{c}{} & \textbf{Beauty} & \textbf{Cards} & \textbf{ML} & \textbf{Dance} & \textbf{Food} & \textbf{Movie} & \textbf{Beauty} & \textbf{Cards} & \textbf{ML} & \textbf{Dance} & \textbf{Food} & \textbf{Movie} & \textbf{Beauty} & \textbf{Cards} & \textbf{ML} & \textbf{Dance} & \textbf{Food} & \textbf{Movie} \\
    \midrule
    & FedMMR & 105 & 386 & 460 & 4{,}378 & 3{,}276 & 6{,}082 & 0.10 & 0.38 & 0.45 & 4.28 & 3.20 & 5.94 & 2.15 & 5.63 & 4.57 & 104.01 & 67.06 & 158.20 \\
    & AMMFRS & 7{,}162 & 8{,}224 & 4{,}118 & - & - & - & 15.38 & 16.57 & 22.30 & - & - & - & 71.06 & 72.09 & 451.41 & - & - & - \\
    & FedAvg & 44 & 258 & 7{,}300 & 838 & 281 & 1{,}264 & 0.04 & 0.26 & 7.13 & 0.82 & 0.28 & 1.24 & 0.12 & 1.25 & 34.59 & 6.64 & 2.01 & 10.29 \\
    & \quad + FedVLR & 574 & 517 & 1{,}397 & 904 & 766 & 1{,}117 & 2.00 & 1.06 & 15.67 & 7.48 & 5.19 & 11.01 & 4.66 & 2.58 & 116.22 & 5.22 & 3.78 & 7.23 \\
    & \quad + GFMFR & 89 & 32 & 912 & 419 & 281 & 632 & 0.09 & 0.03 & 0.89 & 0.41 & 0.28 & 0.62 & 7.12 & 5.36 & 49.44 & 73.78 & 49.93 & 34.72 \\
    \rowcolor{RoyalBlue!6}
    \cellcolor{white} & \quad + SERA & 178 & 64 & 1{,}825 & 838 & 562 & 1{,}264 & 0.17 & 0.06 & 1.78 & 0.82 & 0.55 & 1.23 & 0.54 & 0.19 & 16.53 & 4.08 & 2.64 & 5.75 \\
    & PFedRec & 89 & 258 & 7{,}300 & 3{,}352 & 1{,}125 & 2{,}528 & 0.09 & 0.25 & 7.13 & 3.28 & 1.10 & 2.47 & 0.44 & 1.50 & 35.67 & 30.23 & 9.43 & 23.48 \\
    & \quad + FedVLR & 574 & 517 & 1{,}397 & 904 & 766 & 1{,}117 & 2.00 & 1.06 & 15.67 & 7.48 & 5.19 & 11.01 & 4.66 & 2.58 & 114.04 & 5.26 & 3.80 & 7.29 \\
    & \quad + GFMFR & 89 & 32 & 912 & 419 & 281 & 632 & 0.09 & 0.03 & 0.89 & 0.41 & 0.27 & 0.62 & 7.12 & 0.13 & 321.54 & 23.97 & 99.16 & 222.57 \\
    \rowcolor{RoyalBlue!6}
    \cellcolor{white} & \quad + SERA & 178 & 64 & 1{,}825 & 838 & 562 & 1{,}264 & 0.17 & 0.06 & 1.78 & 0.82 & 0.55 & 1.23 & 0.47 & 0.19 & 14.39 & 4.28 & 2.66 & 5.81 \\
    & FedRAP & 178 & 64 & 3{,}650 & 210 & 281 & 316 & 0.35 & 0.13 & 7.13 & 0.41 & 0.55 & 0.62 & 0.87 & 0.32 & 24.10 & 1.04 & 1.38 & 1.56 \\
    & \quad + FedVLR & 574 & 517 & 1{,}397 & 904 & 766 & 1{,}117 & 2.09 & 1.09 & 16.56 & 7.89 & 5.47 & 11.63 & 4.70 & 2.66 & 117.55 & 4.65 & 3.06 & 4.54 \\
    & \quad + GFMFR & 122 & 65 & 945 & 452 & 314 & 665 & 0.21 & 0.09 & 1.81 & 0.85 & 0.58 & 1.27 & 14.52 & 13.38 & 2.72 & 37.52 & 49.93 & 4.57 \\
    \rowcolor{RoyalBlue!6}
    \cellcolor{white}\multirow{-14}{*}{\rotatebox{90}{Federated}}
    & \quad + SERA & 178 & 64 & 1{,}825 & 838 & 562 & 1{,}264 & 0.26 & 0.09 & 2.67 & 1.23 & 0.82 & 1.85 & 0.64 & 0.25 & 20.23 & 2.06 & 1.38 & 3.11 \\
    \bottomrule
  \end{tabular}%
  }
\end{table*}

\begin{table*}[!t]
  \centering
  \caption{Time costs of the federated methods on the six datasets as medians in milliseconds.
  (a) Local training of one client in one round, with the shared-model and personal updates summed for AMMFRS.
  (b) Server-side work per round beyond aggregation: FedMMR's server-side multimodal model, GFMFR's clustering, fusion training, and group teachers, and SERA's projection of the full item set with one backward pass and one optimizer step.
  (c) Scoring of all items per user, averaged over one validation pass.
    GFMFR's server stage runs only with a nonzero distillation weight, so its entries are 0.00 where validation selects a zero weight.
  ``-'' marks AMMFRS runs that exceed the host memory because the simulator keeps the personal models of all clients.}
  \label{tab:cost-time}
  \setlength{\tabcolsep}{2pt}
  \renewcommand{\arraystretch}{0.95}
  \resizebox{\textwidth}{!}{%
  \begin{tabular}{c|l cccccc cccccc cccccc}
    \toprule
    \multicolumn{2}{c}{\textbf{Method}} & \multicolumn{6}{c}{\textbf{(a) Client computation}} & \multicolumn{6}{c}{\textbf{(b) Server stage}} & \multicolumn{6}{c}{\textbf{(c) Inference per user}} \\
    \cmidrule(lr){3-8}\cmidrule(lr){9-14}\cmidrule(lr){15-20}
    \multicolumn{2}{c}{} & \textbf{Beauty} & \textbf{Cards} & \textbf{ML} & \textbf{Dance} & \textbf{Food} & \textbf{Movie} & \textbf{Beauty} & \textbf{Cards} & \textbf{ML} & \textbf{Dance} & \textbf{Food} & \textbf{Movie} & \textbf{Beauty} & \textbf{Cards} & \textbf{ML} & \textbf{Dance} & \textbf{Food} & \textbf{Movie} \\
    \midrule
    & FedMMR & 3.63 & 3.62 & 3.77 & 3.79 & 3.69 & 3.93 & 1.20 & 1.25 & 4.09 & 2.66 & 2.54 & 3.89 & 0.43 & 0.41 & 0.47 & 0.48 & 0.47 & 0.48 \\
    & AMMFRS & 82.45 & 73.99 & 49.69 & - & - & - & 0.00 & 0.00 & 0.00 & - & - & - & 4.99 & 3.79 & 3.87 & - & - & - \\
    & FedAvg & 5.76 & 5.65 & 11.71 & 6.55 & 6.47 & 6.93 & 0.00 & 0.00 & 0.00 & 0.00 & 0.00 & 0.00 & 0.23 & 0.27 & 0.28 & 0.22 & 0.21 & 0.20 \\
    & \quad + FedVLR & 13.85 & 14.45 & 19.58 & 14.94 & 14.71 & 15.27 & 0.00 & 0.00 & 0.00 & 0.00 & 0.00 & 0.00 & 0.67 & 0.80 & 0.75 & 0.68 & 0.64 & 0.65 \\
    & \quad + GFMFR & 6.86 & 5.07 & 11.41 & 6.26 & 6.11 & 6.40 & 321.65 & 194.01 & 590.80 & 834.69 & 562.80 & 936.52 & 0.23 & 0.22 & 0.23 & 0.19 & 0.19 & 0.19 \\
    \rowcolor{RoyalBlue!6}
    \cellcolor{white} & \quad + SERA & 12.13 & 11.70 & 22.33 & 13.22 & 12.99 & 13.50 & 0.61 & 0.48 & 0.53 & 0.52 & 0.36 & 0.48 & 0.45 & 0.44 & 0.45 & 0.47 & 0.43 & 0.45 \\
    & PFedRec & 5.63 & 5.54 & 11.75 & 8.09 & 6.46 & 7.69 & 0.00 & 0.00 & 0.00 & 0.00 & 0.00 & 0.00 & 0.20 & 0.19 & 0.19 & 0.22 & 0.17 & 0.18 \\
    & \quad + FedVLR & 13.70 & 14.24 & 20.77 & 14.80 & 14.71 & 15.08 & 0.00 & 0.00 & 0.00 & 0.00 & 0.00 & 0.00 & 0.67 & 0.77 & 0.75 & 0.68 & 0.64 & 0.63 \\
    & \quad + GFMFR & 6.56 & 5.02 & 11.44 & 6.21 & 5.90 & 6.44 & 330.20 & 0.00 & 17{,}451.20 & 280.28 & 1{,}117.99 & 2{,}498.41 & 0.21 & 0.19 & 0.18 & 0.18 & 0.17 & 0.17 \\
    \rowcolor{RoyalBlue!6}
    \cellcolor{white} & \quad + SERA & 12.24 & 11.66 & 22.19 & 12.79 & 12.38 & 12.66 & 0.35 & 0.41 & 0.36 & 0.61 & 0.41 & 0.69 & 0.45 & 0.42 & 0.49 & 0.44 & 0.40 & 0.38 \\
    & FedRAP & 6.98 & 6.73 & 13.34 & 7.43 & 6.99 & 7.63 & 0.00 & 0.00 & 0.00 & 0.00 & 0.00 & 0.00 & 0.25 & 0.24 & 0.51 & 0.22 & 0.19 & 0.19 \\
    & \quad + FedVLR & 15.49 & 14.52 & 20.86 & 16.35 & 15.49 & 15.86 & 0.00 & 0.00 & 0.00 & 0.00 & 0.00 & 0.00 & 0.74 & 0.71 & 0.83 & 0.69 & 0.61 & 0.60 \\
    & \quad + GFMFR & 8.17 & 6.92 & 12.94 & 7.99 & 7.82 & 8.20 & 795.66 & 194.21 & 0.00 & 461.31 & 606.74 & 0.00 & 0.25 & 0.27 & 0.37 & 0.23 & 0.23 & 0.22 \\
    \rowcolor{RoyalBlue!6}
    \cellcolor{white}\multirow{-14}{*}{\rotatebox{90}{Federated}}
    & \quad + SERA & 13.00 & 13.72 & 23.26 & 14.65 & 13.56 & 14.23 & 1.38 & 0.63 & 0.47 & 0.33 & 0.32 & 0.33 & 0.43 & 0.49 & 0.64 & 0.49 & 0.41 & 0.44 \\
    \bottomrule
  \end{tabular}%
  }
\end{table*}

Tables~\ref{tab:cost-space} and~\ref{tab:cost-time} report the space and time costs of SERA and the other federated methods in Tables~\ref{tab:main} and~\ref{tab:main-shortvideo}.
In these runs, FedVLR uses its default hyperparameters, and every other method uses those of the main comparison.
Communication and on-device storage are exact byte counts. Time and memory come from short runs on an idle RTX A5500 GPU, with one warm-up round and three timed rounds of 128 sampled clients each, followed by one warm-up pass and two timed inference passes over all of the validation users.

\paragraph{Space}
Algorithm~\ref{alg:fedsera} keeps the projector on the server, which sets SERA's space cost.
Beyond its backbone, a SERA client exchanges one $M\times d$ table in each direction and stores one copy.
Because this table depends only on $M$ and $d$, SERA's added cost stays the same for any history length, projector size $|\phi|$, and backbone, and SERA's three rows in Table~\ref{tab:cost-space} have the same communication, proportional to the number of items.
This content table also explains how SERA compares with the other adapters.
Compared with GFMFR, which distills server-side multimodal content into each client's backbone using group preference targets, SERA also exchanges its content table.
In return, SERA keeps content in $r(u,i)$, which leaves the backbone's training unchanged.
FedVLR uses the same collaborative item table as SERA, so their communication differs by SERA's $M\times d$ content table and FedVLR's fusion network of about $1.24\times10^5$ parameters.
Consequently, SERA exchanges less than FedVLR on the four datasets with fewer than about $1.9\times10^3$ items and more on ML and Movie.
SERA stores $d$-dimensional rows, a sixteenth of the width of the $d_x$-dimensional features that FedVLR stores, so SERA needs far less on-device storage.
The server-side projector also keeps SERA's GPU memory low.
Because the projector's parameters, optimizer states, and $d_x$-dimensional activations stay on the server, SERA's peak GPU memory stays close to that of a backbone of the same width.
SERA's added tensors, namely the content table, its gradient, and the collaborative scores over the item set, all grow with $M$, so its peak GPU memory on Dance, Food, and Movie rises roughly in proportion to the number of items.
At each history position, SERA keeps a $d$-dimensional row, whereas FedVLR keeps $d_x$-dimensional features and fusion outputs. The long histories of ML multiply this gap, so SERA's peak GPU memory on ML is a small fraction of FedVLR's.

\paragraph{Time}
SERA's time cost follows from the size of its operations.
Its client batches hold one user, and its server projects at most a few thousand items per round.
At these sizes, fixed kernel launch overhead sets the duration of each GPU operation.
SERA's added time therefore depends on how many operations it adds and changes little with the embedding width or the number of items.
Table~\ref{tab:cost-time} confirms this: outside ML, SERA's client computation stays nearly constant across item sets differing in size by over an order of magnitude.

Operation counts also explain how SERA compares with the other methods at each stage.
In local training, SERA adds the parameter-free steps of Eqs.~\eqref{eq:weights} to~\eqref{eq:residual} to its backbone. These steps need fewer operations than FedVLR's fusion network, so SERA's client computation is lower than FedVLR's outside ML.
On the server, SERA runs one forward pass, one backward pass, and one optimizer step per round for all clients, so its server stage is on the same order of magnitude as FedMMR's and more than two orders of magnitude below GFMFR's clustering and fusion rounds.
At inference, SERA computes all similarities of a user with one matrix-vector product between the content table and $\mathbf q_u$.
This is more work than GFMFR, whose inference uses the backbone alone, and less than FedVLR, whose inference runs its fusion network, so SERA's time per user lies between theirs everywhere.

\paragraph{Exception on ML}
ML departs from this pattern because two effects raise SERA's client computation there.
First, the data loading and negative sampling that SERA shares with its backbone grow with each user's interactions, which are more than an order of magnitude more numerous on ML.
Second, SERA pads each history with item~0 to the longest one, and in the backward pass of its table lookup, positions that share a row add their gradients one at a time.
Since the longest ML history is an order of magnitude longer than elsewhere, most positions of an ML history are padding and write in turn into the row of item~0.
Profiling confirms this second effect. The accumulation takes about 4.6\,ms per ML client with SERA's server-side projector and about 0.3\,ms with a client-side projector. Reading the table with \texttt{index\_select} instead cuts SERA's client computation on ML by 17\% and changes it on Beauty by about 3\%, so the serialized accumulation matters only when user histories are long.

\paragraph{Equal-Width Comparison}
The backbones in Table~\ref{tab:cost-time} use their own tuned hyperparameters, so the extra time of each SERA row mixes SERA's cost with differences in hyperparameters.
To measure SERA's own cost, we also time SERA against a backbone of the same 64-dimensional width on Beauty, Cards, and Food.
With its server-side projector, SERA needs 1.8 to 2.0 times the client computation of this backbone and 1.1 to 1.9 times its inference time per user, and SERA's peak GPU memory stays within 0.3\,MiB of the backbone's.
Moving SERA's projector to the client raises its client computation and peak GPU memory, which confirms the cost benefit of placing the projector on the server.

\section{Effective Mixing Weight}
\label{app:mixing}
Eqs.~\eqref{eq:shape} and \eqref{eq:residual} scale the semantic residual score through $\kappa$, $\sigma_u$, and $\tau_r$.
Subtracting the candidate mean from every final score and dividing by the standard deviation of the collaborative scores over $\mathcal C_u$ leave the ranking unchanged. After this rescaling, each final score equals the standardized collaborative score plus $w_u$ times the standardized shape term, so $w_u$ measures how much content counts relative to the collaborative score for user $u$.
For the candidate set $\mathcal{C}_u\subseteq\mathcal{I}$ of size $M_u$, define the following three score averages:
\begin{equation}
\begin{aligned}
\bar a_u^{\mathcal C} &= \frac{1}{M_u}\sum_{i\in\mathcal C_u}\anchor(u,i),\\
\bar g_u^{\mathcal C} &= \frac{1}{M_u}\sum_{i\in\mathcal C_u}g_{ui},\\
\bar s_u^{\mathcal C} &= \frac{1}{M_u}\sum_{i\in\mathcal C_u}s(u,i).
\end{aligned}
\end{equation}
The superscript $\mathcal C$ separates these means from $\bar a_u$ over $\mathcal I$.
The operator $\operatorname{sd}_u(\cdot)$ denotes the population standard deviation over $\mathcal{C}_u$, whereas $\sigma_u$ is still computed over the full item set $\mathcal I$.

\begin{proposition}[Mixing Weight]
\label{prop:mixing}
Fix a user $u$ with $\operatorname{sd}_u(a)>0$ and $\operatorname{sd}_u(g)>0$, and write
$\check a_{ui}=\big(\anchor(u,i)-\bar a_u^{\mathcal C}\big)/\operatorname{sd}_u(a)$ and
$\check g_{ui}=\big(g_{ui}-\bar g_u^{\mathcal C}\big)/\operatorname{sd}_u(g)$.
Then, for every item $i\in\mathcal{C}_u$:
\begin{equation}
\frac{s(u,i)-\bar s_u^{\mathcal C}}{\operatorname{sd}_u(a)}
= \check a_{ui} + w_u\,\check g_{ui},
\;
w_u = \kappa\,\chi_u\operatorname{sd}_u(g),
\label{eq:mixing}
\end{equation}
where $\chi_u=\sigma_u/\operatorname{sd}_u(a)$.
The map on the left is strictly increasing in $s(u,i)$ and preserves both item order and ties.
\end{proposition}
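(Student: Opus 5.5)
The plan is a direct algebraic computation from the definition of the final score. At inference $s(u,i)=\anchor(u,i)+r(u,i)$ by Eq.~\eqref{eq:score}, and $r(u,i)=\kappa\sigma_u g_{ui}$ by Eq.~\eqref{eq:residual}. The first step is to average this identity over $\mathcal C_u$. Since $\kappa$ and $\sigma_u$ do not depend on $i$, averaging gives $\bar s_u^{\mathcal C}=\bar a_u^{\mathcal C}+\kappa\sigma_u\bar g_u^{\mathcal C}$. Subtracting this from $s(u,i)$ yields the centered identity
\[
s(u,i)-\bar s_u^{\mathcal C}=\big(\anchor(u,i)-\bar a_u^{\mathcal C}\big)+\kappa\sigma_u\big(g_{ui}-\bar g_u^{\mathcal C}\big).
\]

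Next I divide both sides by $\operatorname{sd}_u(a)>0$. The first term becomes $\check a_{ui}$. In the second term I multiply and divide by $\operatorname{sd}_u(g)>0$, which gives $\kappa\,\big(\sigma_u/\operatorname{sd}_u(a)\big)\operatorname{sd}_u(g)\,\check g_{ui}$. This is $w_u\check g_{ui}$ with $\chi_u=\sigma_u/\operatorname{sd}_u(a)$, so Eq.~\eqref{eq:mixing} follows. Both positivity assumptions are used only here, to make the divisions legitimate.

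For the final claim, the left side of Eq.~\eqref{eq:mixing} is the affine map $x\mapsto (x-\bar s_u^{\mathcal C})/\operatorname{sd}_u(a)$ applied to $x=s(u,i)$. Its slope $1/\operatorname{sd}_u(a)$ is positive, and its offset is the same for every $i\in\mathcal C_u$. The map is therefore strictly increasing, which means $s(u,i)>s(u,j)$ holds exactly when the transformed scores satisfy the same strict inequality, and $s(u,i)=s(u,j)$ holds exactly when the transformed scores are equal. Item order and ties are preserved, so the fixed tie-breaking rule returns the same ranking.

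There is no real obstacle here. The one point needing care is to make sure the constant $\bar s_u^{\mathcal C}$ is common to all candidates. This holds because $\sigma_u$ is computed over $\mathcal I$ and not per item, and because the candidate set is fixed. Once that is checked, the affine map is the same for every item and the argument is complete.
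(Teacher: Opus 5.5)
Your proposal is correct and follows the paper's proof essentially line for line. You average $s(u,i)=\anchor(u,i)+\kappa\sigma_u g_{ui}$ over $\mathcal C_u$ using that $\kappa\sigma_u$ is item-independent, subtract to center, divide by $\operatorname{sd}_u(a)$ and multiply and divide by $\operatorname{sd}_u(g)$, then observe that the affine map with slope $1/\operatorname{sd}_u(a)>0$ preserves order and ties.
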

\begin{proof}
Fix a user $u$ and take the displayed means and $\operatorname{sd}_u$ over $\mathcal{C}_u$. The user-adaptive residual scale $\sigma_u$ is still computed over the full item set.
By Eqs.~\eqref{eq:score} and \eqref{eq:residual}, the final score is:
\begin{equation}
s(u,i)=\anchor(u,i)+\kappa\sigma_u g_{ui}.
\label{eq:app-mix-score}
\end{equation}
With $\kappa$ and $\sigma_u$ constant in $i$, averaging Eq.~\eqref{eq:app-mix-score} over $\mathcal{C}_u$ gives:
\begin{equation}
\bar s_u^{\mathcal C}=\bar a_u^{\mathcal C}+\kappa\sigma_u\bar g_u^{\mathcal C}.
\label{eq:app-mix-mean}
\end{equation}
Subtracting Eq.~\eqref{eq:app-mix-mean} from Eq.~\eqref{eq:app-mix-score} centers both terms:
\begin{equation}
  s(u,i)-\bar s_u^{\mathcal C}
  =\big(\anchor(u,i)-\bar a_u^{\mathcal C}\big)
  +\kappa\sigma_u\big(g_{ui}-\bar g_u^{\mathcal C}\big).
\label{eq:app-mix-centered}
\end{equation}
Since $\operatorname{sd}_u(a)>0$ and $\operatorname{sd}_u(g)>0$, dividing Eq.~\eqref{eq:app-mix-centered} by $\operatorname{sd}_u(a)$ and multiplying the second term by $\operatorname{sd}_u(g)/\operatorname{sd}_u(g)$ yields:
\begin{equation}
\frac{s(u,i)-\bar s_u^{\mathcal C}}{\operatorname{sd}_u(a)}
=\check a_{ui}
+\kappa\cdot\frac{\sigma_u}{\operatorname{sd}_u(a)}\cdot\operatorname{sd}_u(g)\cdot\check g_{ui},
\label{eq:app-mix-final}
\end{equation}
which is Eq.~\eqref{eq:mixing} with $\chi_u=\sigma_u/\operatorname{sd}_u(a)$ and $w_u=\kappa\chi_u\operatorname{sd}_u(g)$.
The map $\xi\mapsto\big(\xi-\bar s_u^{\mathcal C}\big)/\operatorname{sd}_u(a)$ is affine with slope $1/\operatorname{sd}_u(a)>0$ and preserves both order and ties.
\end{proof}

\noindent\emph{Remark on what the weight absorbs.}
Eq.~\eqref{eq:app-mix-final} shows that $\kappa$ and $\sigma_u$ enter the standardized score only through $w_u$.
The temperature $\tau_r$ enters $w_u$ through $\operatorname{sd}_u(g)$ and also changes $\check g_{ui}$, so its effect extends beyond $w_u$.
The bound $\operatorname{sd}_u(g)\le\tanh(1/\tau_r)$ follows from Eq.~\eqref{eq:shape} with $|c_{ui}|\le1$ and Popoviciu's inequality, which bounds the standard deviation of a bounded variable by half the width of the interval that contains its values.

The factor $\chi_u$ compares $\sigma_u$ over $\mathcal I$ with the standard deviation of collaborative scores over $\mathcal C_u$, so it reflects masking and the floor $\varepsilon$.
A change of $\tau_r$ can rescale $\operatorname{sd}_u(g)$ differently across users, so preserving each user's mixing weight can require user-specific changes in $\kappa$.
For fixed similarities, $c\mapsto\tanh(c/\tau_r)$ preserves the order by similarity for every $\tau_r>0$.
As $\tau_r\to\infty$, $\tanh(c_{ui}/\tau_r)=c_{ui}/\tau_r+O(\tau_r^{-3})$.
For nonconstant similarities, standardization cancels the leading factor $1/\tau_r$, so $\check g_{ui}$ approaches the standardized similarity.

\section{Sampling Estimate of the User-Adaptive Residual Scale}
\label{app:sampling}
The user-adaptive residual scale requires all $M$ collaborative scores. Sampling approximates it with the same floor $\varepsilon$.
Write $\bar a_u=\frac1M\sum_{i'\in\mathcal I}\anchor(u,i')$ and define $v_u\ge0$ by $v_u^2=\frac1M\sum_{i'\in\mathcal I}(\anchor(u,i')-\bar a_u)^2$. Thus $\sigma_u=\max\{v_u,\varepsilon\}$.
We study $v_u$ and the floor separately for a fixed trained model.

\subsection{The Estimator and Its Target}
For an integer $m\ge1$, draw indices $i_1,\dots,i_m$ uniformly from $\mathcal{I}$ with replacement, and we write $A_k=\anchor(u,i_k)$.

\begin{lemma}[Sampling Moments]
\label{lem:sampling-iid}
The variables $A_1,\dots,A_m$ are independent and identically distributed with moments:
\begin{equation}
\mathbb{E}[A_k]=\bar a_u,
\;
\operatorname{Var}(A_k)=v_u^{2}.
\label{eq:app-sampling-moments}
\end{equation}
\end{lemma}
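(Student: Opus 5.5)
The model is held fixed, so the map $i\mapsto\anchor(u,i)$ is a deterministic function on $\mathcal{I}$, and each $A_k$ is this function applied to the random index $i_k$. The plan is to read off independence and identical distribution from the sampling scheme, and then compute the two moments directly from the uniform law.

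First, the indices $i_1,\dots,i_m$ are drawn uniformly from $\mathcal{I}$ with replacement. They are therefore i.i.d., each with $\Pr(i_k=i')=1/M$ for every $i'\in\mathcal{I}$. Applying the same fixed measurable map $\anchor(u,\cdot)$ to each index preserves both independence and identical distribution. Hence $A_1,\dots,A_m$ are i.i.d., and each has the law of $\anchor(u,I)$ with $I$ uniform on $\mathcal{I}$.

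Second, the expectation of a function of a uniform index is the item average:
\begin{equation}
\mathbb{E}[A_k]=\sum_{i'\in\mathcal I}\frac1M\,\anchor(u,i')=\bar a_u .
\end{equation}
For the variance, expand $\operatorname{Var}(A_k)=\mathbb{E}[(A_k-\bar a_u)^2]$ in the same way. This gives $\frac1M\sum_{i'\in\mathcal I}(\anchor(u,i')-\bar a_u)^2$, which is $v_u^2$ by the definition of the population variance in the paragraph above the lemma.

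No step is a real obstacle. The one point needing care is that the collaborative scores must be treated as constants. Since the model is trained and fixed, no randomness beyond the index draw enters $A_k$. This also makes the moments finite, because $\mathcal{I}$ is finite. The floor $\varepsilon$ plays no role here, since the lemma concerns $v_u$ rather than $\sigma_u$.
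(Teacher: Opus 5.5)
Your proof is correct and follows the paper's argument. The paper also notes that uniform sampling with replacement makes the $A_k$ independent draws from the empirical distribution of $\anchor(u,\cdot)$ over $\mathcal I$, whose mean and variance are $\bar a_u$ and $v_u^2$ by definition; you just spell out the function-of-an-i.i.d.-index step and the two item averages that the paper leaves implicit.
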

\begin{proof}
Uniform sampling on $\mathcal I$ yields the empirical score distribution over the item set.
This distribution has mean $\bar a_u$ and variance $v_u^2$, giving Eq.~\eqref{eq:app-sampling-moments}.
Sampling with replacement produces independent scores with this common distribution.
\end{proof}

Define the uncorrected sample variance of the draws by:
\begin{equation}
\hat v_u^{2}=\frac1m\sum_{k=1}^{m}\big(A_k-\bar A\big)^{2},
\;
\bar A=\frac1m\sum_{k=1}^{m}A_k.
\label{eq:app-sampling-estimator}
\end{equation}
For its expectation, write $Y_k=A_k-\bar a_u$, $\bar Y=m^{-1}\sum_{k=1}^mY_k$, and $B_m=m^{-1}\sum_{k=1}^mY_k^2$.
Then $\hat v_u^2=B_m-\bar Y^2$.
The $Y_k$ are independent with mean zero and variance $v_u^2$, which gives:
\begin{equation}
\mathbb E[B_m]=v_u^2,
\;
\mathbb E[\bar Y^2]=\frac{v_u^2}{m}.
\end{equation}
Subtracting gives $\mathbb E[\hat v_u^2]=(m-1)v_u^2/m$.
The strong law gives $B_m\to v_u^2$ and $\bar Y\to0$ almost surely, so $\hat v_u^2\to v_u^2$ almost surely.
By continuity, $\hat\sigma_u=\max\{\hat v_u,\varepsilon\}$ converges almost surely to $\sigma_u$. The floor can shift the mean of $\hat\sigma_u^2$ upward.

\subsection{Size of the Error}

\begin{proposition}[Sampling Error]
\label{prop:sampling-error}
Let $\mu_4=\frac1M\sum_{i'\in\mathcal{I}}\big(\anchor(u,i')-\bar a_u\big)^{4}$ and assume $v_u>0$. Then, as the sample size $m$ tends to infinity, $\hat v_u$ then satisfies:
\begin{equation}
\frac{\sqrt{\operatorname{Var}(\hat v_u)}}{v_u}
=\sqrt{\frac{\mu_4-v_u^{4}}{4m\,v_u^{4}}}+o\big(m^{-1/2}\big).
\label{eq:app-sampling-error}
\end{equation}
When $\mu_4\approx3v_u^4$ as for a Gaussian, Eq.~\eqref{eq:app-sampling-error} gives the Gaussian reference $1/\sqrt{2m}$ for this relative error of the estimate.
\end{proposition}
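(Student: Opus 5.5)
The plan is the classical delta method, applied in two stages: first a central limit for $\hat v_u^2$, then a transfer to $\hat v_u=\sqrt{\hat v_u^2}$. The square-root transfer brings in the factor $1/(2v_u)$, and hence the $4$ in the denominator. Throughout I work with the decomposition $\hat v_u^2=B_m-\bar Y^2$ from Lemma~\ref{lem:sampling-iid} and the display after it.

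\textbf{Step 1: variance of $B_m$.} The $Y_k$ are i.i.d. with mean zero, $\mathbb E[Y_k^2]=v_u^2$ and $\mathbb E[Y_k^4]=\mu_4$. Hence $\operatorname{Var}(B_m)=(\mu_4-v_u^4)/m$ exactly.

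\textbf{Step 2: the $\bar Y^2$ term is lower order.} Since $\mathbb E[\bar Y^2]=v_u^2/m$ and $\mathbb E[\bar Y^4]=O(m^{-2})$ (expand the fourth power; only pairings survive), we get $\operatorname{Var}(\bar Y^2)=O(m^{-2})$. Cauchy--Schwarz then gives
\begin{equation*}
\operatorname{Cov}(B_m,\bar Y^2)=O(m^{-1/2}\cdot m^{-1})=O(m^{-3/2}).
\end{equation*}
It follows that $\operatorname{Var}(\hat v_u^2)=(\mu_4-v_u^4)/m+O(m^{-3/2})$.

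\textbf{Step 3: pass to the square root.} This is the main obstacle, because the claim concerns the actual variance of $\hat v_u$, not only its asymptotic distribution. Write
\begin{equation*}
\hat v_u-v_u=\frac{\hat v_u^2-v_u^2}{\hat v_u+v_u}=\frac{\hat v_u^2-v_u^2}{2v_u}-\frac{(\hat v_u^2-v_u^2)(\hat v_u-v_u)}{2v_u(\hat v_u+v_u)}.
\end{equation*}
Since $\hat v_u\ge0$, the denominator in the remainder is at least $2v_u^2>0$, and $|\hat v_u-v_u|\le|\hat v_u^2-v_u^2|/v_u$. The remainder is therefore bounded by $(\hat v_u^2-v_u^2)^2/(2v_u^3)$. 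This bound is deterministic, so no truncation argument is needed.

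\textbf{Step 4: moment bounds.} The scores take finitely many values, so all moments exist. The Marcinkiewicz--Zygmund inequality (or a direct expansion) gives $\mathbb E[(\hat v_u^2-v_u^2)^4]=O(m^{-2})$, so the remainder has second moment $O(m^{-2})$. The linear term has variance $\operatorname{Var}(\hat v_u^2)/(4v_u^2)$. Combining the two with Cauchy--Schwarz on the cross term yields
\begin{equation*}
\operatorname{Var}(\hat v_u)=\frac{\mu_4-v_u^4}{4m\,v_u^2}+O(m^{-3/2}).
\end{equation*}

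\textbf{Step 5: conclude.} Dividing by $v_u^2$ and taking square roots gives Eq.~\eqref{eq:app-sampling-error}, using $\sqrt{x+O(m^{-3/2})}=\sqrt{x}+o(m^{-1/2})$ when $x\asymp m^{-1}$. If $\mu_4=v_u^4$ (two-point distribution), the leading term vanishes and the root of $O(m^{-3/2})$ is still $o(m^{-1/2})$, so the statement holds in that case too. Finally, substituting $\mu_4=3v_u^4$ gives $\sqrt{2/(4m)}=1/\sqrt{2m}$.
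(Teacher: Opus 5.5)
Your proposal is correct and follows essentially the same route as the paper: you use the split $\hat v_u^2=B_m-\bar Y^2$, the exact square-root remainder identity with the deterministic bound $(\hat v_u^2-v_u^2)^2/(2v_u^3)$, a fourth-moment bound $\mathbb E[(\hat v_u^2-v_u^2)^4]=O(m^{-2})$, and Cauchy--Schwarz on the cross term. The differences are minor: the paper computes $\operatorname{Var}(\hat v_u^2)$ exactly at finite $m$ (with an $O(m^{-2})$ correction), whereas you bound the $\bar Y^2$ contribution by Cauchy--Schwarz (with an $O(m^{-3/2})$ correction, which suffices), and you also treat the degenerate case $\mu_4=v_u^4$ explicitly, which the paper leaves implicit.
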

\begin{proof}
We first compute the variance at finite $m$.
Expanding products of the centered variables $Y_k$ removes every term with an index that occurs only once, because independence then supplies a factor $\mathbb E[Y_k]=0$.
The remaining terms give:
\begin{equation}
\begin{aligned}
\mathbb E[B_m^2]
&=\frac{\mu_4+(m-1)v_u^4}{m},\\
\mathbb E[B_m\bar Y^2]
&=\frac{\mu_4+(m-1)v_u^4}{m^2},\\
\mathbb E[\bar Y^4]
&=\frac{\mu_4+3(m-1)v_u^4}{m^3}.
\end{aligned}
\end{equation}
Using $\hat v_u^2=B_m-\bar Y^2$ and subtracting $(\mathbb E[\hat v_u^2])^2$ yields:
\begin{equation}
\begin{aligned}
\operatorname{Var}\big(\hat v_u^2\big)
&=\frac{(m-1)^2}{m^3}\mu_4-\frac{(m-1)(m-3)}{m^3}v_u^4\\
&=\frac{\mu_4-v_u^4}{m}+O(m^{-2}).
\end{aligned}
\label{eq:app-sampling-var2}
\end{equation}

We next expand the square root to first order.
Let $\Delta_m=\hat v_u^2-v_u^2$ and define the remainder $\mathcal R_m$ by the exact identity:
\begin{equation}
\begin{aligned}
\hat v_u-v_u&=\frac{\Delta_m}{2v_u}+\mathcal R_m,\\
\mathcal R_m&=-\frac{\Delta_m^2}{2v_u(\hat v_u+v_u)^2}.
\end{aligned}
\label{eq:app-sampling-root-remainder}
\end{equation}
Since $v_u>0$ and $\hat v_u\ge0$, this gives $|\mathcal R_m|\le\Delta_m^2/(2v_u^3)$ even if $\hat v_u=0$.
To bound its second moment, set $W_k=Y_k^2-v_u^2$.
The finite item set makes $Y_k$ and $W_k$ bounded, and the $W_k$ have mean zero.
The same product expansion gives:
\begin{equation}
\begin{aligned}
\mathbb E[(B_m-v_u^2)^4]
&=\frac{\mathbb E[W_1^4]}{m^3}+\frac{3(m-1)}{m^3}(\mathbb E[W_1^2])^2\\
&=O(m^{-2}).
\end{aligned}
\end{equation}
In the expansion of $\mathbb E[\bar Y^8]$, each surviving index occurs at least twice, so at most four distinct indices remain.
There are $O(m^4)$ such terms before division by $m^8$, giving $\mathbb E[\bar Y^8]=O(m^{-4})$.
Since $\Delta_m=(B_m-v_u^2)-\bar Y^2$, the inequality $|\Delta_m|^4\le8(|B_m-v_u^2|^4+\bar Y^8)$ now gives $\mathbb E[\Delta_m^4]=O(m^{-2})$.
Thus $\mathbb E[\mathcal R_m^2]=O(m^{-2})$.
Eq.~\eqref{eq:app-sampling-var2} gives $\operatorname{Var}(\Delta_m)=O(m^{-1})$, so the Cauchy--Schwarz inequality bounds $|\operatorname{Cov}(\Delta_m,\mathcal R_m)|$ by $O(m^{-3/2})$.
Taking the variance of both sides of Eq.~\eqref{eq:app-sampling-root-remainder} therefore gives:
\begin{equation}
\begin{aligned}
\operatorname{Var}\big(\hat v_u\big)
&=\frac{1}{4v_u^2}\operatorname{Var}\big(\hat v_u^2\big)+o\!\left(\frac1m\right)\\
&=\frac{\mu_4-v_u^{4}}{4m\,v_u^{2}}+o\!\left(\frac1m\right).
\end{aligned}
\label{eq:app-sampling-var1}
\end{equation}
Dividing by $v_u^2$ and taking square roots gives Eq.~\eqref{eq:app-sampling-error}. The Gaussian reference sets $\mu_4=3v_u^4$ in that variance term.
\end{proof}

The reference is not a bound. The unfloored estimate varies more at larger kurtosis $\mu_4/v_u^4$.
The floored estimate satisfies $|\hat\sigma_u-\sigma_u|\le|\hat v_u-v_u|$, so flooring cannot increase the error. For $v_u>\varepsilon$, concentration makes $\hat v_u<\varepsilon$ exponentially unlikely, so the leading variance term holds for $\hat\sigma_u$. For $v_u\le\varepsilon$, the expansion need not describe $\hat\sigma_u$. At $v_u=0$, both $\sigma_u$ and $\hat\sigma_u$ equal $\varepsilon$, so the sampled estimate is exact for such users.

\subsection{Effect on the Results of Secs.~\ref{sec:method} and~\ref{sec:theory}}
The estimate replaces the item-independent $\sigma_u$, so its error rescales every semantic residual score of the user alike.

\begin{proposition}[Sampled Scale]
\label{prop:sampling-budget}
Write $\hat\sigma_u=(1+\zeta_u)\sigma_u$ with $\zeta_u>-1$. Hold the collaborative scores and shape terms fixed, and let $\hat r$ and $\hat R_u$ use $\hat\sigma_u$ in place of $\sigma_u$. Then for each $i$:
\begin{equation}
\hat r(u,i)=\kappa(1+\zeta_u)\,\sigma_u\,g_{ui},
\label{eq:app-sampling-budget}
\end{equation}
so $\hat r(u,i)$ equals the semantic residual score with $\sigma_u$ at budget $\kappa(1+\zeta_u)$, and $\hat R_u=(1+\zeta_u)R_u$.
Under the nonzero-variance conditions of Proposition~\ref{prop:mixing}, the mixing weight computed with $\hat\sigma_u$ also satisfies $\hat w_u=(1+\zeta_u)w_u$.
If neither budget produces a tie, a pair with $\delta_{u,ij}\ne0$ changes ranking under the sampled scale exactly when $\kappa_{u,ij}$ lies strictly between $\kappa$ and $\kappa(1+\zeta_u)$.
\end{proposition}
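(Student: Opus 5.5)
The plan is to substitute $\hat\sigma_u=(1+\zeta_u)\sigma_u$ into Eq.~\eqref{eq:residual} and read off each claim. Since $\hat r(u,i)=\kappa\hat\sigma_u g_{ui}=\kappa(1+\zeta_u)\sigma_u g_{ui}$, Eq.~\eqref{eq:app-sampling-budget} follows at once. Because $\zeta_u>-1$, the quantity $\kappa'=\kappa(1+\zeta_u)$ is a positive budget, so $\hat r$ coincides with the residual of Eq.~\eqref{eq:residual} at budget $\kappa'$ with the original scale $\sigma_u$. Applying the definition $R_u=\kappa\sigma_u\tanh(1/\tau_r)$ with $\hat\sigma_u$ in place of $\sigma_u$ gives $\hat R_u=(1+\zeta_u)R_u$.

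For the mixing weight, I will rerun the computation of Proposition~\ref{prop:mixing} with $\hat\sigma_u$. The collaborative scores and shape terms are held fixed, so $\operatorname{sd}_u(a)$ and $\operatorname{sd}_u(g)$ do not change. Only $\chi_u$ changes, becoming $\hat\chi_u=\hat\sigma_u/\operatorname{sd}_u(a)=(1+\zeta_u)\chi_u$. Hence $\hat w_u=\kappa\hat\chi_u\operatorname{sd}_u(g)=(1+\zeta_u)w_u$. Equivalently, this is $w_u$ evaluated at budget $\kappa'$.

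For the ranking claim, I will use Lemma~\ref{lem:flip}. With $\sigma_u$ and the shape terms fixed, the score difference at any budget $b>0$ is $\sigma_u\delta_{u,ij}(b-\kappa_{u,ij})$. The sampled scores equal the original scores at budget $\kappa'$, so the pair's order under the sampled scale is given by the sign of $\sigma_u\delta_{u,ij}(\kappa'-\kappa_{u,ij})$, and its order under the exact scale by the sign of $\sigma_u\delta_{u,ij}(\kappa-\kappa_{u,ij})$. The no-tie assumption makes both factors nonzero, and the common factor $\sigma_u\delta_{u,ij}$ is nonzero. The ranking therefore changes exactly when $\kappa-\kappa_{u,ij}$ and $\kappa'-\kappa_{u,ij}$ have opposite signs, which means $\kappa_{u,ij}$ lies strictly between $\kappa$ and $\kappa'$. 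This holds whether $\zeta_u$ is positive or negative. If $\zeta_u=0$ the open interval is empty, which agrees with there being no change.

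The main obstacle is interpretive rather than technical. We must treat the pairwise change point $\kappa_{u,ij}$ as defined with the exact $\sigma_u$, and read the no-tie hypothesis as excluding $\kappa_{u,ij}\in\{\kappa,\kappa'\}$. With those conventions fixed, the rest is direct substitution.
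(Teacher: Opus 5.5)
Your proposal is correct and follows essentially the same route as the paper's proof. You substitute $\hat\sigma_u=(1+\zeta_u)\sigma_u$ into Eq.~\eqref{eq:residual}, read off $\hat R_u$ and $\hat w_u$ from their proportionality to $\sigma_u$ (through $\chi_u$ for $w_u$), and compare the signs of $\kappa-\kappa_{u,ij}$ and $\kappa(1+\zeta_u)-\kappa_{u,ij}$ via Eq.~\eqref{eq:flip}; your explicit use of the nonzero factor $\sigma_u\delta_{u,ij}$ and of the no-tie hypothesis makes the ``exactly when'' step slightly more complete than the paper's version.
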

\begin{proof}
Substituting $\hat\sigma_u=(1+\zeta_u)\sigma_u$ into Eq.~\eqref{eq:residual} gives Eq.~\eqref{eq:app-sampling-budget} at budget $\kappa(1+\zeta_u)$.
The bound $R_u$ is proportional to $\sigma_u$.
Under the conditions of Proposition~\ref{prop:mixing}, $w_u=\kappa\chi_u\operatorname{sd}_u(g)$ has the same dependence through $\chi_u=\sigma_u/\operatorname{sd}_u(a)$.
By Eq.~\eqref{eq:flip}, the sign of $\kappa'-\kappa_{u,ij}$ determines the pair's order at budget $\kappa'$. Crossing $\kappa_{u,ij}$ changes the ranking of the two items.
\end{proof}

Without the floor, $\zeta_u=\hat v_u/v_u-1$, so Proposition~\ref{prop:sampling-error} gives the asymptotic standard deviation of $\zeta_u$.
The identity holds for a fixed model, and training with sampled scales can also change the learned shape terms and their change points.

\section{Detailed Proofs}
\label{app:proofs}

Conditions (A1) to (A5) govern the ranking proofs.
This section also states and proves the coverage condition, the pairwise ranking balance, the entry budget, and the projector's convergence that the main text summarizes.
Fix a user $u$ and a candidate set $\mathcal C_u\subseteq\mathcal I$ with $1\le K<M_u$.
The collaborative and final top-$K$ lists rank this set, whereas $\sigma_u$ uses all $M$ items in $\mathcal I$.
The stop-gradient in Eq.~\eqref{eq:score} preserves the value of $\anchor(u,i)$, so the ranking proofs use $s(u,i)=\anchor(u,i)+r(u,i)$.

\paragraph{Ranking Conditions}

\begin{enumerate}[label=(A\arabic*),leftmargin=*]
  \item \textbf{Normalized content vectors.} The normalization $\Pi(\mathbf{v})=\mathbf{v}/\max\{\|\mathbf{v}\|_2,\varepsilon\}$ used in Eqs.~\eqref{eq:basis} and \eqref{eq:query} gives $\lVert\mathbf{z}_i\rVert_2\le1$ and $\lVert\mathbf{q}_u\rVert_2\le1$ for all inputs, including zero.
  \item \textbf{Positive residual bound.} Assume $\kappa>0$ and $\tau_r>0$. The floor gives $\sigma_u\ge\varepsilon>0$, so the residual bound satisfies $R_u=\kappa\sigma_u\tanh(1/\tau_r)>0$ for every user.
  \item \textbf{Deterministic ranking.} The same fixed tie-breaking rule by item index makes each top-$K$ ranking on $\mathcal C_u$ unique.
  \item \textbf{Strict ranking changes.} The score difference changes sign. Condition (A3) breaks change-point ties. In Eq.~\eqref{eq:pq}, $P_i$ and $Q_i$ count only strict ranking changes.
  \item \textbf{Fixed model.} As $\kappa$ varies, the FedRS, projector, histories, and candidate set stay fixed, so $\gamma_{u,ij}$, $\sigma_u$, and $\delta_{u,ij}$ do too. One trained model serves every budget.
\end{enumerate}
When $\kappa=0$, Eq.~\eqref{eq:residual} gives $r(u,i)=0$ for every item. The final ranking equals the collaborative ranking, so Corollary~\ref{cor:certificate} and the reach condition hold directly. The subsequent ranking proofs therefore assume (A2) with a positive budget $\kappa>0$.

\subsection{Proof of Proposition~\ref{prop:residual-bound}}
\begin{proof}
Fix an arbitrary item $i\in\mathcal{I}$. By (A1), we have:
\begin{equation}
\lVert\mathbf{q}_u\rVert_2\le 1
\;\text{and}\;
\lVert\mathbf{z}_i\rVert_2\le 1.
\label{eq:app-vector-bound}
\end{equation}
The Cauchy--Schwarz inequality applied to the two vectors in Eq.~\eqref{eq:app-vector-bound} bounds the content similarity $c_{ui}$ of Eq.~\eqref{eq:compat}:
\begin{equation}
\left|c_{ui}\right|
\le
\lVert\mathbf{q}_u\rVert_2\lVert\mathbf{z}_i\rVert_2
\le 1.
\label{eq:app-compatibility-bound}
\end{equation}
Because $\tau_r>0$ by (A2), dividing Eq.~\eqref{eq:app-compatibility-bound} by $\tau_r$ preserves the inequality and yields $|c_{ui}/\tau_r|\le1/\tau_r$.
The hyperbolic tangent is odd and increasing, so $|\tanh(\xi)|=\tanh(|\xi|)$ for every real number~$\xi$, and Eq.~\eqref{eq:shape} therefore bounds the shape term:
\begin{equation}
\left|g_{ui}\right|
=
\tanh\!\left(\frac{|c_{ui}|}{\tau_r}\right)
\le
\tanh\!\left(\frac{1}{\tau_r}\right).
\label{eq:app-tanh-bound}
\end{equation}
Taking the absolute value of Eq.~\eqref{eq:residual} and using Eq.~\eqref{eq:app-tanh-bound} gives:
\begin{equation}
|r(u,i)|
= \kappa\sigma_u\left|g_{ui}\right|
\le \kappa\sigma_u\tanh\!\left(\frac{1}{\tau_r}\right).
\label{eq:app-residual-intermediate-bound}
\end{equation}
By (A2), $\kappa$ and $\sigma_u$ are positive. Since $1/\tau_r$ is finite and positive, $0<\tanh(1/\tau_r)<1$. Multiplying by $\kappa\sigma_u$ gives:
\begin{equation}
R_u = \kappa\sigma_u\tanh\!\left(\frac{1}{\tau_r}\right)
< \kappa\sigma_u.
\label{eq:app-envelope-bound}
\end{equation}
Eqs.~\eqref{eq:app-residual-intermediate-bound} and \eqref{eq:app-envelope-bound} prove Eq.~\eqref{eq:bound} for every item $i\in\mathcal{I}$.
\end{proof}

\subsection{Proof of Lemma~\ref{lem:flip}}
\begin{proof}
Fix a user $u$ and items $i,j$ with $\delta_{u,ij}=g_{ui}-g_{uj}\ne0$. Under (A5), $\gamma_{u,ij}$, $\sigma_u$, and $\delta_{u,ij}$ stay fixed as $\kappa$ varies.

\emph{Step 1. The score difference is affine in the budget.}
On the forward pass, Eqs.~\eqref{eq:score} and \eqref{eq:residual} give $s(u,i)=\anchor(u,i)+\kappa\sigma_u g_{ui}$. Writing the same identity for $j$ and subtracting gives:
\begin{equation}
s(u,i)-s(u,j)=\gamma_{u,ij}+\kappa\,\sigma_u\,\delta_{u,ij},
\label{eq:app-flip-affine}
\end{equation}
where $\gamma_{u,ij}$ is the collaborative score gap of Eq.~\eqref{eq:pairgap}. The right-hand side is affine in $\kappa$, with nonzero slope $\sigma_u\delta_{u,ij}$ because $\sigma_u>0$ by (A2) and $\delta_{u,ij}\ne0$ by assumption.

\emph{Step 2. Factoring out the slope.}
Dividing the constant term of Eq.~\eqref{eq:app-flip-affine} by the nonzero slope and factoring gives:
\begin{equation}
\begin{aligned}
\gamma_{u,ij}+\kappa\sigma_u\delta_{u,ij}
&=\sigma_u\delta_{u,ij}\left(\kappa+\frac{\gamma_{u,ij}}{\sigma_u\delta_{u,ij}}\right)\\
&=\sigma_u\delta_{u,ij}\big(\kappa-\kappa_{u,ij}\big).
\end{aligned}
\label{eq:app-flip-factored}
\end{equation}
The second line substitutes Eq.~\eqref{eq:flippoint}, proving Eq.~\eqref{eq:flip}.

\emph{Step 3. Order around $\kappa_{u,ij}$.}
By (A2), Eq.~\eqref{eq:app-flip-factored} gives:
\begin{equation}
\operatorname{sign}\big(s(u,i)-s(u,j)\big)
=\operatorname{sign}\big(\delta_{u,ij}\big)\cdot\operatorname{sign}\big(\kappa-\kappa_{u,ij}\big).
\label{eq:app-flip-sign}
\end{equation}
The first factor does not depend on $\kappa$. The second equals $-1$ for $\kappa<\kappa_{u,ij}$, equals $0$ at $\kappa=\kappa_{u,ij}$, and equals $+1$ for $\kappa>\kappa_{u,ij}$. The pair therefore keeps one order for all $\kappa<\kappa_{u,ij}$ and the opposite order for all $\kappa>\kappa_{u,ij}$, and its two final scores are equal only at the change point $\kappa=\kappa_{u,ij}$ of Eq.~\eqref{eq:flippoint}.

\emph{Step 4. Ranking change at a positive budget.}
Admissible budgets satisfy $\kappa\ge0$, so both sides of $\kappa_{u,ij}$ are reachable only when $\kappa_{u,ij}>0$. By Eq.~\eqref{eq:flippoint}, $\kappa_{u,ij}>0$ holds exactly when $\gamma_{u,ij}$ and $\sigma_u\delta_{u,ij}$ carry opposite signs, and since $\sigma_u>0$ this is the condition $\gamma_{u,ij}\delta_{u,ij}<0$. Canceling the sign gives $\kappa_{u,ij}=|\gamma_{u,ij}|/(\sigma_u|\delta_{u,ij}|)$. When instead $\gamma_{u,ij}\delta_{u,ij}>0$, the change point is negative and no budget reaches it, so the pair keeps its collaborative order for every budget $\kappa\ge0$.
\end{proof}

\noindent\emph{Remark on degenerate pairs.}
If $\delta_{u,ij}=0$, then Eq.~\eqref{eq:app-flip-affine} is the constant $\gamma_{u,ij}$ and the pair never changes ranking.
If $\gamma_{u,ij}=0$ while $\delta_{u,ij}\ne0$, the factorization still holds with $\kappa_{u,ij}=0$. The final scores then differ for every $\kappa>0$ and can change the order that tie-breaking gave the pair, which is not a strict ranking change.
Corollary~\ref{cor:certificate} covers $\delta_{u,ij}=0$ under its gap condition, and its premise $|\gamma_{u,ij}|>2R_u>0$ rules out ties.

\subsection{Proof of Corollary~\ref{cor:certificate}}
\begin{proof}
Fix a user $u$ and two items $i$ and $j$ satisfying:
\begin{equation}
|\gamma_{u,ij}|> 2R_u.
\label{eq:app-anchor-gap}
\end{equation}
Define the residual difference of the two items as follows:
\begin{equation}
\Delta_r=r(u,i)-r(u,j).
\label{eq:app-pairwise-differences}
\end{equation}
The difference between their two final scores then satisfies:
\begin{equation}
s(u,i)-s(u,j)=\gamma_{u,ij}+\Delta_r.
\label{eq:app-final-gap}
\end{equation}
By the triangle inequality and Eq.~\eqref{eq:bound}, it then satisfies:
\begin{equation}
\begin{aligned}
|\Delta_r|
&=|r(u,i)-r(u,j)| \\
&\le |r(u,i)|+|r(u,j)| \\
&\le R_u+R_u=2R_u.
\end{aligned}
\label{eq:app-residual-difference-bound}
\end{equation}
With the collaborative score gap in Eq.~\eqref{eq:app-anchor-gap}, this yields:
\begin{equation}
|\Delta_r|\le 2R_u< |\gamma_{u,ij}|.
\label{eq:app-difference-comparison}
\end{equation}
Since $R_u>0$, Eq.~\eqref{eq:app-anchor-gap} gives $\gamma_{u,ij}\ne0$. The sign of $\gamma_{u,ij}$ determines which of the two items initially ranks higher.

If $\gamma_{u,ij}>0$, then $\Delta_r\ge-|\Delta_r|>-\gamma_{u,ij}$ by Eq.~\eqref{eq:app-difference-comparison}. Adding $\gamma_{u,ij}$ to both sides gives $\gamma_{u,ij}+\Delta_r>0$.
If $\gamma_{u,ij}<0$, then $\Delta_r\le|\Delta_r|<|\gamma_{u,ij}|=-\gamma_{u,ij}$. Adding $\gamma_{u,ij}$ to both sides gives $\gamma_{u,ij}+\Delta_r<0$.
In both cases, Eq.~\eqref{eq:app-final-gap} has the sign of $\gamma_{u,ij}$, so $i$ and $j$ keep their collaborative order.
\end{proof}

\noindent\emph{Remark on the residual bound.}
The proof uses the bound $|r(u,i)|\le R_u$ of Eq.~\eqref{eq:bound}, which is attained at $c_{ui}=\pm1$.
Learned representations may never reach $c_{ui}=\pm1$, so realized residual differences can be smaller than $2R_u$.
The corollary holds for every semantic residual score within the bound, and because the bound is attained, $2R_u$ is the smallest threshold that ensures order preservation for any possible content.

\subsection{Proof of Theorem~\ref{thm:reach}}
\begin{proof}
Let $\mathcal T_u\subseteq\mathcal C_u$ denote the deterministic collaborative top-$K$ set. Thus $|\mathcal T_u|=K$, and every item $i\in\mathcal T_u$ satisfies:
\begin{equation}
\anchor(u,i)\ge\anchor^{(K)}_{u}.
\label{eq:app-topk-anchor-bound}
\end{equation}
Let $\tilde{\mathcal T}_u\subseteq\mathcal C_u$ be the deterministic final top-$K$ set. Since $j\in\tilde{\mathcal T}_u\setminus\mathcal T_u$ and both sets have $K$ items, some $i\in\mathcal T_u\setminus\tilde{\mathcal T}_u$ exists.

The final ranking places $j$ inside its top-$K$ set and $i$ outside it, so the final score of $j$ is at least the final score of $i$:
\begin{equation}
s(u,j)\ge s(u,i).
\label{eq:app-promotion-order}
\end{equation}
The inequality allows equality because the deterministic tie-breaking rule in (A3) can decide the order of equal scores.

Applying Eq.~\eqref{eq:bound} to $j$ gives $r(u,j)\le R_u$, hence:
\begin{equation}
s(u,j)=\anchor(u,j)+r(u,j)\le\anchor(u,j)+R_u.
\label{eq:app-target-upper-bound}
\end{equation}
Applying Eq.~\eqref{eq:bound} to the displaced item gives $r(u,i)\ge-R_u$. Combining this inequality with Eq.~\eqref{eq:app-topk-anchor-bound} yields the bound:
\begin{equation}
\begin{aligned}
s(u,i)
&=\anchor(u,i)+r(u,i) \\
&\ge\anchor(u,i)-R_u \\
&\ge\anchor^{(K)}_{u}-R_u.
\end{aligned}
\label{eq:app-displaced-lower-bound}
\end{equation}
Eqs.~\eqref{eq:app-target-upper-bound}, \eqref{eq:app-promotion-order}, and \eqref{eq:app-displaced-lower-bound} form a chain of inequalities:
\begin{equation}
\anchor(u,j)+R_u
\ge s(u,j)
\ge s(u,i)
\ge \anchor^{(K)}_{u}-R_u.
\label{eq:app-promotion-chain}
\end{equation}
The outer terms of the chain in Eq.~\eqref{eq:app-promotion-chain} therefore satisfy:
\begin{equation}
\anchor(u,j)+R_u\ge\anchor^{(K)}_{u}-R_u.
\label{eq:app-anchor-necessary-condition}
\end{equation}
Rearranging Eq.~\eqref{eq:app-anchor-necessary-condition} gives $\gamma^{(K)}_{u,j}=\anchor^{(K)}_{u}-\anchor(u,j)\le2R_u$ by the definition in Eq.~\eqref{eq:gap}, which then proves Theorem~\ref{thm:reach}.

By contraposition, no semantic residual score satisfying Eq.~\eqref{eq:bound} lets $j$ enter the top-$K$ list when $\gamma^{(K)}_{u,j}>2R_u$.
\end{proof}

\noindent\emph{Remark on necessity versus sufficiency.}
The reach condition is necessary for entry.
Entry also requires $j$ to outrank enough competitors by final score, and these scores depend on the content similarities in Eq.~\eqref{eq:compat}.
Proposition~\ref{prop:rank-path} states this requirement exactly through the rank counts $P_j$ and $Q_j$.

\subsection{Proof of Corollary~\ref{cor:rank-range}}
\begin{proof}
Fix a user $u$, an item $j\in\mathcal C_u$, and any content.

\emph{Step 1. The residual bound holds for any content.}
The map $\Pi$ satisfies (A1) for every input, whatever the item features, projector parameters, and history.
The proof of Proposition~\ref{prop:residual-bound} uses only (A1) and (A2), so $|r(u,i)|\le R_u$ for every $i\in\mathcal C_u$.

\emph{Step 2. Distant candidates keep their side of $j$.}
Define the sets of candidates far above and far below $j$ as follows:
\begin{equation}
\begin{aligned}
\mathcal N^{+}_{u,j}&=\{i\in\mathcal C_u:\anchor(u,i)>\anchor(u,j)+2R_u\},\\
\mathcal N^{-}_{u,j}&=\{i\in\mathcal C_u:\anchor(u,i)<\anchor(u,j)-2R_u\},
\end{aligned}
\label{eq:app-range-sets}
\end{equation}
so that $n^{\pm}_{u,j}=|\mathcal N^{\pm}_{u,j}|$.
Every $i\in\mathcal N^{+}_{u,j}$ satisfies $\gamma_{u,ij}>2R_u$, and the proof of Corollary~\ref{cor:certificate} gives $s(u,i)>s(u,j)$.
Every $i\in\mathcal N^{-}_{u,j}$ satisfies $\gamma_{u,ji}>2R_u$, which gives $s(u,i)<s(u,j)$.
Both inequalities are strict, so the tie-breaking rule in (A3) leaves the order of these pairs unchanged at every budget.

\emph{Step 3. Counting positions.}
The final ranking places all of $\mathcal N^{+}_{u,j}$ above $j$ and all of $\mathcal N^{-}_{u,j}$ below it, so at least $n^{+}_{u,j}$ and at most $M_u-1-n^{-}_{u,j}$ items precede $j$, which gives Eq.~\eqref{eq:rank-range}.
\end{proof}

\noindent\emph{Remark on Theorem~\ref{thm:reach}.}
At least $K$ candidates exceed \mbox{$\anchor(u,j)+2R_u$} exactly when the collaborative top-$K$ boundary score $\anchor^{(K)}_u$ does.
Hence $n^{+}_{u,j}<K$ holds exactly when $\anchor^{(K)}_u-\anchor(u,j)\le2R_u$, which for an outside item $j$ with $\tilde\rho_u(j)\le K$ is Eq.~\eqref{eq:reachcond}.
Theorem~\ref{thm:reach} is therefore the top-$K$ case of Eq.~\eqref{eq:rank-range}.

\noindent\emph{Remark on accuracy.}
Each user has one test item, so accuracy changes follow the rank range of that item.
Let $j$ be the test item of user $u$.
Candidates above $\anchor(u,j)+2R_u$ also precede $j$ in the collaborative ranking and candidates below $\anchor(u,j)-2R_u$ follow it, so the collaborative rank $\rho_u(j)$ lies in the range of Eq.~\eqref{eq:rank-range} as well.
Recall@$K$ records whether the rank is at most $K$, so it can change only if the range contains both $K$ and $K+1$.
By the previous remark, the lower end of the range is at most $K$ exactly when $\anchor^{(K)}_u-2R_u\le\anchor(u,j)$.
Symmetrically, at least $M_u-K$ candidates fall below $\anchor(u,j)-2R_u$ exactly when the collaborative score $\anchor^{(K+1)}_u$ at rank $K+1$ does. The upper end is thus at least $K+1$ exactly when $\anchor(u,j)\le\anchor^{(K+1)}_u+2R_u$.
NDCG@$K$ is a function of $\tilde\rho_u(j)$ that vanishes beyond rank $K$, so it can change only if the range holds several ranks and its lower end is at most $K$.
For any content, the semantic residual scores can therefore change the accuracy of at most the share of users whose test items meet these conditions.

\noindent\emph{Remark on untrusted content.}
The range needs only (A1), (A2), and collaborative scores that content cannot change.
A client that receives content representations it cannot verify can reject nonfinite values and apply $\Pi$ to the rest before scoring, which keeps Eq.~\eqref{eq:rank-range}.
For finite representations without this normalization, $\tanh$ still gives $|g_{ui}|<1$, so Eq.~\eqref{eq:rank-range} holds with the larger bound $\kappa\sigma_u$ replacing the residual bound $R_u$.

\subsection{Coverage Condition}
Corollary~\ref{cor:certificate} and the reach band of Theorem~\ref{thm:reach} become uninformative once the residual budget is large enough.

\begin{proposition}[Coverage Condition]
\label{prop:nonvacuous}
For user $u$ and items $i,j\in\mathcal{I}$, the collaborative score gap obeys $|\gamma_{u,ij}|\le\sigma_u\sqrt{2M}$.
If the hyperparameters satisfy the following condition:
\begin{equation}
2\kappa\tanh\!\left(\frac{1}{\tau_r}\right)\;\ge\;\sqrt{2M},
\label{eq:vacuous}
\end{equation}
then Corollary~\ref{cor:certificate} applies to no pair and the reach band includes every item outside the collaborative top-$K$ list.
\end{proposition}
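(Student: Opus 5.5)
The plan is to prove the gap bound first and then read off both consequences from it.

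For the bound, fix $u$ and write $\bar a_u$ for the mean collaborative score over $\mathcal I$. By definition of the population standard deviation, $M v_u^2=\sum_{k\in\mathcal I}(\anchor(u,k)-\bar a_u)^2$. For $i\ne j$, keep only the two terms indexed by $i$ and $j$, which gives
\[
(\anchor(u,i)-\bar a_u)^2+(\anchor(u,j)-\bar a_u)^2\le M v_u^2 .
\]
Write $x=\anchor(u,i)-\bar a_u$ and $y=\anchor(u,j)-\bar a_u$. Then $\gamma_{u,ij}=x-y$, and the elementary inequality $(x-y)^2\le 2(x^2+y^2)$ yields $\gamma_{u,ij}^2\le 2M v_u^2$, so $|\gamma_{u,ij}|\le v_u\sqrt{2M}$. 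The case $i=j$ is trivial. Finally $v_u\le\sigma_u=\max\{v_u,\varepsilon\}$, which gives $|\gamma_{u,ij}|\le\sigma_u\sqrt{2M}$.

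Next I combine this with condition \eqref{eq:vacuous}. Multiplying \eqref{eq:vacuous} by $\sigma_u>0$ and using $R_u=\kappa\sigma_u\tanh(1/\tau_r)$ gives $2R_u\ge\sigma_u\sqrt{2M}\ge|\gamma_{u,ij}|$ for every pair. The hypothesis of Corollary~\ref{cor:certificate} requires the strict inequality $|\gamma_{u,ij}|>2R_u$, so it holds for no pair. For the reach band, take any $j\in\mathcal C_u$ outside the collaborative top-$K$ list. Since $i_{u,K}$ and $j$ are both in $\mathcal C_u\subseteq\mathcal I$, the gap bound gives $\gamma^{(K)}_{u,j}=\gamma_{u,i_{u,K}j}\le\sigma_u\sqrt{2M}\le 2R_u$. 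Thus $j$ satisfies \eqref{eq:reachcond} and its collaborative score lies in the band $[\anchor_u^{(K)}-2R_u,\anchor_u^{(K)}]$; the upper end holds because $j$ ranks below $i_{u,K}$.

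The main subtlety is the floor $\varepsilon$. I will handle it through $v_u\le\sigma_u$, so the bound stays valid even when $v_u<\varepsilon$ or $v_u=0$ (in the latter case every gap is zero). I also need the fact that candidates lie in $\mathcal I$, because $\sigma_u$ is computed over the full item set rather than $\mathcal C_u$. A sharper constant is available: maximizing $(x-y)^2$ subject to $\sum_k(\anchor(u,k)-\bar a_u)=0$ gives $\sqrt{2M}$ up to lower-order terms. I will not pursue it, since the stated $\sqrt{2M}$ suffices.
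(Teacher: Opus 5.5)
Your proof is correct and follows the paper's own argument: you keep only the two squared deviations of $i$ and $j$, apply $(x-y)^2\le 2(x^2+y^2)$ (the paper's parallelogram step), and then multiply the coverage condition by $\sigma_u$ to get $2R_u\ge|\gamma_{u,ij}|$ for every pair, including $\gamma^{(K)}_{u,j}$. Your closing aside that a sharper constant is available is mistaken, though it does not affect the proof: the paper's example with scores $\bar a_u\pm b\sqrt{M/2}$ on two items, $\bar a_u$ on all others, and $b>\varepsilon$ gives exactly $|\gamma_{u,ij}|=\sigma_u\sqrt{2M}$.
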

\begin{proof}
We first bound the collaborative score gap.
Let $\bar a_u=\frac1M\sum_{i'\in\mathcal{I}}\anchor(u,i')$ be the mean over the item set. The scale $\sigma_u$ is at least the population standard deviation by its definition in Sec.~\ref{subsec:residual}.
Fix $i\ne j$ and write $\tilde a_i=\anchor(u,i)-\bar a_u$ and $\tilde a_j=\anchor(u,j)-\bar a_u$.
The sum of squared deviations over $\mathcal I$ contains the distinct terms $\tilde a_i^2$ and $\tilde a_j^2$, so the scale satisfies:
\begin{equation}
M\sigma_u^{2}\ge\sum_{i'\in\mathcal{I}}\big(\anchor(u,i')-\bar a_u\big)^2\ge \tilde a_i^{2}+\tilde a_j^{2}.
\label{eq:app-gapbound-sum}
\end{equation}
Since $(\tilde a_i+\tilde a_j)^2\ge0$, the parallelogram identity gives:
\begin{equation}
(\tilde a_i-\tilde a_j)^{2}\le(\tilde a_i-\tilde a_j)^{2}+(\tilde a_i+\tilde a_j)^{2}=2\big(\tilde a_i^{2}+\tilde a_j^{2}\big)\le2M\sigma_u^{2}.
\label{eq:app-gapbound-parallelogram}
\end{equation}
Taking square roots in Eq.~\eqref{eq:app-gapbound-parallelogram} yields $|\gamma_{u,ij}|=|\tilde a_i-\tilde a_j|\le\sigma_u\sqrt{2M}$. For $i=j$, the zero gap also satisfies this bound.

The constant is tight. For $M\ge2$ and any $b>\varepsilon$, set $\anchor(u,i)=\bar a_u+b\sqrt{M/2}$, $\anchor(u,j)=\bar a_u-b\sqrt{M/2}$, and all remaining scores to $\bar a_u$. The gap is $\sigma_u\sqrt{2M}$ since $\sigma_u=b$.

Now assume Eq.~\eqref{eq:vacuous}. Multiplying it by $\sigma_u>0$ gives:
\begin{equation}
2R_u=2\kappa\sigma_u\tanh\!\left(\frac{1}{\tau_r}\right)\;\ge\;\sigma_u\sqrt{2M}\;\ge\;|\gamma_{u,ij}|
\label{eq:app-vacuous-chain}
\end{equation}
for every pair $i,j$. The premise $|\gamma_{u,ij}|>2R_u$ of Corollary~\ref{cor:certificate} therefore fails for every pair, so both Corollary~\ref{cor:certificate} and the reach condition of Theorem~\ref{thm:reach} become uninformative for user $u$.
\end{proof}

\noindent\emph{Remark on the scope of the coverage condition.}
Eq.~\eqref{eq:app-gapbound-sum} uses the item-set size $M$ and $\sigma_u$, so it bounds every pair in $\mathcal I$, including the pairs in $\mathcal C_u$ that define the reach condition.
Replacing $M$ with the candidate-set size $M_u$ generally gives a false inequality.
The coverage condition is sufficient but not necessary for this. When Eq.~\eqref{eq:vacuous} fails, $2R_u$ lies below the largest possible gap $\sigma_u\sqrt{2M}$, and the collaborative scores decide how many pairs keep their order.
With a sampled scale $\hat\sigma_u$, Proposition~\ref{prop:sampling-budget} replaces $\kappa$ by $\kappa(1+\zeta_u)$ in Eq.~\eqref{eq:vacuous}.

\subsection{Pairwise Ranking Balance}
\begin{proposition}[Pairwise Ranking Balance]
\label{prop:zerosum}
For a fixed user $u$, trained model, and budget $\kappa$, $P_i$ and $Q_i$ have equal totals over the candidate set $\mathcal{C}_u$.
Their differences therefore satisfy:
\begin{equation}
  \sum_{i\in\mathcal{C}_u}\big[P_i-Q_i\big]=0.
  \label{eq:zerosum}
\end{equation}
\end{proposition}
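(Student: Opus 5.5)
The plan is a double-counting argument on the set of strictly reversed ordered pairs. Fix $u$, the trained model, and $\kappa$, and define
\begin{equation*}
\mathcal{F}_u=\big\{(i,j)\in\mathcal{C}_u\times\mathcal{C}_u:\ i\ne j,\ \anchor(u,j)>\anchor(u,i),\ s(u,j)<s(u,i)\big\}.
\end{equation*}
Each pair in $\mathcal{F}_u$ records one strict ranking change in the sense of (A4). The pair $(i,j)$ says that $i$ has passed a competitor $j$ whose collaborative score was higher.

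The first step is to express both totals through $\mathcal{F}_u$. By Eq.~\eqref{eq:pq}, $P_i$ is the number of $j$ with $(i,j)\in\mathcal{F}_u$, so $\sum_{i}P_i=|\mathcal{F}_u|$.

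For $Q$, rename the summation variables. $Q_j$ counts the $i\ne j$ with $\anchor(u,i)<\anchor(u,j)$ and $s(u,i)>s(u,j)$. These conditions are exactly $(i,j)\in\mathcal{F}_u$, so $\sum_j Q_j=|\mathcal{F}_u|$ as well. The two sums count the same finite set, first grouped by its first coordinate and then by its second.

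Subtracting the two expressions gives Eq.~\eqref{eq:zerosum}. Linearity applies because $\mathcal{C}_u$ is finite.

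The only point needing care is that the two conditions defining $P_i$ match the two defining $Q_j$ after swapping roles. Both inequalities are strict, so ties under (A3) are excluded on both sides and contribute nothing. Also, $\anchor(u,j)>\anchor(u,i)$ is the same statement as $\anchor(u,i)<\anchor(u,j)$, and likewise for $s$.

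No result beyond the definitions is needed; in particular, Lemma~\ref{lem:flip} plays no role. I expect no real obstacle. The step most worth stating explicitly is the bijection $(i,j)\mapsto$ "contribution to $P_i$" and "contribution to $Q_j$", since that correspondence is the whole argument.
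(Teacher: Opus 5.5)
Your proposal is correct and is essentially the paper's proof. The paper writes $P_i$ and $Q_i$ as sums of strict-inequality indicators $f(i,j)$ and $f'(i,j)$, notes that $f(i,j)=f'(j,i)$, and sums over the swap bijection on ordered pairs with $i\ne j$; this is the same count as your $|\mathcal{F}_u|$, taken once by first coordinate and once by second.
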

\begin{proof}
Fix a user $u$, a trained model, and a budget $\kappa$, and consider the ordered pairs $(i,j)$ with $i,j\in\mathcal{C}_u$ and $i\ne j$. Define the indicators of ranking changes favoring and opposing $i$ by:
\begin{equation}
\begin{aligned}
f(i,j)&=\mathbf 1\{\anchor(u,j)>\anchor(u,i),\ s(u,j)<s(u,i)\},\\
f'(i,j)&=\mathbf 1\{\anchor(u,j)<\anchor(u,i),\ s(u,j)>s(u,i)\}.
\end{aligned}
\label{eq:app-zerosum-indicators}
\end{equation}
By Eq.~\eqref{eq:pq}, $P_i=\sum_{j\ne i}f(i,j)$ and $Q_i=\sum_{j\ne i}f'(i,j)$. Summing both counts over all target items $i\in\mathcal{C}_u$ gives:
\begin{equation}
\sum_{i\in\mathcal{C}_u}P_i=\sum_{i}\sum_{j\ne i}f(i,j),
\;
\sum_{i\in\mathcal{C}_u}Q_i=\sum_{i}\sum_{j\ne i}f'(i,j).
\label{eq:app-zerosum-sums}
\end{equation}
Swapping $i$ and $j$ in the definition of $f'$ in Eq.~\eqref{eq:app-zerosum-indicators} shows $f(i,j)=f'(j,i)$, because the conditions ``$\anchor(u,j)>\anchor(u,i)$ and $s(u,j)<s(u,i)$'' are ``$\anchor(u,i)<\anchor(u,j)$ and $s(u,i)>s(u,j)$'' with each inequality written from the other side.
The swap $(i,j)\mapsto(j,i)$ is a bijection on the set of ordered pairs with $i\ne j$, so summing this identity over that set matches the two totals in Eq.~\eqref{eq:app-zerosum-sums} term by term and proves Eq.~\eqref{eq:zerosum}.
\end{proof}

\noindent\emph{Remark on ties and the scope of the identity.}
Both indicators in Eq.~\eqref{eq:app-zerosum-indicators} use strict inequalities, so $f$ and $f'$ both exclude a tied pair and the bijection is unaffected.
The identity requires neither tie-breaking nor distinct scores.
With ties, the rank of $i$ can also change through pairs that tie in either score, while $P_i-Q_i$ counts only strict ranking changes.
Corollary~\ref{cor:certificate} uses no relevance labels. It bounds $P_i+Q_i$, and hence $|P_i-Q_i|$, by the number of pairs involving $i$ whose collaborative score gap is at most $2R_u$. Proposition~\ref{prop:zerosum} adds that the net changes $P_i-Q_i$ sum to zero over all targets.
If a target's collaborative score is distinct, Proposition~\ref{prop:rank-path} equates $P_i-Q_i$ with its rank improvement at every budget except its change points.

\subsection{Proof of Proposition~\ref{prop:rank-path}}
\begin{proof}
Fix a user $u$, a trained model under (A5), a target $i\in\mathcal C_u$ whose collaborative score differs from all others, and a budget $\kappa>0$ that avoids every positive change point of $i$.

\emph{Step 1. Every pair with $i$ is strictly ordered.}
If $\delta_{u,ij}=0$, Eq.~\eqref{eq:app-flip-affine} gives $s(u,i)-s(u,j)=\gamma_{u,ij}\ne0$.
If $\delta_{u,ij}\ne0$, Eq.~\eqref{eq:app-flip-factored} gives a nonzero difference because $\kappa\ne\kappa_{u,ij}$.
Both rankings therefore order each pair involving $i$ strictly, and Eq.~\eqref{eq:pq} counts every ranking change of these pairs.

\emph{Step 2. The rank identity.}
The final rank of $i$ is one plus the number of competitors scoring above it.
Of the \mbox{$\rho_u(i)-1$} competitors with higher collaborative scores, $P_i$ fall below $i$ and the rest stay above it. By Eq.~\eqref{eq:pq}, exactly $Q_i$ lower competitors rise above $i$, which gives the identity in Eq.~\eqref{eq:rank-path}.

\emph{Step 3. Change points determine the counts.}
Take a competitor $j$ with $\anchor(u,j)>\anchor(u,i)$, so that $\gamma_{u,ij}<0$.
By Step~4 of the proof of Lemma~\ref{lem:flip} and the remark after it, this pair changes ranking at some $\kappa>0$ only if $\delta_{u,ij}>0$, and then $\kappa_{u,ij}>0$.
Eq.~\eqref{eq:app-flip-sign} shows that $j$ lies below $i$ exactly when $\kappa>\kappa_{u,ij}$.
Therefore, $P_i$ counts the competitors with higher collaborative scores whose change points with $i$ lie in $(0,\kappa)$.
Symmetrically, for $\gamma_{u,ij}>0$ and $\delta_{u,ij}<0$, $Q_i$ counts the competitors with lower collaborative scores whose change points lie in $(0,\kappa)$.

\emph{Step 4. The large-budget limit.}
Let $\kappa$ exceed every positive change point of $i$.
Then $P_i$ counts the higher competitors with $g_{uj}<g_{ui}$, and $Q_i$ counts the lower competitors with $g_{uj}>g_{ui}$.
Since no competitor ties the target $i$ in collaborative score, substituting these counts into Eq.~\eqref{eq:rank-path} gives its final rank:
\begin{equation}
\begin{aligned}
\tilde\rho_u(i)-1
&=\big|\{j:g_{uj}>g_{ui}\}\big|\\
&\quad+\big|\{j:g_{uj}=g_{ui},\ \anchor(u,j)>\anchor(u,i)\}\big|.
\end{aligned}
\label{eq:app-rank-limit}
\end{equation}
Because $\tanh$ is strictly increasing and $\tau_r>0$, $g_{uj}>g_{ui}$ holds exactly when $c_{uj}>c_{ui}$.
Eq.~\eqref{eq:app-rank-limit} is therefore the rank of $i$ by similarity, with ties broken by collaborative order.
\end{proof}

\subsection{Entry Budget}
\begin{corollary}[Entry Budget]
\label{cor:entry-budget}
Under the conditions of Proposition~\ref{prop:rank-path}, an item $j$ with $\rho_u(j)>K$ lies in the final top-$K$ list exactly when $P_j-Q_j\ge\rho_u(j)-K$.
We call this condition the entry rule. If it holds at some budget, let $\kappa^{\mathrm{in}}_{u,j}$ be the infimum of such budgets. This entry budget is the smallest positive change point of $j$ with the rule holding just above it, and the collaborative boundary gap of $j$ bounds it from below:
\begin{equation}
\kappa^{\mathrm{in}}_{u,j}\ge\frac{\gamma^{(K)}_{u,j}}{2\sigma_u\tanh(1/\tau_r)}.
\label{eq:entry-budget}
\end{equation}
\end{corollary}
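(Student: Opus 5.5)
The argument has three parts: the entry rule, the identification of the entry budget, and the lower bound.

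\emph{Entry rule.} Fix a budget $\kappa>0$ that is not a change point of $j$. By Step~1 of the proof of Proposition~\ref{prop:rank-path}, every pair involving $j$ is strictly ordered, so Eq.~\eqref{eq:rank-path} gives $\tilde\rho_u(j)=\rho_u(j)-P_j+Q_j$. Being in the final top-$K$ list means $\tilde\rho_u(j)\le K$, which rearranges to $P_j-Q_j\ge\rho_u(j)-K$.

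\emph{Entry budget.} By Lemma~\ref{lem:flip} and Step~3 of the proof of Proposition~\ref{prop:rank-path}, $P_j$ and $Q_j$ count the change points of $j$ lying in $(0,\kappa)$. They are therefore piecewise constant in $\kappa$, with jumps only at the finitely many positive change points of $j$. On the first interval $(0,\kappa_1)$, where $\kappa_1$ is the smallest positive change point, both counts are zero and the rule fails, since $\rho_u(j)-K>0$. The set of budgets where the rule holds is a union of open intervals between consecutive change points. Its infimum is therefore the left endpoint of the first such interval, which is a positive change point with the rule holding just above it.

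\emph{Lower bound.} Take any budget $\kappa$ slightly above $\kappa^{\mathrm{in}}_{u,j}$ at which $j$ lies in the final top-$K$ list. Theorem~\ref{thm:reach} gives $\gamma^{(K)}_{u,j}\le 2R_u=2\kappa\sigma_u\tanh(1/\tau_r)$. Dividing by $2\sigma_u\tanh(1/\tau_r)>0$ gives $\kappa\ge\gamma^{(K)}_{u,j}/(2\sigma_u\tanh(1/\tau_r))$. Letting $\kappa$ decrease to $\kappa^{\mathrm{in}}_{u,j}$ yields Eq.~\eqref{eq:entry-budget}.

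The main care point is the step from the entry rule to the infimum. The rule is only defined away from change points, where Proposition~\ref{prop:rank-path} applies. At a change point itself, ties are settled by (A3), so membership there may differ. This does not affect the infimum, because the infimum is taken over budgets where the rule holds, and these form open intervals. If instead the infimum is taken over budgets where $j$ is a member under tie-breaking, the argument is unchanged: the possible extra member is the change point itself, which is the same endpoint.
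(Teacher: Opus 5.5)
Your proof is correct and follows the paper's argument step for step. You read the entry rule off Eq.~\eqref{eq:rank-path}, use piecewise constancy of $P_j$ and $Q_j$ between positive change points (both zero below the smallest) to place the infimum at the left end of the first interval where the rule holds, and apply Theorem~\ref{thm:reach} at every admissible budget before passing to the infimum.

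The only inaccuracy is in your closing aside, which the statement does not need. If $j$ ties a higher and a lower competitor at the same change point and wins both tie-breaks, it can sit in the top-$K$ list at that isolated budget while ranking below $K$ on both sides. An example is collaborative scores $1,0.5,0$ with shape terms $0,\tfrac14\tanh(1/\tau_r),\tfrac12\tanh(1/\tau_r)$ and $K=1$. So an infimum over tie-broken memberships need not be the same endpoint, although Eq.~\eqref{eq:entry-budget} still holds for it because Theorem~\ref{thm:reach} allows ties.
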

\begin{proof}
Fix an item $j$ with $\rho_u(j)>K$ as in Proposition~\ref{prop:rank-path}.

\emph{Step 1. The entry rule.}
By Eq.~\eqref{eq:rank-path}, the final rank of $j$ satisfies $\tilde\rho_u(j)\le K$ exactly when $P_j-Q_j\ge\rho_u(j)-K$.

\emph{Step 2. The entry budget is a change point.}
By Proposition~\ref{prop:rank-path}, $P_j$ and $Q_j$ stay constant between consecutive positive change points of $j$ and above the largest one, so the membership of $j$ in the final top-$K$ list is constant there too.
Below the smallest positive change point both counts vanish and $\tilde\rho_u(j)=\rho_u(j)>K$.
The budgets other than change points that place $j$ in the list therefore form a union of these open intervals. Their infimum is the left end of the first one, which is the smallest positive change point of $j$ with the entry rule holding just above it.

\emph{Step 3. The residual bound.}
At every budget $\kappa$ that places $j$ in the list, Theorem~\ref{thm:reach} gives $\gamma^{(K)}_{u,j}\le2\kappa\sigma_u\tanh(1/\tau_r)$.
Dividing by $2\sigma_u\tanh(1/\tau_r)>0$ and taking the infimum over these budgets proves the lower bound stated in Eq.~\eqref{eq:entry-budget}.
\end{proof}

\noindent\emph{Remark on a sharper bound.}
The entry rule requires \mbox{$P_j\ge\rho_u(j)-K$}, so $j$ must pass at least $\rho_u(j)-K$ competitors with higher collaborative scores.
Passing such a competitor $i$ requires $\delta_{u,ji}>0$ by Step~4 of the proof of Lemma~\ref{lem:flip}. Eq.~\eqref{eq:app-tanh-bound} gives $\delta_{u,ji}\le2\tanh(1/\tau_r)$, so the change point satisfies $\kappa_{u,ij}=-\gamma_{u,ij}/(\sigma_u\delta_{u,ij})=\gamma_{u,ij}/(\sigma_u\delta_{u,ji})\ge\gamma_{u,ij}/(2\sigma_u\tanh(1/\tau_r))$.
The entry budget is thus at least the $(\rho_u(j)-K)$th smallest positive change point of $j$ with a higher competitor.
The competitors that $j$ can pass form a subset of the higher competitors, so their $(\rho_u(j)-K)$th smallest gap is at least $\gamma^{(K)}_{u,j}$, its value over all higher competitors. This sharper bound thus implies Eq.~\eqref{eq:entry-budget} and is at least as tight.

\noindent\emph{Remark on leaving the list.}
Change points with lower competitors raise $Q_j$, so $j$ can leave the list after entering it.
Take three candidates with collaborative scores $1$, $0.8$, and $0$, and shape terms $0$, $0.2\tanh(1/\tau_r)$, and $0.5\tanh(1/\tau_r)$.
For $K=1$, the second candidate leads between its change points with the first and the third candidates, and the third leads beyond.

\subsection{Convergence of the Projector Updates}
\label{app:convergence-proof}
We analyze the server's SGD on the projector with fixed collaborative parameters, without momentum or clipping, and condition on the initial projector $\phi^0$.
The client sampling distribution is fixed and independent of the past iterates.
For each nonempty participating set, let $\alpha_{u,t}\ge0$ sum to one over its members and depend only on that set and fixed client data counts.
Set $\alpha_{u,t}=0$ outside $\mathcal{S}_t$.
Then $\bar{\alpha}_u=\mathbb{E}[\alpha_{u,t}]$ is independent of $t$, $\sum_u\bar{\alpha}_u=1$, and the objective $\mathcal{J}_{\bar{\alpha}}$ satisfies:
\begin{equation}
\mathcal{J}_{\bar{\alpha}}(\phi)
=\lambda\sum_u \bar{\alpha}_u\mathcal{L}_u(s)
=\mathbb{E}_{\mathcal{S}_t}
 \left[\lambda\sum_{u\in\mathcal{S}_t}\alpha_{u,t}\mathcal{L}_u(s)\right].
\label{eq:sera-population-objective}
\end{equation}
These weights $\bar{\alpha}_u$ can differ from the weights $\alpha_u$ of Eq.~\eqref{eq:objective_sera}.
For example, suppose each round selects $m_c$ clients uniformly without replacement and weights them by their data counts $|\mathcal H_u|$.
The aggregation weight of client $u$ in each round is:
\begin{equation}
\alpha_{u,t}
=\frac{\mathbf 1\{u\in\mathcal S_t\}|\mathcal H_u|}{\sum_{u'\in\mathcal S_t}|\mathcal H_{u'}|}.
\label{eq:app-sampling-weight}
\end{equation}
With equal data counts, each selected client receives weight $1/m_c$ and is selected with probability $m_c/N$.
Consequently, $\bar\alpha_u=(m_c/N)(1/m_c)=1/N$.
With unequal data counts, the normalization depends on the sampled set.
Selecting one client per round still gives $\bar\alpha_u=1/N$, whereas full participation gives $\bar\alpha_u=|\mathcal H_u|/\sum_{u'}|\mathcal H_{u'}|$.
Participation can thus change the optimized objective even with unbiased local gradients.

\begin{assumption}[Conditional Regularity]
\label{ass:sera-convergence}
The fixed objective $\mathcal{J}_{\bar{\alpha}}$ is bounded below by $\mathcal{J}_{\bar{\alpha},\inf}$ and has an $L$-Lipschitz gradient.
Given the sampled clients, the sampled gradients of $\lambda\mathcal{L}_u(s)$ are unbiased with finite second moments.
\end{assumption}
Smoothness must be assumed, because the derivative of $\Pi$ in Eqs.~\eqref{eq:basis} and~\eqref{eq:query} can jump where its input has norm $\varepsilon$.

Let $\mathbf g_{u,t,e}$ be client $u$'s sampled gradient of $\lambda\mathcal L_u(s)$ with respect to $\phi$ at local step $e$ of round $t$.
All local steps read the table $\mathbf Z^t$, so their gradients are taken at $\phi^t$. By the chain rule, the server update in Algorithm~\ref{alg:fedsera} moves $\phi$ along the direction:
\begin{equation}
\widehat{\mathbf g}_t
=\sum_{u\in\mathcal S_t}\alpha_{u,t}\frac1E\sum_{e=1}^{E}\mathbf g_{u,t,e}.
\label{eq:agg-direction}
\end{equation}
Let $\mathbb E_t$ condition on all randomness before round $t$, and let $V_t=\mathbb E_t\|\widehat{\mathbf g}_t-\mathbb E_t[\widehat{\mathbf g}_t]\|_2^2$ be the variance of this direction.

\begin{theorem}[Stationarity Bound]
\label{thm:sera-convergence}
Under Assumption~\ref{ass:sera-convergence}, let $\bar\eta=\eta_{\mathrm s}E>0$ with $L\bar\eta\le1$ and $\phi^{t+1}=\phi^t-\bar\eta\widehat{\mathbf g}_t$.
Then:
\begin{equation}
\begin{aligned}
&\frac1T\sum_{t=0}^{T-1}
 \mathbb E\|\nabla\mathcal J_{\bar{\alpha}}(\phi^t)\|_2^2\\
\le{}&
\frac{2[\mathcal J_{\bar{\alpha}}(\phi^0)-\mathcal J_{\bar{\alpha},\inf}]}{\bar\eta T}
+\frac{L\bar\eta}{T}\sum_{t=0}^{T-1}\mathbb E V_t.
\end{aligned}
\label{eq:convergence-bound}
\end{equation}
\end{theorem}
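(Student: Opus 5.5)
The plan is the standard descent-lemma argument for smooth nonconvex SGD, applied to the server iteration $\phi^{t+1}=\phi^t-\bar\eta\widehat{\mathbf g}_t$ on the fixed objective $\mathcal J_{\bar\alpha}$.

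\emph{Step 1: unbiasedness of the aggregated direction.} Because every local step reads the table $\mathbf Z^t$ built from $\phi^t$, each $\mathbf g_{u,t,e}$ is evaluated at $\phi^t$. Conditioning first on $\mathcal S_t$ and using the unbiasedness in Assumption~\ref{ass:sera-convergence} gives
\begin{equation*}
\mathbb E_t\big[\widehat{\mathbf g}_t\mid\mathcal S_t\big]=\sum_{u\in\mathcal S_t}\alpha_{u,t}\nabla_\phi\big(\lambda\mathcal L_u(s)\big)(\phi^t).
\end{equation*}
Averaging over $\mathcal S_t$ then uses two facts. The sampling distribution is independent of the past. The weights $\alpha_{u,t}$ depend only on $\mathcal S_t$ and fixed data counts. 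Gradients can therefore be exchanged with the expectation over the finitely many possible sets, which is legitimate by Eq.~\eqref{eq:sera-population-objective}. The result is $\mathbb E_t[\widehat{\mathbf g}_t]=\nabla\mathcal J_{\bar\alpha}(\phi^t)$. Finite second moments make $V_t$ finite.

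\emph{Step 2: one-step descent.} The $L$-Lipschitz gradient gives
\begin{equation*}
\mathcal J_{\bar\alpha}(\phi^{t+1})\le\mathcal J_{\bar\alpha}(\phi^t)-\bar\eta\langle\nabla\mathcal J_{\bar\alpha}(\phi^t),\widehat{\mathbf g}_t\rangle+\tfrac{L\bar\eta^2}{2}\|\widehat{\mathbf g}_t\|_2^2 .
\end{equation*}
Take $\mathbb E_t$ and use the bias--variance split $\mathbb E_t\|\widehat{\mathbf g}_t\|^2=\|\nabla\mathcal J_{\bar\alpha}(\phi^t)\|^2+V_t$. This yields
\begin{equation*}
\mathbb E_t\mathcal J_{\bar\alpha}(\phi^{t+1})\le\mathcal J_{\bar\alpha}(\phi^t)-\bar\eta\big(1-\tfrac{L\bar\eta}{2}\big)\|\nabla\mathcal J_{\bar\alpha}(\phi^t)\|^2+\tfrac{L\bar\eta^2}{2}V_t .
\end{equation*}
Since $L\bar\eta\le1$, the coefficient satisfies $1-L\bar\eta/2\ge1/2$.

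\emph{Step 3: telescoping.} Take full expectations conditional on $\phi^0$ and sum over $t=0,\dots,T-1$. Bound $\mathbb E\mathcal J_{\bar\alpha}(\phi^T)$ below by $\mathcal J_{\bar\alpha,\inf}$. Multiplying by $2/(\bar\eta T)$ then gives Eq.~\eqref{eq:convergence-bound} exactly.

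\emph{Main obstacle.} The main difficulty is Step~1, namely identifying the expected direction with $\nabla\mathcal J_{\bar\alpha}$ rather than with the gradient of Eq.~\eqref{eq:objective_sera}. Two points need care.

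\begin{itemize}
\item The factor $1/E$ in Eq.~\eqref{eq:agg-direction} is absorbed into $\bar\eta=\eta_{\mathrm s}E$. This holds because the accumulated $\mathbf G_u^t$ sums $E$ gradients, all taken at $\phi^t$, so there is no client drift.
\item The chain rule through $\partial\mathbf z_i/\partial\phi$ commutes with the weighted sum over clients.
\end{itemize}

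Measurability of $\phi^t$ with respect to the past, together with the fresh, independent sampling in round $t$, justifies the tower property used in Step~3.
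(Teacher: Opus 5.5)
Your proposal is correct and follows the paper's proof essentially step for step. Both arguments rewrite the server update as $\phi^{t+1}=\phi^t-\bar\eta\,\widehat{\mathbf g}_t$ via the chain rule, show $\mathbb E_t[\widehat{\mathbf g}_t]=\nabla\mathcal J_{\bar\alpha}(\phi^t)$ through $\bar\alpha_u=\mathbb E[\alpha_{u,t}]$, split $\mathbb E_t\|\widehat{\mathbf g}_t\|_2^2=\|\nabla\mathcal J_{\bar\alpha}(\phi^t)\|_2^2+V_t$, apply the descent lemma with coefficient at least $\bar\eta/2$ when $L\bar\eta\le1$, and telescope against $\mathcal J_{\bar\alpha,\inf}$.
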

\begin{proof}
Every local step of client $u$ reads the same table $\mathbf Z^t$ computed at $\phi^t$.
By the chain rule through $h_\phi$, the gradient $\mathbf G_u^t$ that client $u$ uploads in Algorithm~\ref{alg:fedsera} then satisfies:
\begin{equation}
\sum_{i\in\mathcal I}\Big(\frac{\partial\mathbf z_i}{\partial\phi}\Big|_{\phi^t}\Big)^{\!\top}[\mathbf G_u^t]_i
=\sum_{e=1}^{E}\mathbf g_{u,t,e},
\label{eq:app-table-chain}
\end{equation}
where $[\mathbf G_u^t]_i$ is row $i$ of $\mathbf G_u^t$ and $\mathbf g_{u,t,e}$ is the sampled gradient of $\lambda\mathcal L_u(s)$ at $\phi^t$ for the pairs of local step $e$.
Back-propagation is linear, so the server update of Algorithm~\ref{alg:fedsera} takes the form:
\begin{equation}
\phi^{t+1}
=\phi^t-\eta_{\mathrm s}\sum_{u\in\mathcal S_t}\alpha_{u,t}\sum_{e=1}^{E}\mathbf g_{u,t,e}
=\phi^t-\bar\eta\,\widehat{\mathbf g}_t,
\label{eq:app-global-projector-update}
\end{equation}
with $\bar\eta=\eta_{\mathrm s}E$ and the aggregated direction $\widehat{\mathbf g}_t$ from Eq.~\eqref{eq:agg-direction}.
Write $\bar{\mathbf g}_t=\nabla\mathcal{J}_{\bar{\alpha}}(\phi^t)$.
Given the sampled clients $\mathcal S_t$, each $\mathbf g_{u,t,e}$ is unbiased for the gradient of $\lambda\mathcal L_u(s)$ at $\phi^t$ by Assumption~\ref{ass:sera-convergence}, and $\alpha_{u,t}$ depends only on $\mathcal S_t$, so $\widehat{\mathbf g}_t$ is unbiased:
\begin{equation}
\mathbb E_t[\widehat{\mathbf g}_t]
=\lambda\sum_u\bar\alpha_u\nabla_\phi\mathcal L_u(s)\big|_{\phi=\phi^t}
=\bar{\mathbf g}_t.
\label{eq:app-unbiased-direction}
\end{equation}
With $V_t$ as in Theorem~\ref{thm:sera-convergence}, this gives $\mathbb{E}_t[\|\widehat{\mathbf{g}}_t\|_2^2]=\|\bar{\mathbf g}_t\|_2^2+V_t$.

By the $L$-smoothness of $\mathcal{J}_{\bar{\alpha}}$, the descent lemma gives:
\begin{equation}
\mathcal{J}_{\bar{\alpha}}(\phi^{t+1})
\le{}\mathcal{J}_{\bar{\alpha}}(\phi^t)
-\bar{\eta}\langle\bar{\mathbf g}_t,\widehat{\mathbf{g}}_t\rangle
+\frac{L\bar{\eta}^2}{2}\|\widehat{\mathbf{g}}_t\|_2^2.
\label{eq:app-descent-lemma}
\end{equation}
The conditional expectation of Eq.~\eqref{eq:app-descent-lemma} with Eq.~\eqref{eq:app-unbiased-direction} gives:
\begin{equation}
\mathbb{E}_t[\mathcal{J}_{\bar{\alpha}}(\phi^{t+1})]
\le{}\mathcal{J}_{\bar{\alpha}}(\phi^t)
-\Big(\bar{\eta}-\frac{L\bar{\eta}^2}{2}\Big)\|\bar{\mathbf g}_t\|_2^2
+\frac{L\bar{\eta}^2}{2}V_t.
\label{eq:app-expected-descent}
\end{equation}
Under $\bar{\eta}>0$ and $L\bar{\eta}\le1$, the coefficient of $\|\bar{\mathbf g}_t\|_2^2$ is at least $\bar{\eta}/2$, so rearranging and dividing by $\bar{\eta}/2$ gives the bound:
\begin{equation}
\|\bar{\mathbf g}_t\|_2^2
\le{}\frac{2}{\bar{\eta}}
 \left(\mathcal{J}_{\bar{\alpha}}(\phi^t)
 -\mathbb{E}_t[\mathcal{J}_{\bar{\alpha}}(\phi^{t+1})]\right)
 +L\bar{\eta}V_t.
\label{eq:app-gradient-bound}
\end{equation}
Taking total expectation and summing over rounds gives:
\begin{equation}
\begin{aligned}
\sum_{t=0}^{T-1}\mathbb{E}\|\bar{\mathbf g}_t\|_2^2
\le{}&\frac{2}{\bar{\eta}}
 \left(\mathcal{J}_{\bar{\alpha}}(\phi^0)-\mathbb{E}\mathcal{J}_{\bar{\alpha}}(\phi^T)\right)
\\
&+L\bar{\eta}\sum_{t=0}^{T-1}\mathbb{E}V_t.
\end{aligned}
\label{eq:app-telescoped-bound}
\end{equation}
The lower bound $\mathcal{J}_{\bar{\alpha}}\ge\mathcal{J}_{\bar{\alpha},\inf}$ of Assumption~\ref{ass:sera-convergence} gives:
\begin{equation}
\mathcal{J}_{\bar{\alpha}}(\phi^0)-\mathbb{E}\mathcal{J}_{\bar{\alpha}}(\phi^T)
\le\mathcal{J}_{\bar{\alpha}}(\phi^0)-\mathcal{J}_{\bar{\alpha},\inf}.
\label{eq:app-objective-lower-bound}
\end{equation}
Dividing both sides by $T$ proves the bound in Eq.~\eqref{eq:convergence-bound}.
\end{proof}

\paragraph{Convergence Rate}
The bound separates the initial objective gap from the variance of stochastic sampling.
To obtain an explicit rate, choose the constant step size $\bar\eta=T^{-1/2}$ for a run of $T\ge L^2$ rounds, which meets $L\bar\eta\le1$, and assume $\mathbb E V_t\le V$ for a constant $V<\infty$.
Eq.~\eqref{eq:convergence-bound} then gives:
\begin{equation}
\frac1T\sum_{t=0}^{T-1}\mathbb E\|\bar{\mathbf g}_t\|_2^2
\le\frac{2[\mathcal J_{\bar\alpha}(\phi^0)-\mathcal J_{\bar\alpha,\inf}]+LV}{\sqrt T}.
\label{eq:app-convergence-rate}
\end{equation}

\end{document}